\newif\ifcomments
\pgfplotsset{compat=1.8}
\newcommand{\punt}[1]{}
\newtheorem{problem}{Problem}
\newtheorem{claim}{Claim}
\def\@copyrightspace{\relax}
\newcommand{\defn}[1]       {{\textit{\textbf{\boldmath #1}}}\xspace}
\DeclareMathOperator\polylog{polylog}
\newcommand{\stream}{S}
\newcommand{\graph}{\mathcal{G}}
\newcommand{\nodes}{\mathcal{V}}
\newcommand{\edges}{\mathcal{E}}
\newcommand{\nodesize}{V}
\newcommand{\edgesize}{E}
\newcommand{\graphstream}{S}
\newcommand{\streamlength}{N}
\newcommand{\sketch}{\mathcal{S}}
\newcommand{\cachesize}{C}
\newcommand{\cachelinesize}{\mathcal{L}}
\newcommand{\streamelement}{s}
\newcommand{\indexsubset}{b}
\newcommand{\veclength}{n}
\newcommand{\charvec}{f}
\newcommand{\sketchsize}{\phi}
\newcommand{\Boruvka}{Bor\r{u}vka\xspace}
\newcommand{\sysname}{\textsc{Landscape}\xspace}
\newcommand{\graphzep}{\textsc{GraphZeppelin}\xspace}
\newcommand{\sketchname}{\textsc{CameoSketch}\xspace}
\newcommand{\sketchnames}{\textsc{CameoSketches}\xspace}
\newcommand{\cubesketch}{\textsc{CubeSketch}\xspace}
\newcommand{\treename}{pipeline hypertree\xspace}
\newcommand{\Treename}{Pipeline Hypertree\xspace}
\newcommand{\dsuname}{\textsc{GreedyCC}\xspace}
\newcommand{\ie}{i.e.,\xspace}
\newcommand{\boundaryidx}{\beta}
\newcommand{\buck}{b}
\newcommand{\etal}{\text{et al}.\xspace}
\date{}
\newcommand{\namedcomment}[3]{{\sf \color{#2} #1: #3}}
\newcommand{\mab}[1]{\namedcomment{mab}{red}{#1}}
\newcommand{\mfc}[1]{\namedcomment{mfc}{purple}{#1}}
\newcommand{\david}[1]{\namedcomment{david}{red}{#1}}
\newcommand{\ahmed}[1]{\namedcomment{ahmed}{blue}{#1}}
\newcommand{\victor}[1]{\namedcomment{victor}{green}{#1}}
\newcommand{\evan}[1]{\namedcomment{evan}{orange}{#1}}
\newcommand{\gil}[1]{\namedcomment{gil}{green}{#1}}
\renewcommand{\mab}[1]{\todo[size=\tiny,color=green!40]{MAB: #1}}
\renewcommand{\mfc}[1]{\todo[size=\tiny,color=green!40]{MFC: #1}}
\renewcommand{\david}[1]{\todo[size=\tiny]{David: #1}}
\renewcommand{\ahmed}[1]{\todo[size=\tiny,color=yellow]{Ahmed: #1}}
\renewcommand{\victor}[1]{\todo[size=\tiny,color=yellow]{Victor: #1}}
\renewcommand{\evan}[1]{\todo[size=\tiny,color=red!40]{Evan: #1}}
\renewcommand{\gil}[1]{\todo[size=\tiny,color=orange!40]{Gil: #1}}
\newcommand{\fixme}[1]{\todo[size=\tiny]{#1}}
\newcommand{\inline}[1]{\todo[inline,color=yellow,size=\tiny]{#1}}
\renewcommand{\mab}[1]{}
\renewcommand{\mfc}[1]{}
\renewcommand{\david}[1]{}
\renewcommand{\ahmed}[1]{}
\renewcommand{\victor}[1]{}
\renewcommand{\evan}[1]{}
\renewcommand{\gil}[1]{}
\renewcommand{\fixme}[1]{}
\renewcommand{\inline}[1]{}
\newcommand\hcancel[2][black]{\setbox0=\hbox{$#2$}%
\rlap{\raisebox{.45\ht0}{\textcolor{#1}{\rule{\wd0}{1pt}}}}#2}
\renewcommand{\epsilon}{\varepsilon}
\renewcommand{\eqref}[1]          {Eq.~\ref{eq:#1}}
\newcolumntype{P}[1]{>{\centering\arraybackslash}p{#1}}
\definecolor{bg}{rgb}{0.95,0.95,0.95}
\newcommand{\cameoupdatethm}{\sketchname is an $\ell_0$-sampler that, for vector $x \in \mathbb{Z}_2^n$, uses $O(\log^2(n)\log(1/\delta))$ space, has worst-case update time $O(\log(1/\delta))$, and succeeds with probability $1-\delta$.}
\newcommand{\restatecameoupdatethm}{\vspace{\baselineskip} Recall \textsc{Theorem~\ref{thm:cameo_update}} \emph{\cameoupdatethm}}
\newcommand{\cameospacethm}{Using $3$-wise independent hash functions, \sketchname requires $8\log_{3}(1/\delta) (\log n + 5)$ bytes of space to return a nonzero element of a length $n < 2^{64}$ input vector w/p at least $1-\delta$.}
\newcommand{\restatecameospacethm}{\vspace{\baselineskip} Recall \textsc{Theorem~\ref{thm:cameo_cols}} \emph{\cameospacethm}}
\renewcommand{\paragraph}[1]{\smallskip\noindent{\bf \boldmath #1}}
\date{}
\begin{document}
\sloppy
\newcommand\relatedversion{}
\renewcommand\relatedversion{} 


\title{\Large Exploring the \sysname of Distributed Graph Sketching\relatedversion}



\author{David Tench \thanks{Lawrence Berkeley National Laboratory}
\and Evan T. West \thanks{Stony Brook University}
\and Kenny Zhang \thanks{Massachusetts Institute of Technology}
\and Michael A. Bender \footnotemark[2]
\and Daniel DeLayo \footnotemark[2]
\and Mart\'\i{}n Farach-Colton \thanks{New York University}
\and Gilvir Gill \footnotemark[2]
\and Tyler Seip \thanks{MongoDB}
\and Victor Zhang \thanks{Meta Platforms}
}


\maketitle

\fancyfoot[R]{\scriptsize{Copyright \textcopyright\ 2025 by SIAM\\
Unauthorized reproduction of this article is prohibited}}

\begin{abstract}

Recent work has initiated the study of dense graph processing using graph sketching methods, which drastically reduce space costs by lossily compressing information about the input graph. In this paper, we explore the strange and surprising performance landscape of sketching algorithms. We highlight both their surprising advantages for processing dense graphs that were previously prohibitively expensive to study, as well as the current limitations of the technique. Most notably, we show how sketching can avoid bottlenecks that limit conventional graph processing methods. 

Single-machine streaming graph processing systems are typically bottlenecked by CPU performance, and distributed graph processing systems are typically bottlenecked by network latency. We present \sysname, a distributed graph-stream processing system that uses linear sketching to distribute the CPU work of computing graph properties to distributed workers with no need for worker-to-worker communication. As a result, it overcomes the CPU and network bottlenecks that limit other systems. In fact, for the connected components problem, \sysname achieves a stream ingestion rate one-fourth that of maximum sustained RAM bandwidth, and is four times faster than random access RAM bandwidth. Additionally, we prove that for any sequence of graph updates and queries \sysname consumes at most a constant factor more network bandwidth than is required to receive the input stream. We show that this system can ingest up to 332 million stream updates per second on a graph with $2^{17}$ vertices.  We show that it scales well with more distributed compute power: given a cluster of 40 distributed worker machines, it can ingest updates 35 times as fast as with 1 distributed worker machine. Graph sketching algorithms tend to incur high computational costs when answering queries; to address this \sysname uses heuristics to reduce its query latency by up to four orders of magnitude over the prior state of the art.
\end{abstract}



The full version of the paper can be accessed at \url{https://arxiv.org/abs/2410.07518}

Our code and experiments can be found at \url{https://github.com/GraphStreamingProject/Landscape} and \url{https://doi.org/10.5281/zenodo.13845156}

\section{Introduction}

Computing connected components is a fundamental graph-processing task
with uses throughout computer science and engineering. It has applications in relational databases~\cite{sigmodconn1},
scientific computing~\cite{scientific_computing1,scientific_computing2}, pattern recognition~\cite{pattern1,pattern2}, graph partitioning~\cite{partition1,partition2}, random walks~\cite{randomwalks}, social network community detection~\cite{lee2014social}, graph compression~\cite{compression1,compression2}, medical imaging~\cite{tumor}, flow simulation~\cite{bioinformatics1}, genomics~\cite{metagenomics1,metagenomics2}, identifying protein families~\cite{protein1,bioinformatics2}, microbiology~\cite{cell}, and object recognition~\cite{object}.  Strictly harder problems such as edge/vertex connectivity, shortest paths, and $k$-cores often use it as a subroutine. Connected Components is also used as a heuristic for clustering problems~\cite{clustering1,clustering2,clustering3,clustering4,clustering5,clustering6}, pathfinding algorithms (such as Djikstra and $A^*$), and some minimum spanning tree algorithms. A survey by Sahu \etal~\cite{ubiquitous} of database applications of graph algorithms reports that, for both practitioners and academic researchers, connected components was the most frequently performed computation from a list of 13 fundamental graph problems that includes shortest paths, triangle counting, and minimum spanning trees.

Computing the minimum cut of a graph (or equivalently its edge connectivity) is a closely related problem to connected components. It has applications in clusterings on similarity graphs~\cite{similarityclustering1,similarityclustering2}, community detection~\cite{communitydetection}, graph drawing~\cite{kant1993algorithms}, network reliability~\cite{ramanathan1987counting,karger1995randomized}, and VLSI design~\cite{krishnamurthy1984improved}.

The task of computing connected components or minimum cut becomes more difficult when graphs are \defn{dynamic}, meaning the edge set changes over time subject to a stream of edge insertions and deletions, and this task becomes harder still when the graphs are very large. Applications using dynamic graphs include identifying objects from a video feed rather than a static image~\cite{movingobject} or tracking communities in social networks that change as users add or delete friends~\cite{dynamic_social,dynamic_web}. Applications on large graphs include metagenome assembly tasks that may include gene databases with hundreds of millions of entries with complex relations~\cite{metagenomics1}, and large-scale clustering (a common machine learning challenge~\cite{clustering3}).  And of course graphs can be both large and dynamic.  Indeed, Sahu \etal's~\cite{ubiquitous} database applications survey reports that a majority of industry respondents work with large graphs ($> 1$ million vertices or $> 1$ billion edges) and a majority work with dynamic graphs.

\begin{figure}
    \centering
    \includegraphics[width=.5\textwidth]{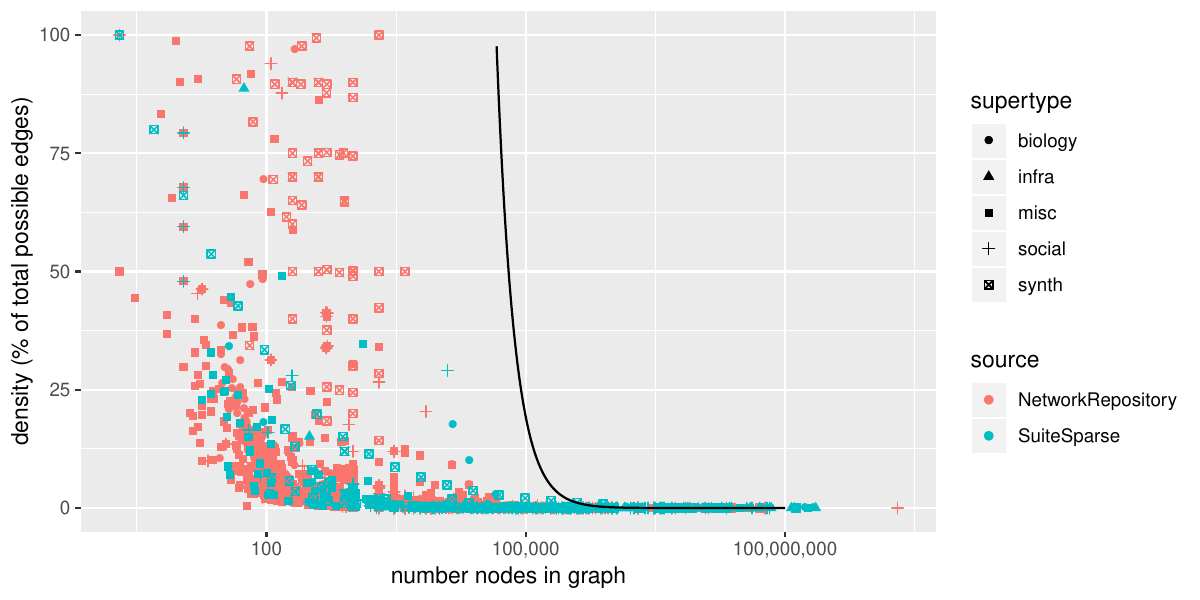}
    \caption{Graphs studied in academic works exhibit a selection effect. Any point to the left of the dark line indicates a dataset which can be represented as an adjacency list in 16GB of RAM.}
    \label{fig:dense}
\end{figure}

\paragraph{Dense-graph processing.} 
The task of computing connected components is especially
difficult for dynamic \defn{dense graphs}. The conventional wisdom is that massive graphs are always sparse, meaning that they have few edges per vertex.
Tench \etal ~\cite{graphzeppelin} contend that instead large, dense graphs do not appear in academic publications due to a selection effect: since we lack the tools to process these graphs, they are not studied. We expand on their survey of graph datasets to further support this claim. Figure~\ref{fig:dense} plots all graph datasets from the NetworkRepository~\cite{netrepo} and SuiteSparse~\cite{suitesparse} collections. Note that nearly all the graphs can be stored as an adjacency lists using less than 16GB, and this pattern holds across repositories and across different types of graph dataset - including a variety of biological data, social networks, and infrastructure(e.g., road and computer networks). See Appendix~\ref{app:dense} for an expanded analysis which shows this selection effect even more strongly.

Despite this effect, there is evidence of dense graphs emerging in practical applications. Tench \etal note that ``Facebook works with graphs with 40 million
nodes and 360 billion edges. These graphs are processed at great
cost on large high-performance clusters, and are consequently not
released for general study''~\cite{graphzeppelin}. As another example, bipartite projection methods~\cite{neal2021comparing}, commonly used in social sciences and bioinformatics, naturally generate large, dense graphs. Current techniques require storing these graphs in RAM, limiting the size of datasets analyzed this way --- an explicit example of the selection effect~\cite{olson2013navigating}. The only way to conclusively determine whether there exist more applications that would benefit from dense-graph processing systems is to build such systems and see what applications emerge.


Tench \etal~\cite{graphzeppelin}  demonstrate that \defn{linear
sketching} techniques~\cite{mcgregor2014graph,Ahn2012,AhnGM12b,GuhaMT15}
can be used to process large, dense, and dynamic graphs. Linear sketching saves the most space when graphs are dense---this is because
the size of a connectivity sketch of a $\nodesize$-vertex graph is $O(\nodesize \polylog\nodesize)$ and therefore
is independent of the number of edges. This algorithm is representative of a large number of graph-sketching algorithms~\cite{mcgregor2014graph,Ahn2012,AhnGM12b,KapralovLMMS13,GuhaMT15,assadi2023tight} in the database theory literature that all share the same general structure.

However, the price for the small space of the sketches is high CPU cost: processing each update requires $O(\log^2\nodesize)$ work, and the constants hidden by the asymptotic notation are large. Concretely, for a million-vertex graph, processing each edge \defn{update} (an insertion or deletion) requires evaluating roughly 500 hash functions (in addition to other costs). Tench \etal's implementation, \graphzep, introduces some techniques for mitigating this high computational cost but ultimately their implementation is bottlenecked by CPU.
\paragraph{This paper.} We design and implement
distributed sketching algorithms for connected components and k-connectivity (or bounded $k$ mincut) on dynamic graph streams. Using our implementation, we explore the surprising performance landscape of graph-sketching algorithms. Thus, we call this graph streaming system \sysname. 

The peaks and valleys of this performance landscape will seem unfamiliar to graph-processing practitioners. The highest-order observation is that these techniques work well when graphs are dense and work poorly when they're sparse. This is in contrast to most techniques which work better when graphs are sparse.  We also show these sketching algorithms can avoid computational bottlenecks that are unavoidable using conventional graph-processing techniques, but these algorithms also struggle in some cases where traditional algorithms excel. 

We summarize these findings in Claim~\ref{claim:all} in Section~\ref{sec:results}, but for context we first briefly survey the bottlenecks in graph-stream processing and how they affect existing graph-processing systems (which are designed for sparse graphs).

\smallskip


\subsection{Three Bottlenecks in Graph-Stream Processing}
Graph-stream processing systems can be bottlenecked on any of three resources: \defn{space}, \defn{CPU}, and \defn{communication}. Even a single-worker system requires some network communication; at minimum it must receive the input stream from a network link.

Some systems, such as Aspen~\cite{aspen} and Terrace~\cite{terrace}, are bottlenecked on \defn{space}. These systems maintain lossless representations of the input graph in RAM on a single machine, and such lossless representations can be large.  These systems are optimized for processing sparse graphs where this memory burden is less onerous but struggle when processing large, dense graphs.

Other systems, such as \graphzep~\cite{graphzeppelin}, are bottlenecked on \defn{computation} (as we explain earlier).  \graphzep maintains a lossily-compressed graph representation, reducing storage requirements (and improving performance on dense graphs) at the cost of higher CPU load for processing updates and  answering
queries.

Distributed graph-stream processing systems are bottlenecked by network \defn{communication}. 
For dynamic systems~\cite{kickstarter}, the input stream is received at a central node called the \defn{main node} and information about the graph is distributed across the cluster to \defn{worker nodes}. For non-dynamic systems the data is distributed among worker nodes before computation begins. Since the graph data is spread across many worker nodes, and graphs often have poor data locality, most computation on graphs require worker nodes to send lots of information to each other~\cite{kineograph,graphtau,kickstarter,iyer2015celliq,iyer2021tegra,distinger}.  This is an example of a general challenge in distributed database systems.  These systems scale by spreading data among the aggregated memory of nodes in a cluster at the cost of high inter-node communication, which can increase query and transaction latencies~\cite{scalestore,dragojevic2014farm,Dragojevic2015NoCD}.

\begin{table}[t]
    \centering
     \caption{Summary of ingestion bottlenecks. Existing graph systems are bottlenecked by space, CPU, or network communication costs when processing dense graph streams. In contrast, in this paper we present a system that overcomes these three bottlenecks.}
    \begin{threeparttable}
    
    \begin{tabular}{|P{24mm}|| P{18mm}|P{9mm}|P{14mm}|}
    \hline
         & \textbf{space} & \textbf{\small CPU} & \textbf{\small Network} \\
         \hline 
      \small single-machine lossless~\cite{aspen, terrace} & $\star$\newline $\Theta(\nodesize^2)$ & $\star$ $\star$ & N/A  \\
      \hline 
      \small single-machine sketching~\cite{graphzeppelin} & {$\star$ $\star$ $\star$} \newline $\Theta(\nodesize \log^3(\nodesize))$ & $\star$ & N/A  \\
      \hline
      \small distributed 
      lossless 
\cite{kineograph,graphtau,kickstarter,iyer2015celliq,iyer2021tegra,distinger} & $\star$ $\star$  \newline $\Theta(\nodesize^2)$ & $\star$ $\star$  & $\star$ \\
      \hline
      \small distributed sketching 
      (this paper) & {$\star$ $\star$ $\star$}   \newline $\Theta(\nodesize \log^3(\nodesize))$  & $\star$ $\star$  & $\star$ $\star$ $\star$  \\
      \hline
    \end{tabular}
    \begin{tablenotes}
    \item[$\star$] = bottleneck; \item[$\star$ $\star$ ] = good; \item[ $\star$ $\star$ $\star$] = optimal
    \end{tablenotes}
    \end{threeparttable}
    \label{tab:report_card}
\end{table}

Table~\ref{tab:report_card} summarizes the bottlenecks for each of these approaches. 


\paragraph{Objective standard for update performance: RAM bandwidth.} 
One issue with evaluating the performance of a distributed system to compute fully-dynamic (i.e., edges can be inserted or deleted) graph connectivity is that there are no other open-source distributed systems that solve this precise problem (see Appendix~\ref{sec:related}). However, for stream ingestion rate we can compare against objective upper bound.

The update performance of any graph-streaming system is limited by the \defn{data acquisition cost}, that is, the cost for the main node to simply read the entire input stream. Even if we ignore the cost to update the connectivity information in the graph data structures, we still need to read the input. Thus, an objective standard describing the ideal performance is simply the RAM bandwidth. 
In fact, there are two notions of RAM bandwidth: \defn{random-access RAM bandwidth} (the speed at which we can write words to random locations) and 
\defn{sequential-access RAM bandwidth} (the speed at which we can write words to sequential locations). 

Since graphs have notoriously poor data locality, an update rate close to random-access RAM bandwidth is a natural goal for a graph stream-processing system.
Sequential-access RAM bandwidth is truly a bound on the best possible update rate because any stream-processing system must write the input data into memory, that is, it must pay the data acquisition cost.

\section{Results}
\label{sec:results}
In this paper, we build the \sysname graph-processing system, which is optimized for dense, dynamic graphs. We use this system to establish how graph sketching can overcome classical graph processing bottlenecks.

\subsection{The \sysname Graph-Processing System}

We build \sysname, a linear-sketch-based distributed graph-stream processing system that computes connected components and k-connectivity of dynamic graphs. 

\sysname keeps its sketch (a lossily compressed graph representation) on the main node and distributes the CPU work of processing updates. 

Because these linear sketches are small (size $\Theta(\nodesize\log^3\nodesize)$), they fit on a single node, even when the graph is dense.
The work to maintain these sketches can be chunked off into large batches that can be computed independently by worker nodes. 
As a result, \sysname avoids the CPU bottleneck because it has lots of worker nodes to help, and \sysname avoids the communication bottleneck because the communication cost is amortized away by the CPU cost to process a batch. 

In fact, we prove theoretically that for any sequence of graph updates and queries 
\sysname's total communication cost is only a small, user-configurable constant (by default, four) factor larger than the cost for the main node to receive the input stream. 
Additionally, we introduce a new sketching algorithm, \sketchname, which requires only $O(\log\nodesize)$ distributed work per update (compared to $O(\log^2\nodesize)$ for the prior state of the art), which allows the algorithm to scale more rapidly with limited cluster resources.

\paragraph{Performance.}
\sysname achieves the following:
\begin{itemize}[noitemsep,topsep=0pt,leftmargin=*]
    \item \sysname is able to process graph streams only $4.5\times$ slower than the multi-threaded RAM sequential-write bandwidth, the objective upper bound on insertion performance for any system that receives the input stream at the main node. This is more than four times faster than random access RAM bandwidth.
    \item We show that \sysname can ingest up to 332 million stream updates per second on a graph with $2^{17}$ vertices.
    \item We show that it scales well with more distributed compute power: given a cluster of 40 distributed worker machines, it can ingest updates 35 times as fast as with 1 distributed worker machine.
    \item We experimentally verify that \sysname uses at most $4\times$ the network bandwidth required to read the input stream.
    \item \sysname's  \dsuname query heuristic reuses partial information from prior query results, achieving up to a four orders-of-magnitude reduction in query latency.
\end{itemize}

\paragraph{Outperforming lossless representations on dense graphs.}
To put this performance in context, consider a simpler task: maintaining an adjacency matrix of the graph defined by the input stream.\footnote{Note that while adjacency matrices may be compact, they have high query latency. Even disregarding this limitation, we see that adjacency matrices are outperformed by sketching on dense graphs. }
If the graph is dense and edges are random, an adjacency matrix is essentially the space-optimal lossless graph representation. 
We ignore the cost of answering queries, which an adjacency matrix does not efficiently support.

\sysname's graph-sketch representation is smaller than this adjacency matrix even when the input graph has only 310,000 vertices.
Even more interestingly, \sysname's update throughput is also faster than the update throughput of the adjacency-matrix representation---which is just a single bit flip per edge.  
We emphasize: one of the most dramatic advantages of distributed graph sketching is that updates are faster than adjacency-matrix updates even when the entire adjacency matrix fits in RAM. 

\sysname's updates are fast not because they are small (you cannot beat a single bit flip), but because the \sketchnames have good data locality---and the edge updates result in primarily sequential accesses to RAM on the main node, rather than random access.
Said differently, \sysname processes more edge updates per second than it is possible to flip bits in random locations in RAM.

\subsection{Circumventing the Classical Bottlenecks for Graph-Stream Processing}

The performance implications of sketching for dense graph processing are encapsulated in the following claim. Throughout this paper, as we present theoretical and experimental results, we will refer to the element of the claim they support. 

\newpage
\begin{claim}[\textbf{Dense graph processing}] 
\label{claim:all}
~
\evan{Moved to next column}
\begin{enumerate}[leftmargin=*]
\item \label{itm:space} \textbf{Space consumption.} Graph sketches for dynamic, massive,  dense graphs can be maintained so that they use less space than traditional graph-storage methods.
E.g., \sysname is asymptotically space-optimal and its new \sketchname algorithm uses 29\% of the space of the prior state-of-the-art~\cite{graphzeppelin}. 

\item \label{itm:cpu} \textbf{CPU cost.} Graph sketches have high CPU cost to update, but this cost can be distributed away. 
E.g., \sysname's \sketchname algorithm reduces the asymptotic work per update from $O(\log^2 \nodesize)$ (the prior SOTA) to $O(\log\nodesize)$. We show that this yields a $7\times$ increase to update throughput in experiments. 

\item \label{itm:comm} \textbf{Communication costs.} The above distribution of CPU work can be done with nearly optimal communication: specifically, the total communication cost is only a small constant number of times larger than the data acquisition cost. 
We prove this theoretically and validate it experimentally.

\item \label{itm:ram}\textbf{Stream-ingestion can be blindingly fast--nearly the universal speed limit.} 
%
%
There is a universal speed limit for stream ingestion, which is simply the cost to write the stream sequentially into RAM. A graph-sketch based system for connectivity and k-connectivity can match this bound within a remarkably small constant factor. \sysname ingests graph data at a rate that is within a factor $4$ of sequential RAM bandwidth. 
\end{enumerate}
\end{claim}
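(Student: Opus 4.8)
The plan is to establish the four parts of \claimref{all} with a mix of direct algorithmic analysis (parts~\ref{itm:space} and~\ref{itm:cpu}), an amortized accounting argument (part~\ref{itm:comm}), and a short lower-bound observation backed by measurements (part~\ref{itm:ram}). For the space bound in part~\ref{itm:space}, I would analyze a single \sketchname from scratch: like the $\ell_0$ sketch of \graphzep it is built from $O(\log \nodesize)$ ``levels'', and amplifying the per-vertex failure probability down to $\delta$ takes $O(\log(1/\delta))$ independent copies; the saving is that each level stores only a constant number of fingerprint words instead of a full $\Theta(\log \nodesize)$-word residue vector. Tracking the constants with $3$-wise independent hashing yields the $8\log_3(1/\delta)(\log \nodesize + 5)$-byte figure quoted above for \sketchname; dividing by the analogous expression for \graphzep gives $29\%$, and summing one sketch per vertex gives $\Theta(\nodesize\,\polylog \nodesize)$ total, which is asymptotically optimal against the $\Omega(\nodesize)$ lower bound for storing a connectivity certificate.

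For part~\ref{itm:cpu}, the crux is that \sketchname processes an edge update by touching exactly one bucket per level rather than $\Theta(\log \nodesize)$ buckets per level: a single geometric-sampling hash replaces the independent Bernoulli trials used by the prior construction, so an update evaluates $O(\log \nodesize)$ hashes and writes $O(\log \nodesize)$ words, versus $O(\log^2 \nodesize)$ before. I would then check that the weaker $3$-wise independence still suffices for the recovery guarantee---this is the delicate part of the correctness argument and is where the success-probability bookkeeping lives---and that the $O(\log \nodesize)$ bound is worst-case per update, not amortized; the $7\times$ throughput gain is an experimental corollary on top of this asymptotic improvement.

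For part~\ref{itm:comm} I would argue by amortization against the \emph{data-acquisition cost}---the bytes the main node must read just to receive the stream. The main node buffers updates and ships them to workers in batches of size $\Theta(b)$; receiving such a batch costs $\Theta(b)$ bytes, and the resulting main-to-worker and worker-to-main traffic is $O(b)$ bytes, so on full batches the ratio is a fixed constant. The subtlety, and the step I expect to be the main obstacle, is queries: a query can force the main node to flush partial (under-full) batches early, so a batch of far fewer than $b$ updates may be shipped, and in an adversarial interleaving this could in principle happen often. I would bound the number of forced partial flushes---\eg by charging each one to the query that triggered it (or to a worker's delta sketch)---and show that, after this charging, the total communication is still at most the user-configurable constant (by default $4$) times the data-acquisition cost, uniformly over \emph{every} sequence of updates and queries. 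Making the accounting survive adversarial query placement, rather than merely the average case, is the part that requires care.

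Finally, part~\ref{itm:ram} has an easy lower-bound half: any system that ingests the stream at the main node must at minimum write each update into RAM, so sequential-write RAM bandwidth is a hard upper bound on ingestion rate. Combining this with parts~\ref{itm:cpu} and~\ref{itm:comm}---CPU work is offloaded to workers, and communication is provably not the bottleneck---the residual main-node work per update is essentially a sequential scan appending the update to the \sketchname buffer, costing an amortized $O\!\left((\log_{\cachesize/\cachelinesize}\nodesize)/\cachelinesize\right)$ cache misses. That this residual cost is only a factor $\approx 4$ above the raw sequential bandwidth is then verified empirically rather than derived, since the precise constant depends on implementation details of batching, networking, and the memory hierarchy.
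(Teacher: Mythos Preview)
Your plan for part~\ref{itm:comm} has a genuine gap. You propose to charge each forced partial flush to the query that triggered it, but this cannot yield a constant-factor bound: a single query can force $\nodesize$ leaf buffers to flush simultaneously, and if each contains only $O(1)$ updates, the sketch deltas returned total $\Theta(\nodesize\log^3\nodesize)$ bytes against a data-acquisition cost of only $\Theta(\nodesize)$ bytes. The paper itself points out this adversarial case (a perfect matching followed by a query). Charging to queries does not help, since queries contribute nothing to the data-acquisition cost you are amortizing against. The paper's fix is not an accounting trick but an algorithmic one: when a query forces a flush, any leaf below a $\gamma$-fraction full is processed \emph{locally on the main node} and never sent over the network. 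Only leaves at least $\gamma$-full are shipped, so every batch actually transmitted carries at least $\gamma\alpha\sketchsize$ updates and the returned delta has size $\sketchsize$; this is what makes the ratio $(3+1/(\gamma\alpha))$ hold uniformly over all update/query interleavings. The price is paid in main-node CPU (Theorem~\ref{thm:comp}), not communication.

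Your part~\ref{itm:space} mischaracterizes where the space saving comes from. \sketchname and \cubesketch have the \emph{same} asymptotic size $O(\log^2(n)\log(1/\delta))$ and the same bucket layout; the $29\%$ figure is purely a constant-factor improvement in the number of columns needed, obtained by a sharper success-probability analysis (the $k$-boundary/$k$-isolated-column argument in the appendix) rather than by storing fewer words per level. Relatedly, for part~\ref{itm:cpu} the paper does not re-verify correctness under $3$-wise independence from scratch as you propose; it uses a short coupling argument---each \sketchname bucket holds a subset of the corresponding \cubesketch bucket, so the deepest good \cubesketch bucket is also good in \sketchname---to inherit correctness directly from the prior analysis. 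Your plan would work but is more labor than necessary. Part~\ref{itm:ram} is fine and matches the paper.
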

Table~\ref{tab:report_card} summarizes the strengths of the sketching approach: \sysname ingestion is not bottlenecked on space, CPU, or communication.

The following observations are nearly immediate and unsurprising, and we include them for completeness. 

\textbf{Queries.} Similar to the trade-off between reading and writing in (nongraph) databases~\cite{ONeilChGa96,BenderFaJa15,SearsCaBr08,BenderFaFi07}, ingesting a graph stream and processing it into a sketch is faster than querying the sketch. \sysname answers each query in single digit seconds, even when there are 37 billion edges and $2^{19}$ vertices, using a combination of provable worst-case query algorithms, accelerated with powerful heuristics.

\textbf{Sparse graph-processing via sketching.}
Graph sketching is not the best solution for graphs that are relatively sparse. 
On sparse graphs, the \sysname approach retains its asymptotic guarantee of low communication but is not space-efficient and may not be able to distribute away its CPU costs. For completeness we evaluate \sysname's performance on sparse graphs and validate these theoretical predictions.
\section{Preliminaries \& Definitions}
\label{sec:preliminaries}
\paragraph{The Graph Streaming Model.}
In the \defn{graph semi-streaming} model~\cite{semistreaming1} (sometimes just called the \defn{graph streaming} model), an algorithm is presented with a \defn{stream} $\graphstream$ of updates (each an edge insertion or deletion) where the length of the stream is $\streamlength$. 
Stream $\graphstream$ defines an input graph $\graph = (\nodes,\edges)$ with $\nodesize = |\nodes|$ and $\edgesize = |\edges|$.  The challenge in this model is to compute (perhaps approximately) some property of $\graph$ given a single pass over $\graphstream$ and at most $O(\nodesize \polylog\nodesize)$ words of memory.  
Each update has the form $((u,v), \Delta)$ where $u,v \in \edges, u \neq v$ and $\Delta \in \{-1,1\}$ where $1$ indicates an edge insertion and $-1$ indicates an edge deletion.  Let $\streamelement_i$ denote the $i$th element of $\graphstream$, and let $\graphstream_i$ denote the first $i$ elements of $\graphstream$.  
Let $\edges_i$ be the edge set defined by $\stream_i$, i.e., those edges which have been inserted and not subsequently deleted by step $i$.  The stream may only insert edge $e$ at time $i$ if $e \notin \edges_{i-1}$, and may only delete edge $e$ at time $i$ if $e \in \edges_{i-1}$. 

In this paper, we consider the following two problems in this model:
\evan{Rephrased this from prior work to us}

\begin{problem}[\textbf{Streaming Connected Components.}]
Given an insert/delete edge stream of length $\streamlength$ that defines a graph $\graph = (\nodes, \edges)$, return a spanning forest of $\graph$.
\end{problem}

\begin{problem}[\textbf{Streaming $k$--Edge Connectivity.}]
Given an insert/delete edge stream of length $\streamlength$ that defines a graph $\graph = (\nodes, \edges)$, for any cut $C \subset \nodes$, return the cardinality of cut $C$ (denoted by $w(C)$) if $w(C) < k$, and return $\infty$ otherwise.
\evan{Switched to $<$ from $\leq$}
\end{problem}

Prior work~\cite{Ahn2012,graphzeppelin} has also considered these problems in the graph streaming model.
\evan{Added this sentence about prior work}

\paragraph{A note on the query model for \sysname.} The graph streaming model above requires computing an answer at the end of the stream. In contrast, \sysname can answer connectivity queries interspersed arbitrarily among insertions and deletions during the stream. It answers both \defn{global connectivity} queries, where the task is to return the connected components or $k$-edge connectivity of the graph, and \defn{batched reachability} queries, where the query consists of a set of vertex pairs $(u_1, v_1), (u_2, v_2)... (u_k, v_k)$ and the task is to determine whether $u_i$ is in the same connected component as $v_i$ for each $i \in [k]$. The goal is to minimize query latency in addition to the goals of minimizing space use and maximizing stream ingestion considered by prior work. For more information see Section~\ref{subsec:queries}.  We assume that the stream is not adaptive, meaning that edge updates and connectivity queries do not depend on the results of prior queries.

\paragraph{Other models in streaming connected components systems.}
Aspen~\cite{aspen} and Terrace~\cite{terrace} are graph-stream processing systems which support connectivity queries. Unlike the semi-streaming model proposed by~\cite{Ahn2012} above, these systems work in the \defn{batch-dynamic model}, where updates are applied to a non-empty graph in batches exclusively containing insertions or deletions. The minimum size of batches is a parameter which affects system performance. Connectivity queries may be issued in between batches, but not during a batch. Aspen is capable of computing queries concurrently with processing updates while Terrace is not. In contrast, \sysname is designed to handle arbitrarily interspersed updates and queries with no notion of batched input. Like Terrace, it does not compute queries concurrently with processing updates.

\section{Sketching Graphs}
\label{sec:sketch}

In this section we briefly review the prior work on connectivity sketching, and then present \sketchname, an improved sketching subroutine that reduces the update cost from $O(\log^2\nodesize)$ to $O(\log \nodesize)$, and reduces the sketch size by a significant constant factor ($73$\%).

\subsection{Prior Work}
Ahn \etal~\cite{Ahn2012} initiate the field of graph sketching with their connected components sketch, which solves the streaming connected components problem in $O(\nodesize \log^3\nodesize)$ space. A key subproblem in their algorithm is \defn{$\ell_0$-sampling}: a vector $x$ of length $n$ is defined by an input stream of updates of the form $(i,\Delta)$ where value $\Delta$ is added to $x_i$, and the task is to sample a nonzero element of $x$ using $o(n)$ space. They use an $\ell_0$-sampler (also called an $\ell_0$-sketch) due to Cormode \etal:

\begin{theorem} (Adapted from ~\cite{l0sketch}, Theorem 1):
\label{thm:l0}
Given a 2-wise independent hash family $\mathcal{F}$ and an input vector $x \in \mathbb{Z}^n$, there is an $\ell_0$-sampler using $O(\log^2(n)\log(1/\delta))$ space that succeeds with probability at least $1 - \delta$.
\end{theorem}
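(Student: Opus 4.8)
The plan is to realize the $\ell_0$-sampler as $R=\Theta(\log(1/\delta))$ independent copies of a \emph{base sampler} that succeeds with some absolute constant probability $c>0$ and uses $O(\log^2 n)$ bits; the reported answer is taken from any copy that succeeds, so the whole construction fails with probability at most $(1-c)^{R}\le\delta$ once $R$ is a large enough multiple of $\log(1/\delta)$. All the real work is in the base sampler. I would build it from \emph{geometric subsampling} plus \emph{$1$-sparse recovery}. Using a pairwise-independent $g:[n]\to[n]$ drawn from $\mathcal F$, define nested levels $S_0\supseteq S_1\supseteq\cdots\supseteq S_{\lceil\log n\rceil+2}$ by $S_j=\{i: g(i)<n/2^j\}$, so $\Pr[i\in S_j]=2^{-j}$. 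At each level $j$ I maintain the three \emph{linear} sketches $a_j=\sum_{i\in S_j}x_i$, $b_j=\sum_{i\in S_j} i\,x_i$, and $c_j=\sum_{i\in S_j}x_i z^{\,i}\bmod p$, where $z$ is uniform in a prime field $\mathbb F_p$ with $p=\Theta(n^{3})$. Under the standard turnstile assumption that entries of $x$ are polynomially bounded, each of $a_j,b_j,c_j$ fits in $O(\log n)$ bits; there are $O(\log n)$ levels, so one base sampler is $O(\log^2 n)$ bits, and $R$ copies give the claimed $O(\log^2(n)\log(1/\delta))$. Linearity is exactly what makes an update $(i,\Delta)$ cheap: it touches only the prefix of levels containing $i$ and adds $\Delta$, $i\Delta$, $\Delta z^{i}$ to the corresponding sums.

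To extract a sample, for each level $j$ I run the usual $1$-sparse test: if $a_j\neq 0$, $b_j/a_j$ is an integer $\iota\in[n]$, and $a_j z^{\iota}\equiv c_j\pmod p$, I declare level $j$ ``$1$-sparse'' and report coordinate $\iota$ with value $a_j$; I output the report of the smallest such level, and fail if there is none. Correctness rests on two facts. First, if $x$ restricted to $S_j$ has exactly one nonzero coordinate the test passes and recovers it exactly, while if it has zero or $\ge 2$ nonzero coordinates the test can pass only if a fixed nonzero polynomial in $z$ of degree less than $n$ vanishes at the random $z$, which by Schwartz--Zippel happens with probability at most $n/p$; a union bound over the $O(\log n)$ levels makes the probability of any spurious report $o(1)$, absorbable into the constant. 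Second, there is a \emph{good level}: when $\|x\|_0\le 1$, $S_0=[n]$ is $1$-sparse and recovered deterministically; when $\|x\|_0\ge 2$, pick $j^\star$ with $2^{j^\star}\in[\,2\|x\|_0,\,4\|x\|_0)$, so $\mu:=\mathbb E\,|S_{j^\star}\cap\supp(x)|=\|x\|_0 2^{-j^\star}\in(1/4,1/2]$. Using only pairwise independence, Bonferroni gives $\Pr[\,|S_{j^\star}\cap\supp(x)|\ge 1\,]\ge\mu-\binom{\|x\|_0}{2}2^{-2j^\star}\ge\mu-\mu^2/2$ and $\Pr[\,|S_{j^\star}\cap\supp(x)|\ge 2\,]\le\mu^2/2$, hence $\Pr[\,|S_{j^\star}\cap\supp(x)|=1\,]\ge\mu-\mu^2\ge 3/16$. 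Combining the two facts, each base sampler reports a genuine nonzero coordinate of $x$ with probability at least an absolute constant $c>0$, and the amplification over $R$ copies finishes the proof.

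I expect the \emph{good-level} step to be the main obstacle. Chebyshev's inequality (legitimate here because pairwise independence makes the variance of $|S_{j^\star}\cap\supp(x)|$ at most its mean) easily shows the survivor count is at least one with constant probability, but the $1$-sparse recovery structure only works when the count is \emph{exactly} one, so one must simultaneously upper-bound the chance that two or more survivors collide at the same level; it is this two-sided requirement that forces $\mu$ to be a constant bounded away from both $0$ and $1$, and hence forces the particular choice of the window $[2\|x\|_0,4\|x\|_0)$ and one or two extra levels beyond $\log n$. Everything else -- the bit accounting, the update argument from linearity, Schwartz--Zippel for the checksum, and the $(1-c)^R\le\delta$ amplification -- is routine. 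I note that one can instead fold the $\log(1/\delta)$ factor into a single sampler by using $s$-sparse recovery with $s=\Theta(\log(1/\delta))$ at each level, which is closer to the form in~\cite{l0sketch}; either route yields the stated bound.
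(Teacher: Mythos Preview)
The paper does not prove this theorem: it is stated in the ``Prior Work'' subsection as a result adapted from Cormode \etal~\cite{l0sketch}, and no proof is given. Your proposal is a correct reconstruction of the standard Cormode--Firmani $\ell_0$-sampler: pairwise-independent geometric subsampling into $O(\log n)$ levels, a $1$-sparse recovery/fingerprint test at each level, and $\Theta(\log(1/\delta))$-fold independent repetition. The good-level argument via the Bonferroni bounds $\Pr[=1]\ge\mu-\mu^2\ge 3/16$ for $\mu\in(1/4,1/2]$ is exactly the right inequality under pairwise independence, and the Schwartz--Zippel checksum analysis and space accounting are fine. So there is nothing in the paper to compare against beyond the citation, and your write-up matches the cited construction.
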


We denote the $\ell_0$ sketch of a vector $x$ as $\sketch(x)$.  The sketch is a linear function, \ie $\sketch(x) + \sketch(y) = \sketch(x+y)$ for any vectors $x$ and $y$. Ahn \etal define a \defn{characteristic vector} $\charvec_u \in \mathbb{Z}_2^{\binom{\nodesize}{2}}$ of each vertex $u \in \nodes$ such that each nonzero element of $\charvec_u$ denotes an edge incident to $u$. 
That is, $\charvec_u \in \mathbb{Z}_2^{\binom{\nodesize}{2}}$ s.t. for all vertices $0 \leq i < j < \nodesize$: 
$$\charvec_u[(i,j)] = \left\{ \begin{array}{ll}
            1 & \quad u \in \{i, j\} \text{ and }(i, j) \in \edges \\
            0 & \quad \text{otherwise} 
        \end{array}\right\}
$$
Further, for any $S \subset \nodes$, the nonzero elements of $\charvec_S = \sum_{u \in S}\charvec_u$ are precisely the set of edges crossing the cut $S, \nodes \setminus S$. For each $u \in \nodes$, the algorithm computes $\sketch(\charvec_u)$: for each edge update $(u,v,\Delta)$ it computes $\sketch(e_{(u,v)})$ where $e_{(u,v)}$ denotes the vector with $1$ in position $\text{idx} = (u, v)$ and 0 in all other positions. It maintains $\sketch(\charvec_u) = \sum_i \sketch(x_i)$ for each $u$. Then for arbitrary $X \subset \nodes$ they can sample an edge across the cut $X, \nodes \setminus X$ by computing $\sketch(\charvec_X) = \sum_{u\in X} \sketch(\charvec_u)$. 

This allows them to perform \Boruvka's algorithm using the sketches: they form $O(\log\nodesize)$ $\ell_0$-sketches $\sketch_0(\charvec_u), \sketch_1(\charvec_u), \cdots , \sketch_{O(\log\nodesize)}(\charvec_u)$ for each $u$. We call $\sketch(\charvec_u) = \bigcup_{i \in [O(\log\nodesize)]}\sketch_i(\charvec_u)$ the \defn{vertex sketch} of $u$, and it has size $O(\log^3 \nodesize)$. Then $\forall u \in 
\nodes$ they query $\sketch_1(\charvec_u)$ to sample an edge incident to $u$. For each resulting component $X$ they compute $\sketch_2(\charvec_X)$ and repeat until all connected components are found. Since each vertex sketch $\sketch(\charvec_u)$ has size $O(\log^3\nodesize)$ bits, the entire data structure has size $O(\nodesize\log^3\nodesize)$. See Appendix~\ref{app:ccsketch} for a more complete description of this algorithm, along with an illustrative example.

\paragraph{Testing k-connectivity.}
Ahn et al.~\cite{Ahn2012} also show how to test k-connectivity of a graph $\graph = (\nodes, \edges)$ (that is, to exactly compute the minimum cut of $\graph$ provided this value is $\leq k$) by constructing a k-connectivity certificate $H = \bigcup_{i \in [k]} F_i $ where $F_0, F_1, \dots, F_{k-1}$ are edge-disjoint spanning forests of $\graph$. $H$ has the property that it is $k'$-edge connected iff $G$ is $k'$-edge connected for all $k' \leq k$. They find each $F_i$ by computing $k$ connectivity sketches of $\graph$ in a single pass over the stream. After the stream, the first connectivity sketch is used to find $F_0$ and the edges of $F_0$ are deleted from the remaining $k-1$ connectivity sketches. The second connectivity sketch can now be used to get $F_1$, whose edges are subsequently deleted from the remaining $k-2$ sketches and so on. The size of the sketches is $O(k\nodesize\log^3\nodesize)$ bits.

\paragraph{\graphzep.}
Tench \etal~\cite{graphzeppelin} present \graphzep, the first implementation of Ahn \etal's connected components algorithm which uses a faster $\ell_0$-sketch algorithm which they call \cubesketch. \graphzep also uses an external-memory-optimized data structure called a \defn{gutter tree} to I/O-efficiently collect updates to be processed, allowing the algorithm to run quickly even when the sketches are stored on disk.



To achieve a failure probability of $\delta$, \graphzep's \cubesketch uses $O(\log^2(n)\log(1/\delta)$ bits of space and has worst-case update time $O(\log(n)\log(1/\delta))$. As in Ahn \etal, they use $O(\log\nodesize)$ of these sketches for each vertex and set $\delta$ to be a small constant. So the space cost per vertex is $O(\log^3\nodesize)$ bits and the worst-case update time is $O(\log^2\nodesize)$. See Appendix~\ref{app:sketch} for a detailed description of \cubesketch.

\subsection{\sysname's new sketch: \sketchname}
\label{subsec:cameo}
We develop a new $\ell_0$ sampler called \sketchname for use in \sysname. \sketchname improves upon \cubesketch with a new update procedure that is a $O(\log\nodesize)$ factor faster to update and reduces space usage by a constant factor via a refined analysis. All other details, including the query procedure, remain unchanged. We present the full details of \sketchname in Appendix ~\ref{app:sketch} and report the asymptotic performance here.

\paragraph{Update procedure.} \sketchname uses a simpler and faster update procedure than \cubesketch.

\begin{theorem}
    \label{thm:cameo_update}
    \cameoupdatethm
\end{theorem}

Theorem~\ref{thm:cameo_update} demonstrates that \sketchname reduces the CPU burden of performing updates and supports Claim~\ref{claim:all}.\ref{itm:cpu}. The proof of this theorem can be found in Appendix~\ref{app:sketch}.

\paragraph{Reduced constant factors.} In \graphzep, Tench \etal use $56 \log (1/\delta)\log n$ bytes of space to guarantee a failure probability of at most $\delta$ when sketching a vector of length $n < 2^{64}$ using \cubesketch. Via a careful constant-factor analysis, we can show that \sketchname can match this failure probability with significantly less space: 


\begin{theorem}
    \label{thm:cameo_cols}
    \cameospacethm
\end{theorem}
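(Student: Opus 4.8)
The plan is to verify two independent facts and multiply them: (i) that a single ``copy'' of \sketchname, built as in Appendix~\ref{app:sketch}, occupies exactly $8(\lceil\log n\rceil+5)$ bytes, and (ii) that such a copy returns a nonzero coordinate of $x$ with probability at least $2/3$, so that running $\log_3(1/\delta)$ independent copies and combining their outputs drives the failure probability down to $(1/3)^{\log_3(1/\delta)}=\delta$. Since the total space is (copies)$\times$(bytes per copy), this immediately yields $8\log_3(1/\delta)(\lceil\log n\rceil+5)$ bytes. Part~(i) is a direct tally of the construction: a copy is organized into $\lceil\log n\rceil+5$ dyadic subsampling levels (levels $0,\dots,\lceil\log n\rceil$ plus a constant buffer of extra levels), and — this is where the hypothesis $n<2^{64}$ is used — each level's bucket is a single $64$-bit ($8$-byte) machine word that stores the XOR of the indices mapped to that level together with the fixed-width checksum material; because an index is at most $64$ bits, bucket-plus-checksum fits in one word.

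Part~(ii) is the substance. I would bound the per-copy failure probability by the sum of two events. The first is that \emph{no} level of the copy isolates exactly one nonzero coordinate of $x$: writing $d=\|x\|_0$, the levels near $\lceil\log d\rceil$ each receive $\Theta(1)$ coordinates in expectation, and a second-moment (Chebyshev/Bonferroni) computation against the pairwise independence of the trailing-zeros hash shows that with constant probability bounded away from $1$ some such level receives exactly one coordinate; the ``$+5$'' buffer of extra levels is precisely what guarantees a wide enough band of near-optimal levels exists for \emph{every} $d\le n<2^{64}$. The second event is that the copy is \emph{fooled}: a bucket holding $\ge 2$ nonzero coordinates $i_1,\dots,i_k$ stores $\eta=\bigoplus_j i_j$, which is distinct from every $i_j$, and the checksum test accepts $\eta$ iff $g(\eta)=\bigoplus_j g(i_j)$. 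This is exactly where $3$-wise independence is forced on us and pairwise independence fails: because XOR is $\mathrm{GF}(2)$-linear, a linear (hence pairwise-independent) $g$ would satisfy that identity \emph{always}, so collisions would never be detected; $3$-wise independence makes $g(\eta)$ uniform and independent of a constant-size subset of $\{g(i_j)\}$, so a fooled bucket is accepted with probability at most $2^{-r}$ for checksum width $r$, and a union bound over the $O(1)$ levels that are actually consulted (a sparser candidate is always preferred, so denser levels are never misread) keeps this term small. Choosing the number of extra levels ($5$) and the checksum width so that the two terms sum to at most $1/3$ completes the per-copy bound; one then combines the $\log_3(1/\delta)$ copies — e.g.\ by reporting the output of the first copy whose isolation test succeeds, or by a majority vote, exploiting that an $\ell_0$-sampler here need only return \emph{some} nonzero coordinate rather than a uniform one — to reach overall failure at most $\delta$.

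\textbf{Main obstacle.} The delicate point is the constant-factor bookkeeping in part~(ii): two requirements pull against each other — enough extra levels that the isolation probability clears $2/3$ for the worst-case sparsity $d$, yet a checksum short enough that index-plus-checksum still fits in one $8$-byte word even for $n$ as large as $2^{64}-1$ — and one must show they can be satisfied simultaneously while leaving exactly the clean ``$+5$'' and the base-$3$ in the final expression rather than looser constants. In particular, the combination of copies must both amplify the ``no isolating level'' failure geometrically in base $3$ and keep the aggregate false-positive probability below $\delta$ \emph{without} widening the per-bucket checksum; this is the refined-analysis step that improves on the \graphzep constant.
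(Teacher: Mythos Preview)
Your high-level skeleton---per-column success probability at least $2/3$, then $\log_3(1/\delta)$ independent columns so that total failure is at most $(1/3)^{\log_3(1/\delta)}=\delta$---is exactly the paper's. But the mechanism you propose for obtaining the $2/3$ bound is not the paper's, and your own ``main obstacle'' paragraph correctly identifies why: a second-moment/Chebyshev isolation argument will give \emph{some} constant, but there is no reason it should hit precisely $2/3$ with precisely $\log n+5$ rows.

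The paper's actual device is a \emph{$k$-boundary / $k$-isolated column} reduction (Lemmas~\ref{lem:boundary}--\ref{lem:iso_prob_dist}). One locates the shallowest depth $\beta+1$ below which at most $k$ nonzeros land; conditioned on this, the contents of buckets $\beta,\ldots,d$ depend on only $k$ hash outcomes, and $k$-wise independence makes those $k$ outcomes behave as if fully independent. With $k=3$ this reduces the arbitrary-$z$ problem to an \emph{exact} recurrence $F(z,d)$ on at most three elements, which the paper solves numerically (Table~\ref{tab:cameo_exact_pr}): $\hat F(3,d)>0.69$ whenever $d\ge 5$, multiplied by the $\ge 0.98$ probability that the $3$-boundary sits no deeper than $\log n+1$, gives a per-column success probability exceeding $2/3$ (Lemma~\ref{lem:3-wise-sketch}). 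The constants ``$+5$'' and the base~$3$ are \emph{outputs} of this exact computation, not tuned Chebyshev slack.

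This also means you have misplaced the role of $3$-wise independence. Your observation that a $\mathrm{GF}(2)$-linear (hence pairwise-independent) checksum $g$ would always pass $g(\bigoplus_j i_j)=\bigoplus_j g(i_j)$ is a valid remark about checksums, but in the paper the $3$-wise hypothesis is on the \emph{depth} hash $\mathrm{hash}_1$, and its job is precisely to license the isolated-column reduction above; the checksum hash $\mathrm{hash}_2$ is a separate object whose collision probability is dispatched as a polynomially small additive error, not as the driver of the ``$3$'' in $\log_3$. Relatedly, your byte accounting assumes index-plus-checksum packs into a single $8$-byte word, but with $n<2^{64}$ the index $\alpha$ alone can already occupy $64$ bits; the paper's bucket carries $\alpha$ and $\gamma$ as two distinct fields.
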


Theorem~\ref{thm:cameo_cols} immediately implies a space savings of up to 90\% compared to \cubesketch and thus supports Claim~\ref{claim:all}.\ref{itm:space}. In our implementation, we conservatively choose to use slightly more space than this theorem requires to reduce the failure probability further. Still, our implementation requires only $2/7$ths of the space used in \graphzep~\cite{graphzeppelin} (see Section~\ref{sec:system} for details).

See Appendix~\ref{app:cameo-proofs} for the proof of this theorem.

\section{\sysname Design}
\label{sec:bandwidth}
\begin{figure}
    \centering
    \includegraphics[width=.45\textwidth]{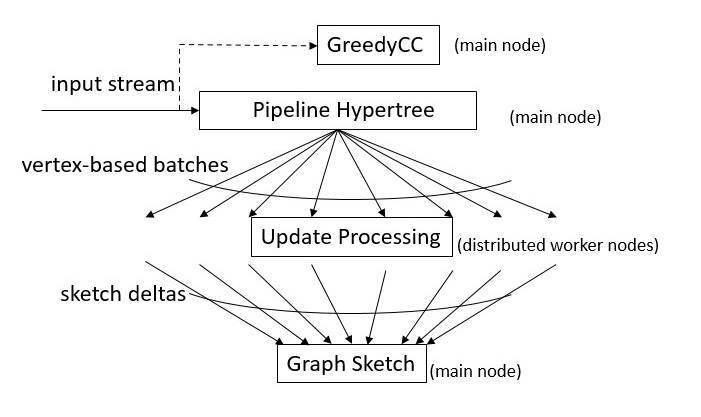}
    \caption{Data flow diagram for \sysname's ingestion algorithm. Connectivity information from the input stream is compressed into the graph sketch and added to \dsuname.}
    \label{fig:sys_diagram}
\end{figure}

\sysname uses \sketchnames to compute connectivity. The CPU work of computing updates to $\sketch(\graph)$ is done by distributed workers, while $\sketch(\graph)$ itself is stored on the main node. To answer queries, the main node computes a spanning forest via \Boruvka's algorithm using $\sketch(\graph)$ as described in Section~\ref{sec:sketch} or via a heuristic algorithm which we call \dsuname and describe in Appendix~\ref{subsec:dsu}.

In this section we describe how \sysname's main node efficiently collects updates into vertex-based batches to minimize communication, how the distributed workers process these batches of updates into sketch form, and how \sysname answers connectivity queries. We also prove the asymptotic upper bounds on the CPU and communication costs incurred by these operations. This analysis explains \sysname's surprising performance profile. 


Figure~\ref{fig:sys_diagram} summarizes the \sysname data flow. The input stream, consisting of edge insertions, edge deletions, and queries arrives at the main node. Updates (insertions and deletions) are inserted into the \defn{\treename} (Section~\ref{subsec:pht}) and also into \dsuname (Section~\ref{subsec:dsu}). The \treename collects updates into \defn{vertex-based batches} (Section~\ref{subsec:batch}) which are sent to distributed worker nodes. These worker nodes process batches by running the \sketchname algorithm, producing \defn{sketch deltas} (Section~\ref{subsec:dist_sketch}) which are applied to the \sketchnames on the main node. 


\subsection{Ingesting Stream Updates on the Main Node}

\subsubsection{Vertex-Based Batching}
\label{subsec:batch}

The core technique that makes distributed sketch processing communication-efficient (and therefore feasible) is \emph{vertex-based batching}, where many updates with a common endpoint are collected into a batch. Intuitively, because one or many updates to the same endpoint can be represented as a single sketch delta of fixed size, batching updates by endpoint drastically reduces the communication cost of sending sketch deltas from worker nodes to the main node.

Specifically, any updates for $u\in \nodes$ are collected into a batch $B_u \subseteq \{(x,y) \in \edges \mid x = u \vee y = u\}$. $B_u$ is sent to a single distributed worker, which returns a sketch of the updates. As we will see in Section~\ref{subsec:dist_sketch} this sketch has size $\sketchsize = O(\log^3\nodesize)$ bits. During update processing, \sysname only sends $B_u$ when $|B_u| \geq \alpha \sketchsize / \log \nodesize$ for some constant $\alpha\geq1$. Each update requires $\log V$ bits to represent, so a buffer that contains $\alpha \sketchsize / \log \nodesize$ updates has size $\alpha \sketchsize$ bits.

As a result of this policy, the amortized communication cost per update is small. Say the stream contains $\streamlength$ updates. The bandwidth cost to receive the input stream is $\streamlength$. Since each update is included in two vertex-based batches (one per endpoint) and each batch is sent to a distributed worker once, the total bandwidth cost of sending vertex-based batches is $2\streamlength$. Finally, each vertex-based batch induces the distributed worker that receives it to respond with a sketch delta (which is $1/\alpha$ of the batch's size). This means that as long as \sysname is processing full vertex-based batches, the network bandwidth cost of processing $\streamlength$ updates is at most $(3+1/\alpha)\streamlength$. This technique is simple, but crucial for good performance.

\subsubsection{Pipeline Hypertree}
\label{subsec:pht}

To make vertex-based batching fast, we design the \defn{\treename}, which is a simplified and parallel variant of the buffer tree~\cite{Arge03} designed to minimize cache line misses and thread contention. The \treename receives arbitrarily ordered stream updates and consolidates them into vertex-based batches. Each update inserted into the \treename is moved $O(\log_{\cachesize/\cachelinesize}\nodesize)$ times before being returned in a vertex-based batch, where $\cachesize$ denotes the size of L3 cache and $\cachelinesize$ denotes the size of an L3 cache line.  The total size of the data structure is $O(\nodesize\log^3\nodesize)$ bits. We defer description of the design and implementation of the \treename to Appendix~\ref{app:pht}.

\subsection{Distributed Sketch Processing \hfill}
\label{subsec:dist_sketch}

\sketchnames have strong data locality, which we exploit for parallelism: processing a graph edge update $(u,v,\Delta)$ requires updating only $\sketch(\charvec_u)$ and $\sketch(\charvec_v)$ and these updates can be performed independently of each other. Because the sketches are linear, the sketch update for $(u,v)$ can be computed on its own and later summed to the sketch of $u$. This means that the vast majority of the computation required to process $(u,v)$ can be performed before accessing the sketches for $u$ and $v$. \sysname exploits this independence to distribute the computational cost of these updates while storing the sketches on a single worker.

When a batch of updates $(e_1, e_2, ...)$ for vertex $u$ is sent to a worker node, the worker node computes $\sum_{j}\sketch(e_j)$, which we call a \defn{sketch delta} (since it is a sketch encoding the change in the neighborhood of vertex $u$. Note that it has size $O(\log^3\nodesize)$---equal to the size of a vertex sketch. 
This immediately gives the following result:

\begin{theorem}[Distributed cost.]
The distributed CPU cost of processing a batch of $x$ updates for vertex $u \in \nodes$ into a sketch delta $\sketch_u$ is $O(x\log(\nodesize))$.
\end{theorem}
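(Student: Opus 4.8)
The plan is to reduce the claim to two facts already in hand: the linearity of the sketch and the per-update time bound of \thmref{cameo_update}. By linearity, the sketch delta $\sketch_u = \sum_{j=1}^{x}\sketch(e_j)$ can be built incrementally --- start from the all-zero sketch and apply the single-edge updates $e_1,\dots,e_x$ in sequence --- so it suffices to bound (i) the cost of one such single-edge update and (ii) the cost of initializing the zero sketch.

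First I would recall the structure of a sketch delta: it has exactly the same layout as a vertex sketch $\sketch(\charvec_u)=\bigcup_{i\in[O(\log\nodesize)]}\sketch_i(\charvec_u)$, namely $O(\log\nodesize)$ independent \sketchname instances (one per level of \Boruvka's algorithm), each of size $O(\log^2\nodesize)$ bits. Applying a single-edge update means applying it to each of these $O(\log\nodesize)$ instances. By \thmref{cameo_update}, with the failure parameter $\delta$ fixed to a constant (as \sysname does, following \graphzep), each \sketchname update takes worst-case time $O(\log(1/\delta))=O(1)$. Hence one single-edge update costs $O(\log\nodesize)$, and applying all $x$ of them costs $O(x\log\nodesize)$, which is the stated bound.

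The one point that needs care is the initialization in (ii): zeroing a sketch delta touches $\sketchsize=O(\log^3\nodesize)$ bits, which is not obviously dominated by $O(x\log\nodesize)$ when $x$ is small. This is resolved by \sysname's batching policy (\subsecref{batch}): a batch is dispatched to a worker only once $x=|B_u|\ge \alpha\sketchsize/\log\nodesize$, so $x\log\nodesize \ge \alpha\sketchsize = \Omega(\log^3\nodesize)$ and the initialization cost is absorbed into $O(x\log\nodesize)$. (Alternatively one could state the bound as $O(x\log\nodesize + \log^3\nodesize)$ and note the additive term is paid once per batch.) I do not expect a genuine obstacle here --- the excerpt itself advertises the result as immediate, and the only substance is the amortization argument that rules out the tiny-batch case.
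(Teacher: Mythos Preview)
Your proposal is correct and follows the same approach as the paper, which in fact gives no explicit proof at all: the theorem is stated as an immediate consequence of the preceding description (each of the $O(\log\nodesize)$ \sketchnames in a vertex sketch takes $O(1)$ time per update by \thmref{cameo_update} with constant $\delta$, so $x$ updates cost $O(x\log\nodesize)$). Your treatment of the initialization cost via the batching threshold is more careful than anything the paper spells out; the paper simply elides this point.
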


\paragraph{Sketch merging.}
\sysname's main node maintains the \defn{graph sketch}: $\sketch(\graph) = \bigcup_{u \in \nodes}\sketch(\charvec_u)$.
After a sketch delta for vertex $u$ is created by a distributed worker is then sent to the main node where it is added to $\sketch(\charvec_u)$.
The graph sketch is stored in RAM which is feasible since it has total size $O(\nodesize \log^3(\nodesize))$ bits.

\subsection{Finding Connected Components}
\label{subsec:queries}
\evan{Changed section name and some wording.}
\sysname is designed to efficiently answer two types of connected component queries: \defn{global connectivity queries} where the task is to map each vertex to its connected component, and \defn{batched reachability queries} where the query consists of a set of vertex pairs $(u_1, v_1), (u_2, v_2)... (u_k, v_k)$ and the task is to determine whether $u_i$ is in the same connected component as $v_i$ for each $i \in [k]$. At a high level, \sysname answers these queries by producing a spanning forest of the graph defined by the input stream. This is done as described in Section ~\ref{sec:sketch} via Boruvka's algorithm on the vertex sketches.

Processing queries can increase network bandwidth costs if not handled carefully. Computing the spanning forest from the sketches can only be done after all pending stream updates have been processed. We say the graph sketch stored on the main node is \defn{current} with respect to query $q$ at time $t$ if there are no pending updates; \ie all updates that arrived prior to time $t$ have been processed and merged into the graph sketch. If for some vertex $i$ there are a small number of updates for $i$ in the \treename, then by the reasoning in Section ~\ref{subsec:batch} this could incur an average communication cost per update of $O(\log^2\nodesize)$. In the worst case, the input stream could insert a perfect matching into an empty graph and then issue a query; the average communication cost per update would be $O(\log^2\nodesize)$.

\sysname avoids this problem by adopting a hybrid distribution policy for pending updates when a query is issued.  When a query is issued, \sysname first flushes the \treename so all pending updates are stored in the leaves. For each leaf, if the leaf is at least a $\gamma$-fraction full, it is sent as a vertex-based batch to a distributed worker, where $\gamma \in (0,\frac{1}{2}]$ is a parameter chosen by the user. All leaves that are less than a $\gamma$-fraction full are processed locally on the main node, costing no additional network bandwidth.

As a consequence of this policy, we have the following theorems. 

\begin{restatable}[Communication cost.]{theorem}{commthm}
\label{thm:queries}
The communication cost of ingesting $\streamlength$ updates and answering $Q$ queries is at most $(3 + 1/(\gamma\alpha)) \streamlength$, where $\gamma$ and $\alpha$ are constants.
\end{restatable}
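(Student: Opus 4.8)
The plan is to reuse the query-free accounting from Section~\ref{subsec:batch} and show that interspersing $Q$ queries only inflates a single term --- the sketch-delta term --- by replacing the factor $1/\alpha$ with $1/(\gamma\alpha)$. Concretely, I would partition every byte that crosses the network over the whole run into three groups: (a) the input stream arriving at the main node, (b) vertex-based batches travelling from the main node to workers, and (c) sketch deltas travelling back from workers to the main node. Group (a) costs exactly $\streamlength$ regardless of how many queries are issued. Group (b) also costs the same as in the query-free analysis: I would argue that the \treename routes each update along a fixed set of root-to-leaf paths and emits it in a vertex-based batch at most once per endpoint, so a premature (query-triggered) flush of a leaf cannot cause any update to be transmitted more times than it already is in the query-free case --- a half-full leaf that is flushed, refilled, and flushed again still emits each of its updates exactly once. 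Hence the only group affected by queries is (c).

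Next I would bound group (c) by a charging argument. Every vertex-based batch actually sent to a worker is of one of two kinds: a \emph{normal} batch, sent because it reached the threshold $\alpha\sketchsize/\log\nodesize$ updates, or a \emph{query-triggered} batch --- a leaf that was at least a $\gamma$-fraction full when a query forced a flush. (Leaves flushed by a query while less than a $\gamma$-fraction full are processed entirely on the main node and generate no network traffic, so they contribute nothing to group (c).) A normal batch carries at least $\alpha\sketchsize/\log\nodesize$ updates and its returned sketch delta is $\sketchsize$ bits, i.e.\ at most a $1/\alpha$ fraction of the batch's content, exactly as in Section~\ref{subsec:batch}. A query-triggered batch carries at least $\gamma\alpha\sketchsize/\log\nodesize$ updates and still returns only a $\sketchsize$-bit delta, so its delta is at most a $1/(\gamma\alpha)$ fraction of that batch's content. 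Since $\gamma \le 1/2$, we have $1/(\gamma\alpha) \ge 1/\alpha$, so in both cases the delta is at most a $1/(\gamma\alpha)$ fraction of the batch that produced it. Because the deltas of distinct flushed batches are charged to disjoint sets of updates (each update lies in only a bounded number of ever-flushed batch-instances, namely one per endpoint), summing over all batches bounds group (c) by $1/(\gamma\alpha)$ times the total batch content. Carrying through the same linear bookkeeping as Section~\ref{subsec:batch} --- but with its $1/\alpha$ replaced by $1/(\gamma\alpha)$ --- then yields the claimed total of $(3 + 1/(\gamma\alpha))\streamlength$.

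The main obstacle is step (b) together with the amortization in step (c): one must rule out the adversary making the batch-sending or delta cost superlinear by issuing a query every few updates so that nearly empty leaves are flushed constantly. Two facts rescue the bound: (i) leaves below the $\gamma$ threshold cost nothing on the network because they are handled locally, and (ii) every leaf flushed above the $\gamma$ threshold contains $\Omega(\gamma\alpha\sketchsize/\log\nodesize)$ \emph{fresh} updates not charged before, so its $\sketchsize$-bit delta amortizes to at most $1/(\gamma\alpha)$ per update. Making (i)--(ii) precise requires pinning down the leaf capacity and the ``$\gamma$-fraction full'' definition from the \treename description, and confirming that an update's contribution is billed against exactly the (at most two) batch-instances in which it is ever emitted; once that is nailed down, the remainder is the same routine linear accounting as Section~\ref{subsec:batch}, which I would keep terse.
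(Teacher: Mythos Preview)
Your proposal is correct and follows essentially the same argument as the paper's proof. Both rest on the two key facts you isolate: (i) leaves below the $\gamma$ threshold are processed locally and contribute nothing to network cost, and (ii) every batch that \emph{is} shipped contains at least a $\gamma\alpha$-fraction of a full leaf, so its returned $\sketchsize$-bit delta is at most a $1/(\gamma\alpha)$ fraction of the batch content. The paper phrases this as a worst-case identification (leaves exactly $\gamma$-full at each query) and then asserts monotonicity; you phrase it as an explicit three-way partition plus a per-batch charging argument. These are the same proof at different levels of explicitness, and your observation that group~(b) is unaffected by queries --- because a flushed leaf is emptied, so no update can be emitted twice --- is the one point you make more carefully than the paper does.
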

\begin{proof}
See Appendix~\ref{app:proofs}.
\end{proof}



Importantly, this means that \sysname never uses more than a constant multiple of the network bandwidth required to receive the input stream, \emph{regardless of the number or distribution of queries.}

\begin{restatable}[Computational cost on main node.]{theorem}{compthm}
\label{thm:comp}
Given a input stream of length $\streamlength$ defining $\graph = (\nodes, \edges)$ and a series of $Q$ connectivity queries issued throughout the stream, let $\streamlength_i$ denote the number of edge updates that arrive after query $Q_i$ but before query $Q_{i+1}$. \sysname never uses more than $O(\nodesize\log^3\nodesize)$ bits of space on the main node while processing the stream, and the amortized cost per update to process $\streamlength_i$ is $O(\log_{\cachesize/\cachelinesize}(\nodesize))$ if $\streamlength_i = \Omega(\nodesize \log^2(\nodesize))$ and $O(\log(\nodesize))$ otherwise. Computing each query $Q_i$ takes $O(\nodesize\log^2(\nodesize))$ time.
\evan{Corrected some claims in this theorem}
\end{restatable}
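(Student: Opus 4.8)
The plan is to prove the three claims --- main-node space, amortized cost per update within each inter-query block, and cost per query --- by accounting separately for the main node's persistent structures (the graph sketch $\sketch(\graph)$, the \treename, and \dsuname) together with the sketch deltas in transit, and then by charging the cost of each query-induced flush to the updates of the block that precedes it. Throughout I work in the word-RAM with word size $\Theta(\log\nodesize)$ (so an edge update is $O(1)$ words and an $O(\log^3\nodesize)$-bit sketch is $O(\log^2\nodesize)$ words).

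\emph{Space.} Here I would simply add up footprints already established: $\sketch(\graph)=\bigcup_{u}\sketch(\charvec_u)$ is $O(\nodesize\log^3\nodesize)$ bits (\secref{sketch}), the \treename is $O(\nodesize\log^3\nodesize)$ bits (\subsecref{pht}), and \dsuname plus per-vertex bookkeeping is $O(\nodesize\log\nodesize)$ bits. The one thing to check is the sketch deltas that workers have returned but the main node has not yet merged: each is $O(\log^3\nodesize)$ bits, and bounding the number outstanding by the number of workers (which we take to be $O(\nodesize)$) adds $O(\nodesize\log^3\nodesize)$ bits. During a query, \Boruvka's working memory is at most one aggregated $\ell_0$-sketch per live component, i.e.\ $O(\nodesize\log^2\nodesize)$ bits, so the peak stays $O(\nodesize\log^3\nodesize)$ bits.

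\emph{Amortized cost per update.} Fix the block of $\streamlength_i$ updates arriving between $Q_i$ and $Q_{i+1}$ (take $Q_0$ to be the start of the stream and handle the final, possibly query-free, suffix the same way). The flush at $Q_i$ empties the \treename, so every update stranded in the tree when $Q_{i+1}$ arrives belongs to this block; I charge to the block (i) the \treename movement of its updates, (ii) the merges of all sketch deltas produced from its updates, and (iii) the local processing of those of its updates that are stranded at $Q_{i+1}$. For (i), \subsecref{pht} gives $O(\log_{\cachesize/\cachelinesize}\nodesize)$ moves per update, each copying $O(1)$ words. For (ii), every batch the main node ships --- whether it filled during ingestion ($\alpha\sketchsize/\log\nodesize$ updates) or was at least $\gamma$-full at a flush ($\geq\gamma\alpha\sketchsize/\log\nodesize$ updates) --- carries $\Omega(\gamma\alpha\log^2\nodesize)$ updates, while merging its $O(\sketchsize)=O(\log^3\nodesize)$-bit (i.e.\ $O(\log^2\nodesize)$-word) delta is a single sequential pass, so the merge cost amortizes to $O(1)$ per update. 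For (iii), after the flush the leaves that are less than $\gamma$-full are processed on the main node; there are at most $\nodesize$ leaves, each holding fewer than $\gamma\alpha\sketchsize/\log\nodesize$ updates, so at most $O(\min(\streamlength_i,\gamma\alpha\nodesize\log^2\nodesize))$ updates are processed locally, each as a single vertex-sketch update costing $O(\log\nodesize)$ (by \thmref{cameo_update}, with $\delta$ a constant, a vertex sketch is $O(\log\nodesize)$ $\ell_0$-samplers, each $O(1)$ to update). Summing: if $\streamlength_i=\omega(\alpha\nodesize\log^3\nodesize)$ then (iii) is $o(1)$ per update and the block total is $O(\log_{\cachesize/\cachelinesize}\nodesize)$; otherwise (iii) is $O(\log\nodesize)$ per update and, since (i) and (ii) are $O(\log\nodesize)$ as well, the total is $O(\log\nodesize)$.

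\emph{Cost per query, and the main obstacle.} A query answer is a spanning forest produced by \Boruvka's algorithm on $\sketch(\graph)$ exactly as in \secref{sketch}: $O(\log\nodesize)$ rounds, where round $j$ sums, per live component, the $j$-th $\ell_0$-sketch ($O(\log^2\nodesize)$ bits) of each of its vertices and queries the sum. Each vertex feeds its $j$-th sub-sketch into exactly one aggregate in round $j$, so aggregation touches $O(\nodesize\log^3\nodesize)$ bits over all rounds; since the live-component count at least halves each round there are $O(\nodesize)$ $\ell_0$-queries in all, each on $O(\log^2\nodesize)$ bits, and the forest/\dsuname bookkeeping is $O(\nodesize)$ --- giving $O(\nodesize\log^2\nodesize)$ time (answering via the \dsuname heuristic is never more expensive, as it reuses earlier work). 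The delicate part of the whole argument is the amortization behind (ii)--(iii): I need the invariant that \emph{every} shipped batch carries $\Omega(\gamma\alpha\log^2\nodesize)$ updates, the matching upper bound that a flush strands only $O(\gamma\alpha\nodesize\log^2\nodesize)$ updates in not-yet-$\gamma$-full leaves, and the ``a flush empties the tree'' fact, all stated with the right \treename capacity bookkeeping; keeping units (bits vs.\ words vs.\ cache-line transfers) consistent so the three bounds come out exactly as claimed is the remaining place where care is needed, and everything else is addition.
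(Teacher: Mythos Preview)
Your proposal is correct and follows essentially the same decomposition as the paper's proof: space from the sizes of $\sketch(\graph)$ and the \treename, amortized update cost from \treename movement plus the query-induced local processing of at most $O(\gamma\alpha\nodesize\log^2\nodesize)$ stranded updates at $O(\log\nodesize)$ each, with the $\omega(\alpha\nodesize\log^3\nodesize)$ threshold making the latter negligible. Your treatment is in fact more thorough than the paper's own proof, which omits the explicit accounting for sketch-delta merges (your item (ii)) and does not spell out the $O(\nodesize\log^2\nodesize)$ \Boruvka query bound at all.
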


\begin{proof}
See Appendix~\ref{app:proofs}.
\end{proof}

As a result, when $\streamlength_i = \Omega(\nodesize \log^2\nodesize)$ the amortized CPU cost for all computation on the main node is $O(\log_{\cachesize/\cachelinesize}(\nodesize))$ (the amortized cost to process updates with the \treename). For reasonable values of $\cachesize$ and $\cachelinesize$ this logarithm evaluates to a small constant, typically 3. Moreover, these operations are just data movement operations, so they can be done at near RAM bandwidth.



\subsection{Computing k-connectivity}
\sysname's architecture can in principle accomodate many other graph sketch algorithms that use connectivity as a subroutine. For instance, since the k-connectivity sketch requires maintaining $k$ independent copies of the connectivity sketch, we can achieve comparable computation and communication upper bounds by slightly modifying the procedure used to distribute connectivity sketching. Set the size of a vertex-based batch and the size of leaf node buffers in the \treename to $\alpha\cdot k \log^3\nodesize$ (the per-vertex sketch size for k-connectivity). When a distributed worker recieves a vertex-based batch, it computes the sketch delta of the batch for all $k$ copies of the connectivity sketch and sends this back to the main node. To answer k-connectivity queries, \sysname produces a k-connectivity certificate using the query algorithm summarized in Section ~\ref{sec:sketch}. This immediately gives the following result:

\begin{theorem}[k-connectivity.]
\label{thm:kconn}
Given a input stream of length $\streamlength$ defining $\graph = (\nodes, \edges)$ and a series of $Q$ connectivity queries issued throughout the stream, let $\streamlength_i$ denote the number of edge updates that arrive after query $Q_i$ but before query $Q_{i+1}$. 
\sysname never uses more than $O(k\nodesize\log^3\nodesize)$ bits of space on the main node while processing the stream.
The main node amortized cost per update to process $\streamlength_i$ is $O(\log_{\cachesize/\cachelinesize}(\nodesize))$ if $\streamlength_i = \Omega(k\nodesize \log^2\nodesize)$ and $O(k\log\nodesize)$ otherwise. 
Computing each query $Q_i$ takes $O(k^2\nodesize\log^2\nodesize)$ time.
The distributed CPU cost of processing a batch of $x$ updates for vertex $u \in \nodes$ into a sketch delta $\sketch_u$ is $O(xk\log\nodesize)$.
The communication cost of ingesting $\streamlength$ updates and answering $Q$ queries is at most $(3+1/(\gamma\alpha))\streamlength$, where $\gamma$ and $\alpha$ are constants.
\evan{Corrected some claims in this theorem.}
\end{theorem}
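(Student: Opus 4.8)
The plan is to reduce everything to the connectivity analysis already established in \thmref{queries}, \thmref{comp}, and the Distributed cost theorem, exploiting the fact that the $k$-connectivity sketch is nothing more than $k$ independent copies of the connectivity graph sketch. Concretely, I adopt the modification sketched just above the theorem statement: resize every vertex-based batch and every \treename leaf buffer to hold $\alpha k\log^3\nodesize$ bits (equivalently $\Theta(\alpha k\log^2\nodesize)$ updates, since each update costs $\Theta(\log\nodesize)$ bits to name an endpoint); have a worker that receives a batch $B_u$ run \sketchname on $B_u$ once per copy and return a single aggregated \emph{sketch delta} of $\Theta(k\log^3\nodesize)$ bits; and, on a query, flush the \treename and then apply the same $\gamma$-threshold hybrid policy — leaves that are at least $\gamma$-full are shipped to workers, the rest are folded into the graph sketch on the main node. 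Every bound then falls out by re-running the corresponding connectivity argument with the per-vertex sketch size scaled up by the factor $k$.

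\emph{Space, distributed cost, and communication.} The main node holds the $k$-fold graph sketch, $\Theta(k\nodesize\log^3\nodesize)$ bits, plus the enlarged \treename, which has $\nodesize$ leaves of $\Theta(\alpha k\log^3\nodesize)$ bits and $O(\log_{\cachesize/\cachelinesize}\nodesize)$ internal levels, hence also $O(k\nodesize\log^3\nodesize)$ bits; summing gives the $O(k\nodesize\log^3\nodesize)$-bit space bound. For the distributed CPU cost, the Distributed cost theorem gives $O(x\log\nodesize)$ per connectivity sketch delta from a batch of $x$ updates; doing this for each of the $k$ copies costs $O(xk\log\nodesize)$. For communication I replay the proof of \thmref{queries} verbatim: reading the stream costs $\streamlength$; each update appears in two vertex-based batches, contributing $2\streamlength$; and any batch that is actually shipped triggers a reply that is a $1/\alpha$ fraction of the batch's size — this ratio is \emph{unchanged} by multiplying both the batch and the delta by $k$ — while the $\gamma$-threshold forces every shipped leaf to be at least $\gamma$-full, so its reply is at most a $1/(\gamma\alpha)$ fraction of the updates it carries. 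Adding these contributions yields a total of $(3+1/(\gamma\alpha))\streamlength$, independent of $k$, $Q$, and where the queries fall.

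\emph{Main-node computation and query time.} For a phase of $\streamlength_i$ updates between consecutive queries: if $\streamlength_i = \Omega(k\nodesize\log^2\nodesize)$ then the average vertex receives $\Omega(k\log^2\nodesize)$ updates, a constant fraction of leaves fill and are shipped, and the only per-update work on the main node is the $O(\log_{\cachesize/\cachelinesize}\nodesize)$ \treename moves — the sketch computation is offloaded, and the $O(\alpha k\nodesize\log^2\nodesize)$ updates that may instead be processed locally are asymptotically dominated. Otherwise, in the worst case all leaves are processed on the main node, and folding one update into all $k$ sketch copies costs $O(k\log\nodesize)$, which dominates the movement cost; this is exactly the proof of \thmref{comp} with every sketch cost scaled by $k$. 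Finally, a $k$-connectivity query builds the certificate $H=\bigcup_{i<k}F_i$ as in \secref{sketch}: run \Boruvka on copy $0$ to extract $F_0$ in $O(\nodesize\log^2\nodesize)$ time, delete the $O(\nodesize)$ edges of $F_0$ from the remaining $\le k-1$ copies, recurse on copy $1$, and so on. There are $k$ Boruvka runs at $O(\nodesize\log^2\nodesize)$ each, and the deletions total at most $k$ rounds $\times$ $O(\nodesize)$ edges $\times$ $O(k)$ surviving copies $\times$ $O(\log\nodesize)$ subsketches touched per edge; both terms are bounded by $O(k^2\nodesize\log^2\nodesize)$.

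There is no deep obstacle here — the theorem really is ``immediate'' given the earlier results — so the work is entirely in checking that constants and thresholds track $k$ correctly. The two points deserving care are: (i) the communication constant $3+1/(\gamma\alpha)$ must be verified genuinely $k$-independent, which holds because $k$ multiplies batch size and reply size in lockstep (preserving their ratio) and leaves the $\gamma$-fullness guarantee intact; and (ii) the amortization threshold must scale by exactly $k$, which holds because an enlarged leaf now holds $\Theta(k)$ times as many updates, so ``a constant fraction of leaves full'' first occurs at $\streamlength_i = \Theta(k\nodesize\log^2\nodesize)$. The query-time accounting is the one place where a loose bound is used: charging each forest extraction a full $O(\nodesize\log^2\nodesize)$ Boruvka run plus $O(k\nodesize\log\nodesize)$ for clearing its edges from the surviving copies, summed over $k$ forests, is what produces the (not necessarily tight) $O(k^2\nodesize\log^2\nodesize)$ in the statement.
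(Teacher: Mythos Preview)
Your proposal is correct and follows exactly the approach the paper takes: the paper offers no separate proof, simply stating that the modifications described ``immediately give'' the theorem, and you have carefully carried out that immediate reduction to \thmref{queries}, \thmref{comp}, and the distributed-cost theorem with the per-vertex sketch size scaled by $k$. One small note: you (correctly) derive an $O(k\nodesize\log^3\nodesize)$-bit space bound, whereas the theorem statement omits the $k$; this appears to be a typo in the paper, since elsewhere it states explicitly that $k$-connectivity multiplies all storage costs by $k$.
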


Note that the network communication cost does not increase above that of connectivity, and for sufficiently infrequent queries the cost to the main node is also independent of $k$. See Section~\ref{subsec:k_conn_partial} for experimental confirmation of these results.


\section{\sysname Implementation}
\label{sec:system}

Processing stream updates into the graph sketch is a computationally intensive process: for a moderately sized data-set with $2^{18}$ vertices, applying a single edge update requires evaluating 184 hash functions. \sysname farms out this computationally intensive portion of the workload to worker nodes while the other portions of stream ingestion, including update buffering and sketch storage, remain the responsibility of the main node. \sysname uses $164\nodesize*(\log^2\nodesize-\log\nodesize)$B of space on the main node to store the sketches and the \treename. On the worker nodes, \sysname requires storage for a single sketch and batch per CPU. Thus, a worker node with $t$ threads requires $t \cdot 164(\log^2\nodesize-\log\nodesize)$ bytes. On a billion-vertex graph, each worker thread requires only 64 KiB of RAM. When computing $k$-connectivity, all of the above costs are multiplied by a factor $k$.

\sysname must handle two tasks: stream ingestion, where edge updates from the input stream are compressed into the graph sketch; and query processing, where connectivity queries are computed from the graph sketch (or sometimes from auxiliary query-accelerating data structures, described below). We defer a full description of the implementation, including auxiliary data structures, parameter choices, and software tools, to Appendix~\ref{app:system}.


\section{Experiments}
\label{sec:experiments}

\paragraph{Experiment setup.} We implemented \sysname in C++14 and compiled using g++ version 9.3 with openmpi 4.1.3 for Linux.
We ran our experiments on an AWS cluster composed of an c5n.18xlarge instance for a main node and 40 c5.4xlarge instances as worker nodes. These instances have respectively 36 and 8 2-way hyperthreaded Intel(R) Xeon(R) Platinum 8124M CPU @ 3.00GHz cores with respectively 196 GB and 32 GB of RAM.

Each of our worker nodes only requires 2 GB of RAM because sketch deltas are small and workers are stateless. However, AWS workers with sufficient CPU power come with more RAM than we need.


\paragraph{A note about experimental comparisons.}
Ideally, we would include experimental comparisons against existing distributed systems that solve connectivity or k-connectivity on graph streams with edge insertions and deletions. However, only one such system (KickStarter~\cite{kickstarter}) exists in the literature, and its source code is not available (see Appendix~\ref{sec:related}). Instead, we compare against a theoretical upper bound for stream ingestion: the data acquisition cost. 

We define the \defn{data acquisition cost} to be the cost of receiving the input stream at the main node over a network link and logging it in RAM. Sequential RAM bandwidth is a trivial bound on this cost.
\begin{itemize}[leftmargin=*]
\item \defn{RAM sequential bandwidth} is the maximum rate at which the CPU can write consecutive words in RAM on any number of threads. 

\item \defn{RAM random access bandwidth} is the maximum worst-case rate at which the CPU can write any sequence of words in RAM on any number of threads. 



\end{itemize}


\paragraph{Experiment Metrics.} Our experiments process the entire stream of updates and then perform a single query at the end of the stream. Where specified experiments also perform additional queries throughout the processing of the stream.

We report the update throughput by measuring wall-clock time from the beginning of the stream until all updates have been applied to the sketches. We also measure the total amount of network communication to/from the main node and the amount of RAM usage on the main node.

When performing a query we measure the wall-clock latency from the moment the query is issued to the time the answer is returned to the user. This time includes the latency of flushing all pending updates from the \treename and then the query computation itself.

\subsection{Datasets}
\begin{table}
\begin{center}
\def\arraystretch{1.1}
\footnotesize
\caption{Datasets used in our experiments.}
\label{tab:datasets}
\begin{tabular}{ |c|c|c|c| } 
 \hline
 Name & Vertices & Edges & Stream Updates\\ 
 \hline
 \textbf{kron13} & $2^{13}$ & $1.7\times 10^7$ & $1.2\times 10^8$\\
 \hline
 \textbf{kron15} & $2^{15}$ & $2.7 \times 10^8$ & $1.9 \times 10^9$\\
 \hline
 \textbf{kron16} & $2^{16}$ & $1.1 \times 10^9$ & $7.7\times 10^9$\\
 \hline
 \textbf{kron17} & $2^{17}$ & $4.3 \times 10^9$ & $3.1\times 10^{10}$\\
 \hline
 \textbf{ca-citeseer} & $2.3 \times 10^6$ & $8.1 \times 10^5$ & $1.1 \times 10^8$ \\ 
 \hline
 \textbf{p2p-gnutella} & $6.3\times 10^4$ & $1.5\times 10^5$ & $1.9 \times 10^6$ \\ 
 \hline
 \textbf{rec-amazon} & $9.2\times 10^4$ & $1.3\times 10^5$ & $1.7 \times 10^6$ \\ 
 \hline
 \textbf{google-plus} & $1.1 \times 10^5$ & $1.4 \times 10^7$ & $1.9\times 10^8$ \\ 
 \hline
 \textbf{web-uk-2005} & $1.3 \times 10^6$ & $1.2 \times 10^8$ & $1.6 \times 10^9$ \\ 
 \hline
 \textbf{erdos18} & $2^{18}$ & $1.7 \times 10^{10}$ & $4 \times 10^{10}$ \\
 \hline
 \textbf{erdos19} & $2^{19}$ & $3.4 \times 10^{10}$ & $4 \times 10^{10}$\\
 \hline
 \textbf{erdos20} & $2^{20}$ & $8 \times 10^{10}$ & $1 \times 10^{11}$\\
 \hline
 
\end{tabular}
\end{center}
\end{table}

In many of the experiments below, we use the synthetically generated graph streams used in the evaluation of \graphzep~\cite{graphzeppelin}. These graphs were generated using the Graph500 Kronecker graph generator specification~\cite{Ang2010IntroducingTG}, and are very dense: each graph contains approximately $1/4$ of all possible edges. 

For larger-scale experiments, we evaluate \sysname on randomly generated Erdos-Renyi graphs (with $2^{18}$, $2^{19}$, and $2^{20}$ vertices) with edge probability set to $1/4$.

Finally, we evaluate \sysname on real-world graph datasets from the SNAP graph repository~\cite{snapnets} and NetworkRepository~\cite{netrepo}.

All of the above graphs were transformed into a random streams of edge insertions and deletions using the method described in~\cite{graphzeppelin}. We additionally inserted and removed all edges seven times to increase stream length. Each random stream, once insert/delete pairs for the same edge are removed, is exactly the edge list of the graph used to generate it.
These datasets are summarized in Table~\ref{tab:datasets}.


\subsection{\sysname is Highly Scalable.}
\label{subsec:scaling}
\begin{figure}
    \centering
    \includegraphics[width=.45\textwidth]{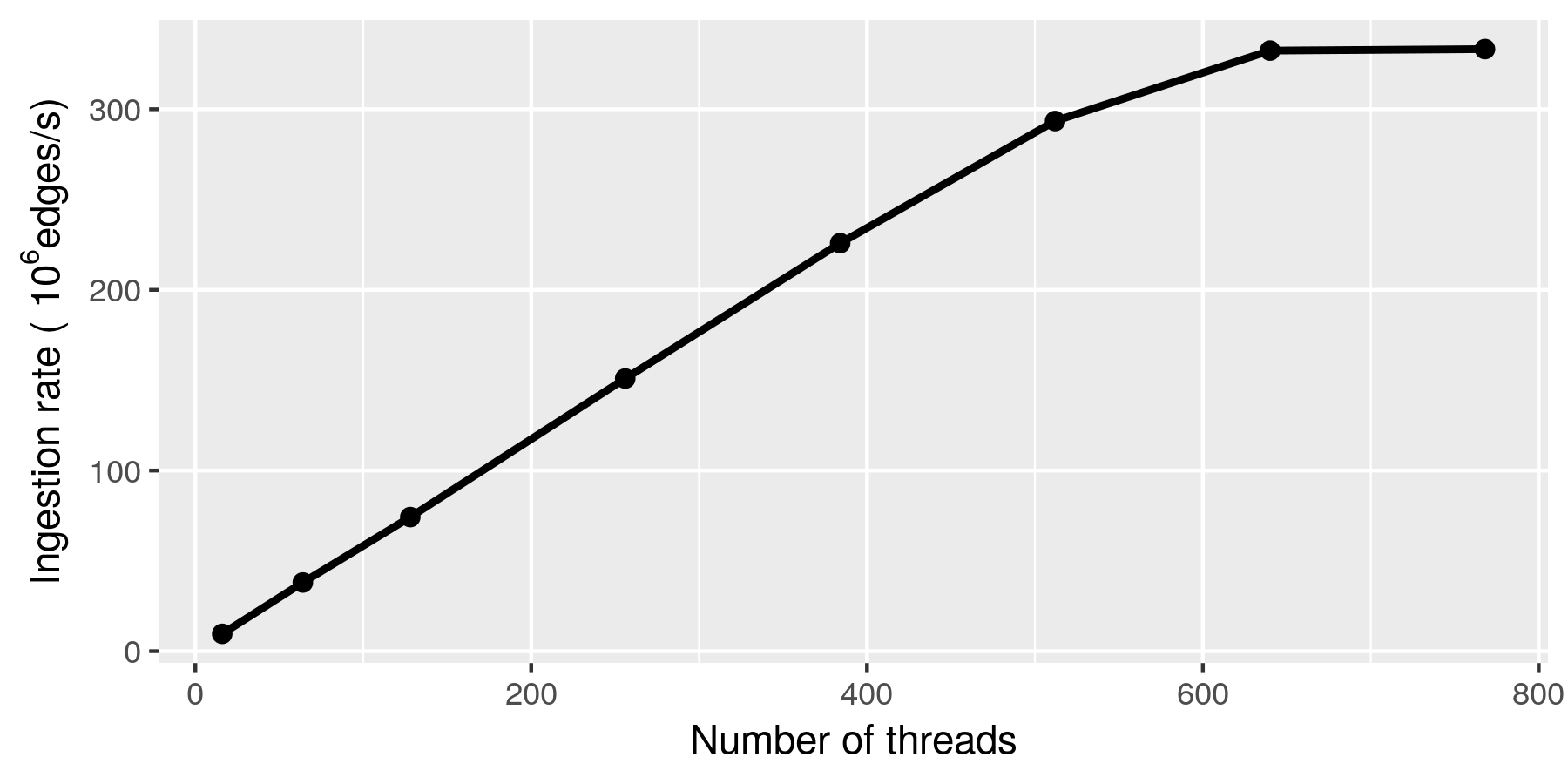}
    \caption{\sysname ingestion rate scales to one-fourth of sequential RAM bandwidth.}
    \label{fig:scaling}
\end{figure}


We measured \sysname's stream ingestion rate and network bandwidth usage on kron17 given varying numbers of distributed workers. 
Figure~\ref{fig:scaling} demonstrates a near-linear increase in ingestion rate as more distributed threads are added, until throughput levels off as it approaches 340 million updates/sec. The ingestion rate on one worker node (with 16 threads) is about 9.6 million updates/sec, and the ingestion rate on 40 worker nodes (with a total of $40 \cdot 16 = 640$ threads) is 332 million updates/sec, a $35\times$ speedup. We note that at this point \sysname is observably \defn{not} CPU-bound: adding more worker nodes no longer increases throughput, and we measure instructions per cycle on the main node CPU to be 0.8, indicating that the CPU is not instruction bound, but RAM bandwidth bound~\cite{gregg_2017}.


Since graph stream updates are 9 bytes, \sysname ingestion bandwidth ($2.8$ GiB/sec) \evan{Added number} is four times that of random-access RAM bandwidth ($759$ MiB/sec), and roughly one-fourth of sequential RAM access bandwidth ($12.4$ GiB/sec) on the (c5n.18xlarge) main node.

Throughout these tests we observe a constant amount of network communication used; approximately $1.6$ times the size of the input stream for dense graphs. See Table~\ref{tab:kconnect_full} for a more complete evaluation.

\begin{figure}
    \centering
    \includegraphics[width=.45\textwidth]{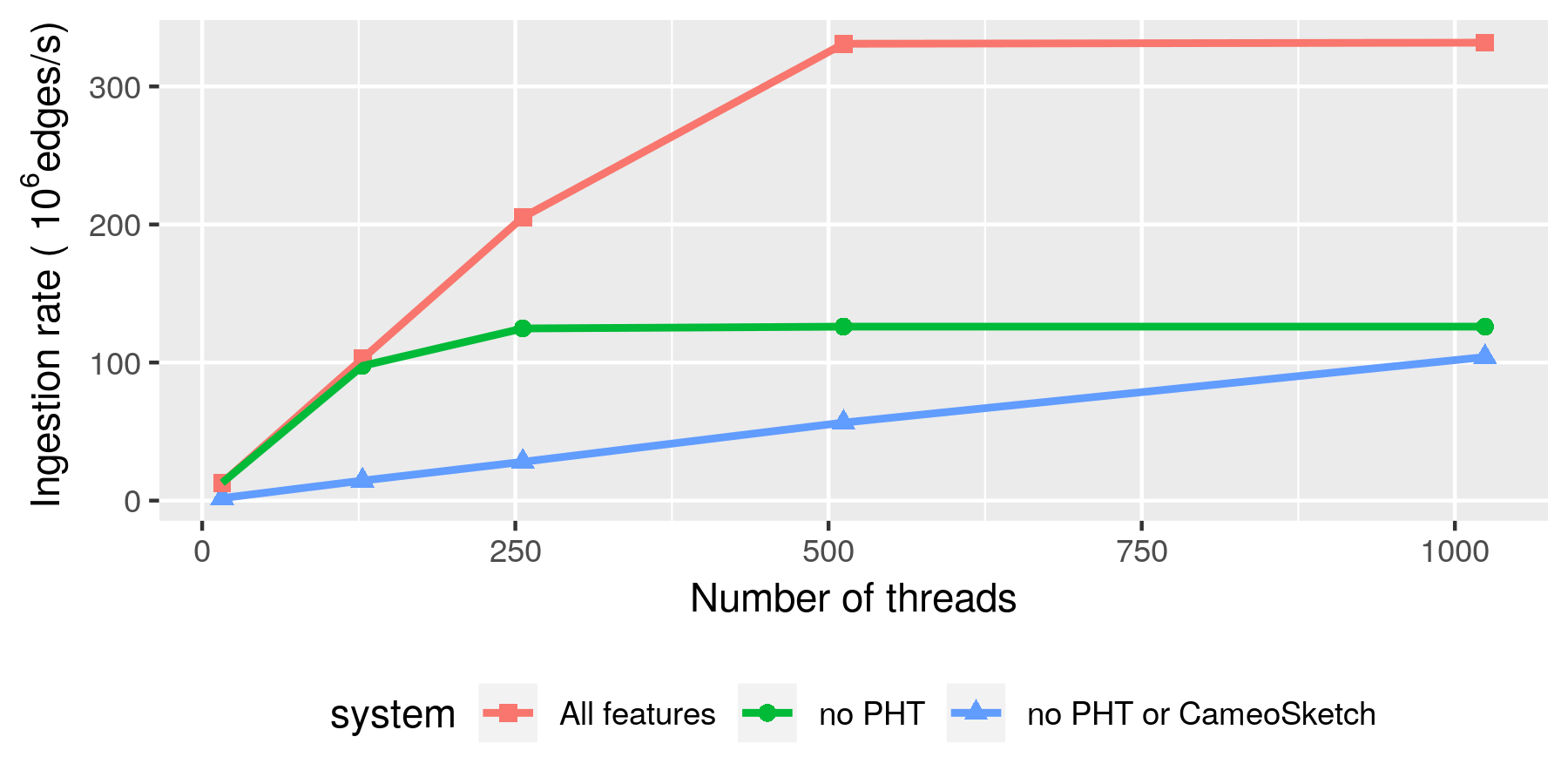}
    \caption{\sketchname and \treename are vital for good ingestion performance. Without \sketchname, \sysname's ingestion rate scales slowly as the number of threads increases. Without \treename, the system bottlenecks at slightly over 100 million updates/sec.}
    \label{fig:ablative}
\end{figure}
\paragraph{Performance impact of \sketchname and \treename.}
Figure~\ref{fig:ablative} illustrates the performance impact of \sketchname and the \treename on \sysname's ingestion rate with the kron17 dataset. When \sysname uses \graphzep's buffering data structure and its \cubesketch sketch algorithm, its ingestion rate increases as more distributed workers are added, but at a slow rate. Using \sketchname alongside \graphzep's buffering system results in a much faster increase in ingestion rate, but the system bottlenecks at roughly 120 million updates/sec. In contrast, the full \sysname system continues to increase its ingestion rate dramatically as more workers are added, and bottlenecks at over 300 million updates/sec. We conclude that the $O(\log\nodesize)$ decrease in ingestion cost for \sketchname and the improved design of the \treename are vital for \sysname's performance. See Appendix~\ref{subsec:gzcomp} for a direct experimental comparison of \sysname to \graphzep on a single machine.

\begin{table}
\begin{center}
\def\arraystretch{1.1}
\footnotesize
\caption{\sysname has a high ingestion rate on sufficiently dense graphs and has low communication overhead.}
\label{tab:speeds}
\begin{tabular}{ |c|c|c|c| } 
 \hline
 Dataset & Ingestion Rate & Communication (as a\\
  & ($\times 10^6$ updates/sec) & factor of stream size)\\ 
 \hline
 \textbf{kron13} & $231$ & 1.6\\
 \hline
 \textbf{kron15} & $336$ & 1.6\\ 
 \hline
 \textbf{kron16} & $334$ & 1.6\\ 
 \hline
 \textbf{kron17} & $335$ & 1.6\\ 
 \hline
 \textbf{ca-citeseer} & $17.3$ & 1.7\\ 
 \hline
 \textbf{p2p-gnutella} & $13.5$ & 0\\ 
 \hline
 \textbf{rec-amazon} & $12.5$ & 0\\ 
 \hline
 \textbf{google-plus} & $134$ & 2.6\\ 
 \hline
 \textbf{web-uk-2005} & $91.5$ & 3.4\\ 
 \hline
 \textbf{erdos18} & 291 & 1.6\\
 \hline
 \textbf{erdos19} & 226 & 1.6\\
 \hline
 \textbf{erdos20} & 236 & 1.6\\
 \hline
 
\end{tabular}
\end{center}
\end{table}

\paragraph{More datasets.} Table~\ref{tab:speeds} summarizes \sysname's stream ingestion rate and network costs using 640 worker threads on a variety of synthetic and real-world datasets. 
Its ingestion rate is very high on dense graph streams and on real-wold streams google-plus and web-uk-2005. It is lower on ca-citeseer, p2p-gnutella and rec-amazon because these datasets do not contain enough stream updates to surpass \sysname's $4\%$ leaf fullness threshold. So, for these streams, a large portion (or all) of update processing occurs on the main node. 

\paragraph{Circumventing bottlenecks.} These experiments support our claims that sketching can avoid the traditional bottlenecks in distributed graph stream processing. They demonstrate that CPU cost can be distributed away, supporting Claim~\ref{claim:all}.\ref{itm:cpu};
that network communication can be only a constant factor of the data acquisition cost, supporting Claim~\ref{claim:all}.\ref{itm:comm};
and that stream processing bandwidth can reach to within a factor 4 of sequential RAM bandwidth, supporting Claim~\ref{claim:all}.\ref{itm:ram}.

\subsection{\sysname Answers Queries Quickly.}
\label{subsec:queryex}
We measured \sysname's query latency for both global connectivity queries and batched reachability queries on an extended kron17 stream.
For batched reachability queries, we uniformly sample the pairs to query $(v_1, v_2), (v_3, v_4), \dots$ from the set of all vertices. We issue queries periodically and record (1) the time to flush the \treename and update the graph sketch and (2) to perform \Boruvka's algorithm using the graph sketch. The sum of these times is the total query latency, though only the \Boruvka computation is additional work induced by the query; the flushing work would have happened anyway as part of stream ingestion even if the query was never issued. 
We found that flushing takes roughly $2.3$ seconds, while \Boruvka's algorithm takes $0.3$ seconds. 

\begin{figure}
    \centering
    \includegraphics[width=.45\textwidth]{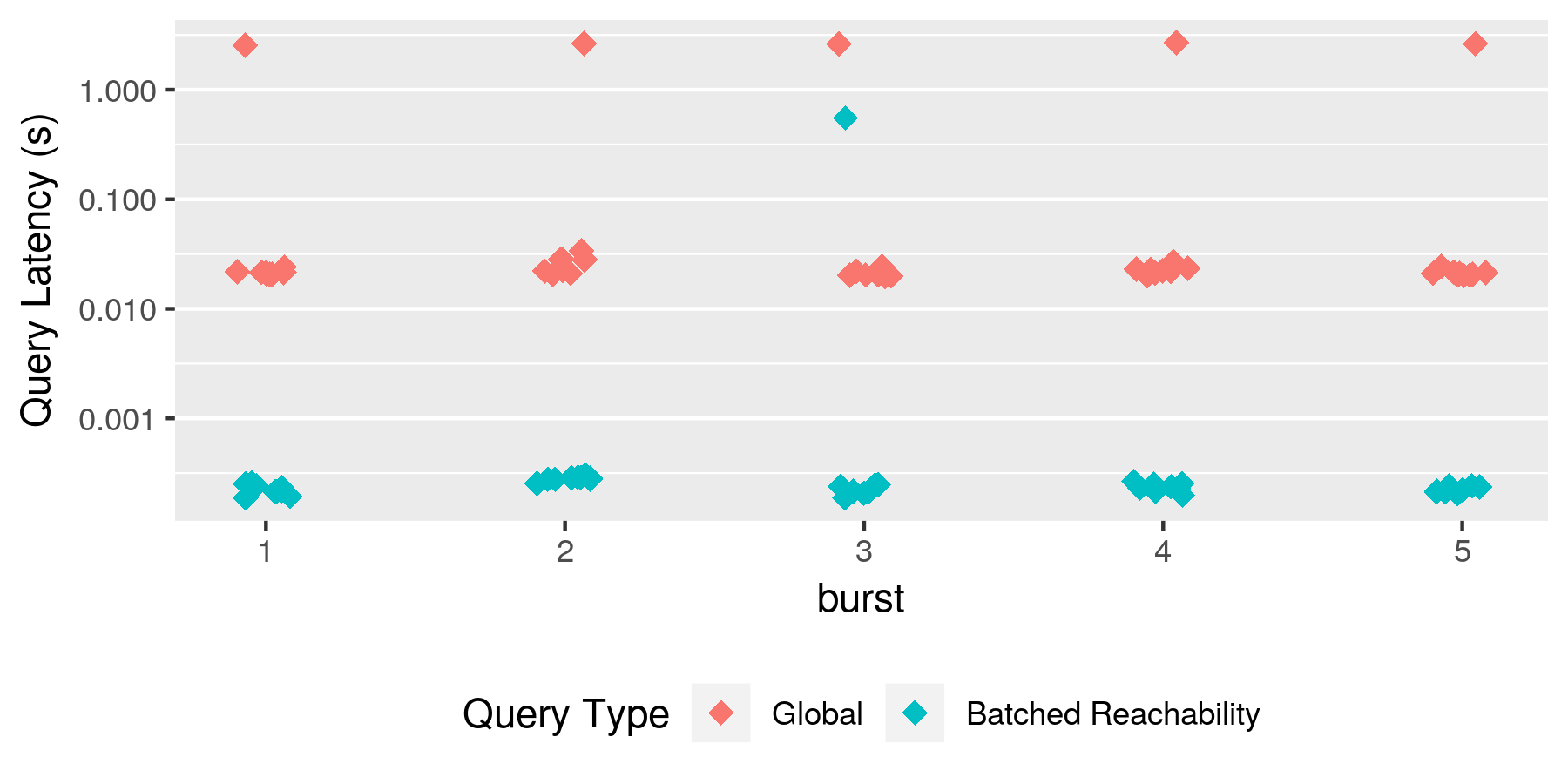}
    \caption{\dsuname dramatically decreases query latency.}
    \label{fig:dsu_query}
\end{figure}
We also evaluate how query latency and ingestion rate are affected by the use of \dsuname. 
We measure the performance impact of this optimization by running \sysname on extended kron17 and issuing \defn{query bursts}: multiple connectivity queries issued close together (within 50 thousand stream updates). We track the latency of each query in the burst. The results are summarized in Figure~\ref{fig:dsu_query}. The first query in each burst, which requires flushing and running \Boruvka's algorithm to answer, has high (multi-second) latency. However, we see that for the remaining queries in the burst the latency is much lower: two orders of magnitude lower for global connectivity queries, and up to four orders of magnitude lower for batched reachability queries. 


\subsection{k-connectivity Performance}
\label{subsec:k_conn_partial}
\begin{table}
\begin{center}
\def\arraystretch{1.1}
\footnotesize
\caption{When computing k-connectivity, increasing $k$ incurs a linear decrease in ingestion rate, a linear increase in sketch size, a quadratic increase in query latency, and does not affect total network communication. Reported values obtained by running \sysname on the kron17 dataset.}
\label{tab:kconnect}
\begin{tabular}{ |c|c|c|c|c| } 
 \hline
 & Ingestion Rate  & Sketch Size & Query & Network \\ 
 & ($10^6$ u/s) & (GiB) & (seconds) & (GiB) \\
 \hline
 \textbf{$k=1$} & $338.5$ & $15.25$ & $1.27$ & $425.2$\\
 \hline
 \textbf{$k=2$} & $200$ & $24.40$ & $5.02412$ & $464.981$\\ 
 \hline
 \textbf{$k=4$} & $101.5$ & $46.49$ & $16.1608$ & $468.886$\\ 
 \hline
 \textbf{$k=8$} & $50.76$ & $83.39$ & $65.5716$ & $476.598$\\ 
 \hline
 
\end{tabular}
\end{center}
\end{table}
We repeat our scaling, network communication, and query latency experiments on the $k$-connectivity problem for a variety of datasets and values of $k.$ Table ~\ref{tab:kconnect} highlights \sysname's performance for computing $k$-connectivity on the kron17 dataset. Note that increasing $k$ incurs a linear decrease in ingestion rate, a linear increase in sketch size, a quadratic increase in query latency, and has no significant effect on network communication. These experimental results support the asymptotic conclusions in Theorem ~\ref{thm:kconn}.
See Appendix~\ref{app:kconnect_full} for a summary of \sysname's performance on more datasets.

\section{Conclusion}


This paper demonstrates how to use linear sketching to avoid the space, CPU \& network bottlenecks faced by existing graph processing systems. We make our argument in the context of the \sysname system for finding connected components of dense, dynamic graphs.

By avoiding these bottlenecks, \sysname achieves remarkable performance. Specifically, it supports a stream ingestion rate one-fourth of sequential RAM bandwidth.
\iftrue 
\section{Appendix}
\evan{I added this little blurb}
The text of the appendix can be found in the full version of this paper\footnote{Full paper: \protect\url{https://arxiv.org/abs/2410.07518}}.
\fi

\section*{Acknowledgments}
This work was supported in part by NSF grants CCF-2247577, 
CCF-2106827, 
and NRT-HDR 2125295.


\bibliographystyle{plain}
\bibliography{main}
\appendix

\clearpage

\section{More Detail on Ahn \etal's~\cite{Ahn2012} algorithm}
\label{app:ccsketch}

Recall the definition of a \defn{characteristic vector} from Section~\ref{sec:sketch}: $\charvec_u \in \mathbb{Z}_2^{\binom{V}{2}}$ of each vertex $u \in \nodes$ such that each nonzero element of $\charvec_u$ denotes an edge incident to $u$. 
That is, $\charvec_u \in \mathbb{Z}_2^{\binom{\nodesize}{2}}$ s.t. for all vertices $0 \leq i < j < \nodesize$: 
$$\charvec_u[(i,j)] = \left\{ \begin{array}{ll}
            1 & \quad u \in \{i, j\} \text{ and }(i, j) \in \edges \\
            0 & \quad \text{otherwise} 
        \end{array}\right\}
$$
First, we give an example of how one can represent a graph as a set of characteristic vectors, how these vectors can be updated as edges are inserted or deleted, and how they can be used to compute connected components using \Boruvka's algorithm. Then, we outline how Ahn \etal use linear sketching to reduce the space cost of this approach to $O(\nodesize\log^3\nodesize)$.

Consider the graph $\graph$ its the characteristic vectors
$\{f_i \mid i \in [\nodesize]\}$ for each of its nodes as shown in Figure~\ref{fig:example-graph}.

\begin{figure}
    \centering

    \begin{subfigure}{\linewidth}
    \centering

    \tikzset{every picture/.style={line width=0.75pt}} 
    
    \begin{tikzpicture}[x=0.75pt,y=0.75pt,yscale=-1,xscale=1]
    
    \draw   (6.33,24.52) .. controls (6.33,19.09) and (10.65,14.68) .. (15.97,14.68) .. controls (21.3,14.68) and (25.62,19.09) .. (25.62,24.52) .. controls (25.62,29.95) and (21.3,34.35) .. (15.97,34.35) .. controls (10.65,34.35) and (6.33,29.95) .. (6.33,24.52) -- cycle ;
    
    \draw   (9.4,99.86) .. controls (9.4,94.43) and (13.71,90.03) .. (19.04,90.03) .. controls (24.36,90.03) and (28.68,94.43) .. (28.68,99.86) .. controls (28.68,105.29) and (24.36,109.7) .. (19.04,109.7) .. controls (13.71,109.7) and (9.4,105.29) .. (9.4,99.86) -- cycle ;
    
    \draw   (123.05,15.83) .. controls (123.05,10.4) and (127.37,6) .. (132.69,6) .. controls (138.02,6) and (142.33,10.4) .. (142.33,15.83) .. controls (142.33,21.27) and (138.02,25.67) .. (132.69,25.67) .. controls (127.37,25.67) and (123.05,21.27) .. (123.05,15.83) -- cycle ;
    
    \draw   (73.76,55.17) .. controls (73.76,49.73) and (78.08,45.33) .. (83.4,45.33) .. controls (88.72,45.33) and (93.04,49.73) .. (93.04,55.17) .. controls (93.04,60.6) and (88.72,65) .. (83.4,65) .. controls (78.08,65) and (73.76,60.6) .. (73.76,55.17) -- cycle ;
    
    \draw   (47.96,133.06) .. controls (47.96,127.63) and (52.28,123.23) .. (57.6,123.23) .. controls (62.93,123.23) and (67.25,127.63) .. (67.25,133.06) .. controls (67.25,138.49) and (62.93,142.9) .. (57.6,142.9) .. controls (52.28,142.9) and (47.96,138.49) .. (47.96,133.06) -- cycle ;
    
    \draw [color={rgb, 255:red, 208; green, 2; blue, 27 }  ,draw opacity=1 ][line width=2.25]    (25.62,24.52) -- (73.76,55.17) ;
    \draw [color={rgb, 255:red, 208; green, 2; blue, 27 }  ,draw opacity=1 ][line width=2.25]    (25.62,22.6) -- (123.05,15.83) ;
    \draw [color={rgb, 255:red, 208; green, 2; blue, 27 }  ,draw opacity=1 ][line width=2.25]    (15.97,34.35) -- (19.04,90.03) ;
    \draw [color={rgb, 255:red, 208; green, 2; blue, 27 }  ,draw opacity=1 ][line width=2.25]    (26.13,106.89) -- (49.88,126.55) ;
    \draw [color={rgb, 255:red, 208; green, 2; blue, 27 }  ,draw opacity=1 ][line width=2.25]    (57.6,123.23) -- (77.85,63.46) ;
    
    \draw (10.92,16.31) node [anchor=north west][inner sep=0.75pt]   [align=left] {1};
    \draw (52.55,124.86) node [anchor=north west][inner sep=0.75pt]   [align=left] {5};
    \draw (13.99,91.66) node [anchor=north west][inner sep=0.75pt]   [align=left] {3};
    \draw (78.35,46.96) node [anchor=north west][inner sep=0.75pt]   [align=left] {2};
    \draw (127.64,7.63) node [anchor=north west][inner sep=0.75pt]   [align=left] {4};

    \end{tikzpicture}
    \caption{Graph $\graph$.}
    \end{subfigure}\vspace{2mm}

    \begin{subfigure}{\linewidth}
        \centering
        \begin{tabular}{|c|c c c c c c c c c c|}
        \hline
            ~ & (1,2) & (1,3) & (1,4) & (1,5) & (2, 3) & (2,4) & (2,5) & (3,4) & (3,5) & (4,5) \\ \hline
            1 & 1 & 1 & 0 & 0 & 0 & 0 & 0 & 0 & 0 & 0 \\ \hline
            2 & 1 & 0 & 0 & 0 & 0 & 1 & 1 & 0 & 0 & 0 \\ \hline
            3 & 0 & 1 & 0 & 0 & 0 & 0 & 0 & 0 & 1 & 0 \\ \hline
            4 & 0 & 0 & 0 & 0 & 0 & 0 & 0 & 0 & 0 & 0 \\ \hline
            5 & 0 & 0 & 0 & 0 & 0 & 1 & 1 & 0 & 1 & 0 \\ \hline
        \end{tabular}
        \caption{$\graph$'s characteristic vectors.}
    \end{subfigure}

    \caption{An example graph $\graph$ and its characteristic vectors.}
    \label{fig:example-graph}
\end{figure}

When we receive an edge insertion or deletion for an edge $(u,v)$, we can update the characteristic vectors $f_u$ and $f_v \in \mathbb{Z}_2^{\binom{V}{2}}$ by adding the one-hot vector $e_{(u,v)}$ to each of them. 
Specifically, $e_{(u,v)}$ is a vector with a $1$ in the position corresponding to the edge $(u,v)$ and $0$ in all other positions. 
Note that for both insertions and deletions, we update the characteristic vectors in the same manner. 

\begin{figure}
    \centering

    \begin{subfigure}{\linewidth}
    \centering
    \tikzset{every picture/.style={line width=0.75pt}} 

    \begin{tikzpicture}[x=0.75pt,y=0.75pt,yscale=-1,xscale=1]
    
    \draw   (8.33,25.5) .. controls (8.33,20.31) and (12.46,16.1) .. (17.55,16.1) .. controls (22.64,16.1) and (26.77,20.31) .. (26.77,25.5) .. controls (26.77,30.69) and (22.64,34.9) .. (17.55,34.9) .. controls (12.46,34.9) and (8.33,30.69) .. (8.33,25.5) -- cycle ;
    
    \draw   (11.26,97.52) .. controls (11.26,92.33) and (15.39,88.12) .. (20.48,88.12) .. controls (25.57,88.12) and (29.69,92.33) .. (29.69,97.52) .. controls (29.69,102.71) and (25.57,106.92) .. (20.48,106.92) .. controls (15.39,106.92) and (11.26,102.71) .. (11.26,97.52) -- cycle ;
    
    \draw   (119.9,17.2) .. controls (119.9,12.01) and (124.03,7.8) .. (129.12,7.8) .. controls (134.21,7.8) and (138.33,12.01) .. (138.33,17.2) .. controls (138.33,22.39) and (134.21,26.6) .. (129.12,26.6) .. controls (124.03,26.6) and (119.9,22.39) .. (119.9,17.2) -- cycle ;
    
    \draw   (72.78,54.8) .. controls (72.78,49.6) and (76.91,45.4) .. (82,45.4) .. controls (87.09,45.4) and (91.22,49.6) .. (91.22,54.8) .. controls (91.22,59.99) and (87.09,64.2) .. (82,64.2) .. controls (76.91,64.2) and (72.78,59.99) .. (72.78,54.8) -- cycle ;
    
    \draw   (48.13,129.26) .. controls (48.13,124.06) and (52.25,119.86) .. (57.34,119.86) .. controls (62.43,119.86) and (66.56,124.06) .. (66.56,129.26) .. controls (66.56,134.45) and (62.43,138.66) .. (57.34,138.66) .. controls (52.25,138.66) and (48.13,134.45) .. (48.13,129.26) -- cycle ;
    
    \draw [color={rgb, 255:red, 208; green, 2; blue, 27 }  ,draw opacity=1 ][line width=2.25]    (26.77,25.5) -- (72.78,54.8) ;
    \draw [color={rgb, 255:red, 155; green, 155; blue, 155 }  ,draw opacity=1 ][line width=0.75]    (26.77,23.67) -- (119.9,17.2) ;
    \draw [color={rgb, 255:red, 208; green, 2; blue, 27 }  ,draw opacity=1 ][line width=2.25]    (17.55,34.9) -- (20.48,88.12) ;
    \draw [color={rgb, 255:red, 208; green, 2; blue, 27 }  ,draw opacity=1 ][line width=2.25]    (27.25,104.24) -- (49.96,123.03) ;
    \draw [color={rgb, 255:red, 208; green, 2; blue, 27 }  ,draw opacity=1 ][line width=2.25]    (57.34,119.86) -- (76.69,62.61) ;
    \draw [color={rgb, 255:red, 245; green, 166; blue, 35 }  ,draw opacity=1 ][line width=2.25]    (88.77,48.33) -- (119.9,17.2) ;
    
    \draw (12.5,17.26) node [anchor=north west][inner sep=0.75pt]   [align=left] {1};
    \draw (52.29,121.01) node [anchor=north west][inner sep=0.75pt]   [align=left] {5};
    \draw (15.43,89.28) node [anchor=north west][inner sep=0.75pt]   [align=left] {3};
    \draw (76.95,46.55) node [anchor=north west][inner sep=0.75pt]   [align=left] {2};
    \draw (124.07,8.96) node [anchor=north west][inner sep=0.75pt]   [align=left] {4};

    \end{tikzpicture}

    \caption{Graph $\graph$.}
    \end{subfigure}\vspace{2mm}

    \begin{subfigure}{\linewidth}
        \centering
        \begin{tabular}{|c|c c c c c c c c c c|}
        \hline
            ~ & (1,2) & (1,3) & (1,4) & (1,5) & (2, 3) & (2,4) & (2,5) & (3,4) & (3,5) & (4,5) \\ \hline
            1 & 1 & 1 & \textcolor{red}{1} & 0 & 0 & 0 & 0 & 0 & 0 & 0 \\ \hline
            2 & 1 & 0 & 0 & 0 & 0 & \textcolor{red}{0} & 1 & 0 & 0 & 0 \\ \hline
            3 & 0 & 1 & 0 & 0 & 0 & 0 & 0 & 0 & 1 & 0 \\ \hline
            4 & 0 & 0 & \textcolor{red}{1} & 0 & 0 & 0 & 0 & 0 & 0 & 0 \\ \hline
            5 & 0 & 0 & 0 & 0 & 0 & \textcolor{red}{0} & 1 & 0 & 1 & 0 \\ \hline
        \end{tabular}
        \caption{$\graph$'s characteristic vectors. The values changed by the edge updates are indicated in red.}
    \end{subfigure}

    \caption{An example of updating graph $\graph$ and its characteristic vectors with an orange edge insertion $(2,4)$ and a grey edge deletion $(1,4)$.}
    \label{fig:example-edge-update}
\end{figure}
For instance, imagine that we stream in one edge insertion of $(2,4)$ and one edge deletion of $(1,4)$.

For the \textbf{insertion} of the edge $(2,4)$, we add a one-hot vector $e_{(2,4)}$ to the rows $f_2$ and $f_4$. Likewise, for the \textbf{deletion} of $(1,4)$, we can add the vector $e_{(1,4)}$ to the rows $f_1$ and $f_4$.
Adding those one-hot vectors $e_{(1,4)}$ and $e_{(2,4)}$ to the appropriate characteristic vectors updates our representation of $\graph$ to reflect these updates.

The process of performing these edge updates and their impact upon the graph $\graph$ and characteristic vectors is shown in Figure~\ref{fig:example-edge-update}.



\paragraph{Computing the connected components.} Recall that \Boruvka's algorithm samples an edge from each node, merges the nodes at the endpoints of every sampled edge into supernodes, and repeats this procedure either until there only remain supernodes with no edges between them (which is, in the worst-case, after $\log_2\nodesize$ rounds have passed).

\begin{figure}
    \centering

    \begin{subfigure}{\linewidth}
    \centering
    \tikzset{every picture/.style={line width=0.75pt}} 

    \begin{tikzpicture}[x=0.75pt,y=0.75pt,yscale=-1,xscale=1]
    
    \draw   (30.49,34.96) .. controls (30.49,30.24) and (34.24,26.41) .. (38.87,26.41) .. controls (43.49,26.41) and (47.25,30.24) .. (47.25,34.96) .. controls (47.25,39.68) and (43.49,43.51) .. (38.87,43.51) .. controls (34.24,43.51) and (30.49,39.68) .. (30.49,34.96) -- cycle ;
    
    \draw   (33.15,100.44) .. controls (33.15,95.72) and (36.9,91.89) .. (41.53,91.89) .. controls (46.16,91.89) and (49.91,95.72) .. (49.91,100.44) .. controls (49.91,105.16) and (46.16,108.99) .. (41.53,108.99) .. controls (36.9,108.99) and (33.15,105.16) .. (33.15,100.44) -- cycle ;
    
    \draw   (131.93,27.41) .. controls (131.93,22.69) and (135.68,18.87) .. (140.3,18.87) .. controls (144.93,18.87) and (148.68,22.69) .. (148.68,27.41) .. controls (148.68,32.13) and (144.93,35.96) .. (140.3,35.96) .. controls (135.68,35.96) and (131.93,32.13) .. (131.93,27.41) -- cycle ;
    
    \draw   (89.09,61.6) .. controls (89.09,56.88) and (92.84,53.05) .. (97.47,53.05) .. controls (102.09,53.05) and (105.84,56.88) .. (105.84,61.6) .. controls (105.84,66.32) and (102.09,70.14) .. (97.47,70.14) .. controls (92.84,70.14) and (89.09,66.32) .. (89.09,61.6) -- cycle ;
    
    \draw   (66.67,129.3) .. controls (66.67,124.58) and (70.42,120.75) .. (75.05,120.75) .. controls (79.67,120.75) and (83.43,124.58) .. (83.43,129.3) .. controls (83.43,134.02) and (79.67,137.84) .. (75.05,137.84) .. controls (70.42,137.84) and (66.67,134.02) .. (66.67,129.3) -- cycle ;
    
    \draw [color={rgb, 255:red, 155; green, 155; blue, 155 }  ,draw opacity=1 ][line width=1.5]    (47.25,34.96) -- (89.09,61.6) ;
    \draw [color={rgb, 255:red, 208; green, 2; blue, 27 }  ,draw opacity=1 ][line width=2.25]    (38.87,43.51) -- (41.53,91.89) ;
    \draw [color={rgb, 255:red, 155; green, 155; blue, 155 }  ,draw opacity=1 ][line width=1.5]    (47.69,106.55) -- (68.33,123.64) ;
    \draw [color={rgb, 255:red, 208; green, 2; blue, 27 }  ,draw opacity=1 ][line width=2.25]    (75.05,120.75) -- (93.64,68.92) ;
    \draw [color={rgb, 255:red, 155; green, 155; blue, 155 }  ,draw opacity=1 ][line width=1.5]    (103.62,55.72) -- (131.93,27.41) ;
    \draw [color={rgb, 255:red, 74; green, 144; blue, 226 }  ,draw opacity=1 ][line width=3.75]    (46.69,36.96) -- (78.66,57.44) -- (89.31,64.26) ;
    \draw [color={rgb, 255:red, 74; green, 144; blue, 226 }  ,draw opacity=1 ][line width=3.75]    (105.62,56.94) -- (131.93,30.3) ;
    \draw [color={rgb, 255:red, 74; green, 144; blue, 226 }  ,draw opacity=1 ][line width=4.5]    (49.35,104.55) -- (70,121.64) ;
    \draw  [color={rgb, 255:red, 0; green, 0; blue, 0 }  ,draw opacity=1 ][line width=2.25]  (89.31,36.52) .. controls (113.95,35.86) and (126.93,-7.43) .. (151.24,16.54) .. controls (175.54,40.52) and (122.94,78.47) .. (98.96,81.14) .. controls (74.99,83.8) and (-6.25,47.84) .. (13.73,23.87) .. controls (33.71,-0.1) and (64.67,37.19) .. (89.31,36.52) -- cycle ;
    \draw  [color={rgb, 255:red, 0; green, 0; blue, 0 }  ,draw opacity=1 ][line width=2.25]  (23.72,83.8) .. controls (48.69,63.82) and (125.6,132.41) .. (98.3,153.39) .. controls (71,174.36) and (-1.25,103.78) .. (23.72,83.8) -- cycle ;
    
    \draw (33.82,26.77) node [anchor=north west][inner sep=0.75pt]   [align=left] {1};
    \draw (70,121.11) node [anchor=north west][inner sep=0.75pt]   [align=left] {5};
    \draw (36.48,92.25) node [anchor=north west][inner sep=0.75pt]   [align=left] {3};
    \draw (92.42,53.41) node [anchor=north west][inner sep=0.75pt]   [align=left] {2};
    \draw (135.26,19.22) node [anchor=north west][inner sep=0.75pt]   [align=left] {4};

    \end{tikzpicture}

    \caption{Graph of the supernodes we obtain after the round. Supernodes are indicated in purple with the edges we found in blue and the edges yet to be found in red.}
    \end{subfigure}\vspace{2mm}

    \begin{subfigure}{\linewidth}
            \centering
                \begin{tabular}{|c| c c c c c c c c c c|}
                \hline
                    ~ & (1,2) & (1,3) & (1,4) & (1,5) & (2, 3) & (2,4) & (2,5) & (3,4) & (3,5) & (4,5) \\ \hline
                    1+2+4 & 0 & 1 & 0 & 0 & 0 & 0 & 1 & 0 & 0 & 0 \\ \hline
                    3+5 & 0 & 1 & 0 & 0 & 0 & 0 & 1 & 0 & 0 & 0 \\ \hline
                \end{tabular}
        \caption{The supernodes' merged characteristic vectors.}
    \end{subfigure}

    \caption{An example of running a round of \Boruvka's algorithm. After sampling the blue edges we merge the characteristic vectors to achieve representations of the supernodes (in black).}
    \label{fig:example-query-round}
\end{figure}
We can easily do this with the characteristic vectors: From each characteristic vector, we choose an arbitrary nonzero entry.
In this round, it so happens that 1 samples $(1,2)$, 2 samples $(1,2)$, 3 samples $(3, 5)$, $4$ samples $(2,4)$, and 5 samples $(3,5)$. We indicate the sampled edges in blue on Figure~\ref{fig:example-query-round}.

To merge the components that were newly found to be connected by sampled edges, we sum the characteristic vectors of the nodes that were merged. In general, if we sampled an edge $(u,v)$, then the merged supernode for $u$ and $v$ will have a characteristic vector $f_{uv}$ that is equal to $f_u + f_v$ (recall that our vectors are over $\mathbb{Z}_2$, and therefore this operation generates a vector corresponding to the symmetric difference of the edges adjacent to $u$ and $v$, which is exactly the edges out of $u$ and $v$ that don't have one endpoint in $u$ and another in $v$).

In this example (shown in Figure~\ref{fig:example-query-round}), we find edges connecting $\{1,2,4\}$ and $\{3,5\}$. Thus, we set $f_{124} = f_1 + f_2 + f_4$ and $f_{35} = f_3 + f_5$.
Note that these new characteristic vectors retain the edges ($(1,3)$ and $(2,5)$) that cross between supernodes. 


\tikzset{every picture/.style={line width=0.75pt}} 

\input{figures/example_final_round}
Finally, in the next round (shown in Figure~\ref{fig:example-final-round}), both the supernode representing $\{1,2,4\}$ and the supernode 
representing $\{3,5\}$ may sample either of the edges $(1,3)$ and $(2,5)$. Depending on which
edge we sample, we can get a different spanning forest.
In either case, there is now a single supernode for our connected component, with the following 
edge incidence indicator vector resulting from summing the two remaining rows.

Thus, storing the characteristic vectors for each node, and updating them on edge insertions and deletions, allows us to compute connected components using \Boruvka's algorithm. 
Storing these characteristic vectors in a streaming context, however is infeasible. Each characteristic vector has $O(\nodesize^2) = O(\nodesize^2)$ bits, and thus the total space required to store all of them is $O(\nodesize^3)$ bits.

\paragraph{Sketching the Characteristic Vectors}
The trick Ahn \etal use to make this approach space-efficient is to maintain several $\ell_0$-sketches of each characteristic vector rather than represent the vector losslessly. Recall from Theorem~\ref{thm:l0} that each such sketch has size $O(\log^2\nodesize)$ because $\delta$ (the probability of sketch failure) is set to a constant. Since an $\ell_0$-sketch is a linear function, we can see that adding the sketches of two characteristic vectors is equivalent to first adding the vectors, and then sketching the result: $\sketch(\charvec_u + \charvec_v) = \sketch(\charvec_u) + \sketch(\charvec_v)$. A similar result holds for updating a vector after an edge insertion or deletion. So we can perform all the same operations on the sketches that we would on the characteristic vectors, and get the same result (with the exception that querying a sketch for an edge during \Boruvka's algorithm fails with constant probability).

We maintain $O(\log\nodesize)$ $\ell_0$-sketches for each characteristic vector. It is straightforward to prove that this is sufficient to find all the connected components with high probability. We describe several specific $\ell_0$-sketching algorithms in the next section.

\section{Sketching}
\label{app:sketch}
In this section we describe the design of $\ell_0$-sketches (including our new sketch \sketchname) and how these sketches can be used to compute graph connectivity.


\subsection{$\ell_0$-Sketching}
\begin{problem}[$\ell_0$-sampling] 
A vector $x$ of length $n$ is defined by an input stream of updates of the form $(i,\Delta)$ where value $\Delta$ is added to $x_i$, and the task is to sample a nonzero element of $x$ using $o(n)$ space.
\end{problem}

We denote the $\ell_0$ sketch of a vector $x$ as $\sketch(x)$.  The sketch is a linear function, \ie $\sketch(x) + \sketch(y) = \sketch(x+y)$ for any vectors $x$ and $y$. Work by Cormode et al. established that there is an $\ell_0$-sampler that succeeds with probability at least $1 - \delta$ and uses $O(\log^2(n)\log(1/\delta))$ bits of space.

Previous work by Tench et al.~\cite{graphzeppelin} introduced the \cubesketch algorithm, an improved $\ell_0$-sampler for the special case of vectors $\in \mathbb{Z}_2^n$ which we describe below:

\subsection{\cubesketch.}
\begin{figure}[!t]
\begin{small}
\begin{algorithmic}[1]
\Function {update\_sketch}{idx}
\Comment{Toggle vector index `idx'}
    \ForAll {col $\in [0, \log(1/\delta))$}
        \State col\_hash $\gets$ hash$_1$(col, idx)
        \State row $\gets 0$
        \State checksum $\gets$ hash$_2$(col, idx)
        \While {row == 0 OR col\_hash[row-1] == 0}
            \State $b_{\text{row,col}}.\alpha \gets b_{\text{row,col}}.\alpha \oplus \text{idx}$
            \State $b_{\text{row,col}}.\gamma \gets b_{\text{row,col}}.\gamma \oplus \text{checksum}$
            \State row $\gets \text{row} + 1$
        \EndWhile
    \EndFor
\EndFunction

\Function{query\_sketch}{ }
\Comment{Get a non-zero vector index}
    \ForAll {col $\in [0, \log(1/\delta))$}
        \ForAll {row $\in [0, \log (\nodesize))$}
            \If {$b_{\text{row,col}}.\gamma == hash_2$(col, $b_{\text{row,col}}.\alpha$)}
                \State \textbf{return} $b_{\text{row,col}}\alpha$
                \Comment{Found a good bucket, done}
            \EndIf
        \EndFor
    \EndFor
    \State \textbf{return} sketch\_failure
    \Comment{All buckets bad}
\EndFunction
\end{algorithmic}
\end{small}

\caption{Pseudocode for \graphzep's \cubesketch. Each edge update $(u, v)$ is converted to and from a vector index for use with the sketch.}
\label{fig:cube_code}
\end{figure}
\begin{figure}
    \def\arraystretch{1.1}
    \centering

    \begin{subfigure}[b]{\linewidth}
        \centering
        \begin{tabular}{|P{20mm}|P{20mm}|P{20mm}|}
         \hline
         $\{1, 4, 7, \textcolor{red}{9}, 11\}$ & $\{1, 4, 7, \textcolor{red}{9}, 11\}$ & $\{1, 4, 7, \textcolor{red}{9}, 11\}$ \\
         \hline
         $\{1, 4, 11\}$ & $\{1, \textcolor{red}{9}, 11\}$ & $\{1, 4, 7, \textcolor{red}{9}, 11\}$ \\
         \hline
         $\{1\}$ & $\{\textcolor{red}{9}, 11\}$ & $\{1, 7, \textcolor{red}{9}, 11\}$\\
         \hline
         $\{\}$ & $\{\}$ & $\{1, \textcolor{red}{9}\}$\\
         \hline
    \end{tabular}\vspace{1mm}

    \begin{tabular}{|l|l|l|l|l|l|l|l|l|l|l|l|l|}
        \hline
        0 & 1 & 0 & 0 & 1 & 0 & 0 & 0 & 0 & \textcolor{red}{1} & 0 & 1\\ 
        \hline
    \end{tabular}
    \caption{\cubesketch buckets (above) and the input vector (below) after adding 1 to index 9.}
    \end{subfigure}\vspace{2mm}
    
    \begin{subfigure}[b]{\linewidth}
        \centering
        \begin{tabular}{|P{20mm}|P{20mm}|P{20mm}|}
         \hline
         $\{\hcancel[red]{1}, 4, 7, 9, 11\}$ & $\{\hcancel[red]{1}, 4, 7, 9, 11\}$ & $\{\hcancel[red]{1}, 4, 7, 9, 11\}$ \\
         \hline
         $\{\hcancel[red]{1}, 4, 11\}$ & $\{\hcancel[red]{1}, 9, 11\}$ & $\{\hcancel[red]{1}, 4, 7, 9, 11\}$\\
         \hline
         $\{\hcancel[red]{1}\}$ & $\{9, 11\}$ & $\{\hcancel[red]{1}, 7, 9, 11\}$\\
         \hline
         $\{\}$ & $\{\}$ & $\{\hcancel[red]{1}, 9\}$\\
         \hline
        \end{tabular}\vspace{1mm}
        
        \begin{tabular}{|l|l|l|l|l|l|l|l|l|l|l|l|l|}
        \hline
        0 & \textcolor{red}{0} & 0 & 0 & 1 & 0 & 0 & 0 & 0 & 1 & 0 & 1\\ 
        \hline
        \end{tabular}

    \caption{\cubesketch buckets (above) and the input vector (below) after adding 1 to index 1.}
    \end{subfigure}

    \caption{Example of the structure of \cubesketch buckets. In this example, the sketch has 3 columns and 4 rows of buckets. Toggling a vector index either sets it to 1 (and thus adds it to the buckets) or sets it to 0 (thus removing it from the buckets). In this example we begin with nonzero indices $\{1, 4, 7, 11\}$. We then flip index $9$ to 1 and then flip index $1$ to 0. We can extract nonzero indices from any good buckets. That is, buckets which contain only one nonzero index.}
    \label{fig:cube-example}
\end{figure}
Given a vector $\charvec \in \mathbb{Z}_2^\veclength$, a \cubesketch consists of a matrix of $\log(\veclength)$ by $\log(1/\delta)$ \defn{buckets}.
Each bucket lossily represents the values at a random subset of positions of $\charvec$. It does so with two values \defn{$\indexsubset.\alpha$}, which is the xor of all nonzero positions, and \defn{$\indexsubset.\gamma$}, the xor of the hash of each nonzero position.  We can recover a nonzero element of $\charvec$ from bucket $\indexsubset_{i,j}$ only when a single position in $\indexsubset_{i,j}$ is nonzero. In this case we call bucket $\indexsubset_{i,j}$ \defn{good}, otherwise we say it is \defn{bad}. 
When $\indexsubset$ is good, then $\indexsubset.\alpha$ is equal to the nonzero index and $\text{hash}(\indexsubset.\alpha) = \indexsubset.\gamma$. If $\indexsubset$ is bad then with high probability $\text{hash}(\indexsubset.\alpha) \not= \indexsubset.\gamma$. 

The procedure for updating index $i$ is outlined in Figure~\ref{fig:cube_code}: for each column $c$, row $r$ is chosen with probability $1/2^r$ (using a $2$-wise independent hash function $h_j: [n] \rightarrow \log\veclength$) and $i$ is applied to each $b_{j,c}$ for $j \leq r$.
See Figure~\ref{fig:cube-example} for an example of \cubesketch's update procedure.

\subsection{\sysname's new sketch: \sketchname}
\begin{figure}[!t]
\begin{small}
\begin{algorithmic}[1]
\Function {update\_sketch}{idx}
\Comment{Toggle vector index `idx'}
    \State checksum $\gets$ hash$_2$(idx)
    \ForAll {col $\in [0, \log(1/\delta)]$}
        \State depth $\gets$ $\log_2$(hash$_1$(col, idx))

        \State $b_0.\alpha$ $\gets b_0.\alpha \oplus \text{idx}$
        \State $b_0.\gamma$ $\gets b_0.\gamma \oplus \text{checksum}$

        \State $b_{\text{depth}}.\alpha$ $\gets b_{\text{depth}}.\alpha \oplus \text{idx}$
        \State $b_{\text{depth}}.\gamma$ $\gets b_{\text{depth}}.\gamma \oplus \text{checksum}$

    \EndFor
\EndFunction

    
\end{algorithmic}
\end{small}

\caption{Pseudocode for \sysname's \sketchname update procedure. Each edge update $(u, v)$ is converted to and from a vector index for use with the sketch.}
\label{fig:cameo_code}
\end{figure}

\begin{figure}
    \def\arraystretch{1.1}
    \centering
    \begin{subfigure}[b]{\linewidth}
        \centering
        \begin{tabular}{|P{20mm}|P{20mm}|P{20mm}|}
         \hline
         $\{1, 4, 7, \textcolor{red}{9}, 11\}$ & $\{1, 4, 7, \textcolor{red}{9}, 11\}$ & $\{1, 4, 7, \textcolor{red}{9}, 11\}$ \\
         \hline
         $\{4, 11\}$ & $\{1\}$ & $\{4\}$ \\
         \hline
         $\{1\}$ & $\{\textcolor{red}{9}, 11\}$ & $\{7, 11\}$\\
         \hline
         $\{\}$ & $\{\}$ & $\{1, \textcolor{red}{9}\}$\\
         \hline
        \end{tabular}\vspace{1mm}

        \begin{tabular}{|l|l|l|l|l|l|l|l|l|l|l|l|l|}
        \hline
        0 & 1 & 0 & 0 & 1 & 0 & 0 & 0 & 0 & \textcolor{red}{1} & 0 & 1\\ 
        \hline
        \end{tabular}
    \caption{\sketchname buckets (above) and the input vector (below) after adding 1 to index 9.}
    \end{subfigure}\vspace{0.2cm}

    \begin{subfigure}[b]{\linewidth}
        \centering
        \begin{tabular}{|P{20mm}|P{20mm}|P{20mm}|}
         \hline
         $\{\hcancel[red]{1}, 4, 7, 9, 11\}$ & $\{\hcancel[red]{1}, 4, 7, 9, 11\}$ & $\{4, 7, 9, 11\}$ \\
         \hline
         $\{4, 11\}$ & $\{\hcancel[red]{1}\}$ & $\{4\}$\\
         \hline
         $\{\hcancel[red]{1}\}$ & $\{9, 11\}$ & $\{7, 11\}$\\
         \hline
         $\{\}$ & $\{\}$ & $\{\hcancel[red]{1}, 9\}$\\
         \hline
        \end{tabular}\vspace{1mm}
        
        \begin{tabular}{|l|l|l|l|l|l|l|l|l|l|l|l|l|}
        \hline
        0 & \textcolor{red}{0} & 0 & 0 & 1 & 0 & 0 & 0 & 0 & 1 & 0 & 1\\ 
        \hline
        \end{tabular}

    \caption{\sketchname buckets (above) and the input vector (below) after adding 1 to index 1.}
    \end{subfigure}
    
    \caption{Example of the structure of \sketchname buckets. The setup for this example is the same as for \cubesketch (see Figure~\ref{fig:cube-example}). However, in \sketchname we update fewer buckets for each index. As we can see in this figure only 2 buckets per column contain a given index. The result is an asymptotic improvement to update time and an increase in the number of good buckets.}
    \label{fig:cameo-example}
\end{figure}
Essential to the performance \sysname achieves is our new $\ell_0$-sampler called \sketchname. \sketchname improves over \cubesketch with a new update procedure that is a factor $\log n$ faster to update and reduces space usage by a constant factor via a refined analysis. All other details, including the query procedure, remain unchanged.

\paragraph{Update procedure.} \sketchname maintains the same matrix of buckets as \cubesketch, but uses a simpler and faster update procedure as shown in Figure~\ref{fig:cameo_code}. When performing an update $u$, for each column $c$ we choose a row $r$ independently with probability $1/2^r$. Unlike \cubesketch which updates all rows $[0, r]$ in the column, we apply $u$ to only $b_{0,c}$ and $b_{r,c}$. 

It is straightforward to see that \sketchname maintains the correctness and probabilistic bounds of \cubesketch. Assuming they use the same hash functions, if a bucket is good in the latter it must also be good in the former.


\restatecameoupdatethm
\begin{proof}
    The proof follows from the analysis of \cubesketch. If a \cubesketch and \sketchname share the same randomness, then the \cubesketch returning a valid edge implies that the \sketchname must also return a valid edge. This is because each \sketchname bucket contains a subset of the contents of the same \cubesketch bucket. If the \cubesketch column is good then there exists a deepest (largest row value) good bucket $b_{i,j}$. \sketchname's $b_{i,j}$ must be identical to \cubesketch's. This is because each nonzero index in \sketchname appears at its deepest row and because of the subset property.

    The only exception to this property is if \sketchname returns an incorrect edge due to a checksum error. However, this happens with polynomially small probability and thus does not violate the proof.
\end{proof}

Theorem~\ref{thm:cameo_update} demonstrates that \sketchname reduces the CPU burden of performing updates and supports Claim~\ref{claim:all}.\ref{itm:cpu}.

\paragraph{Reduced constant factors.} In \graphzep, Tench \etal use $72 \log (1/\delta)\log n$ bytes of space to guarantee a failure probability of at most $\delta$ when sketching a vector of length $n < 2^{64}$ using \cubesketch. Via a careful constant-factor analysis, we can show that \sketchname can match this failure probability with significantly less space: 

\restatecameospacethm

The intuition behind this result is that for every sketch column there exists a depth $d$ such that fewer than 7 of the stored elements exist below it. Given this knowledge, we can use fully independent analysis upon this subset of the buckets to achieve a tighter lower bound on the probability of success.
We omit the complete proof due to space constraints. 

Theorem~\ref{thm:cameo_cols} immediately implies a space savings of up to 90\% compared to \cubesketch and thus supports Claim~\ref{claim:all}.\ref{itm:space}. In our implementation, we conservatively choose to use slightly more space than this theorem requires to reduce the failure probability further. Still, our implementation requires only $2/7$ths of the space used in \graphzep~\cite{graphzeppelin} (see Section~\ref{sec:system} for details).

\section{Design of \Treename}
\label{app:pht}
\begin{figure}
    \centering
    \includegraphics[width=0.45\textwidth]{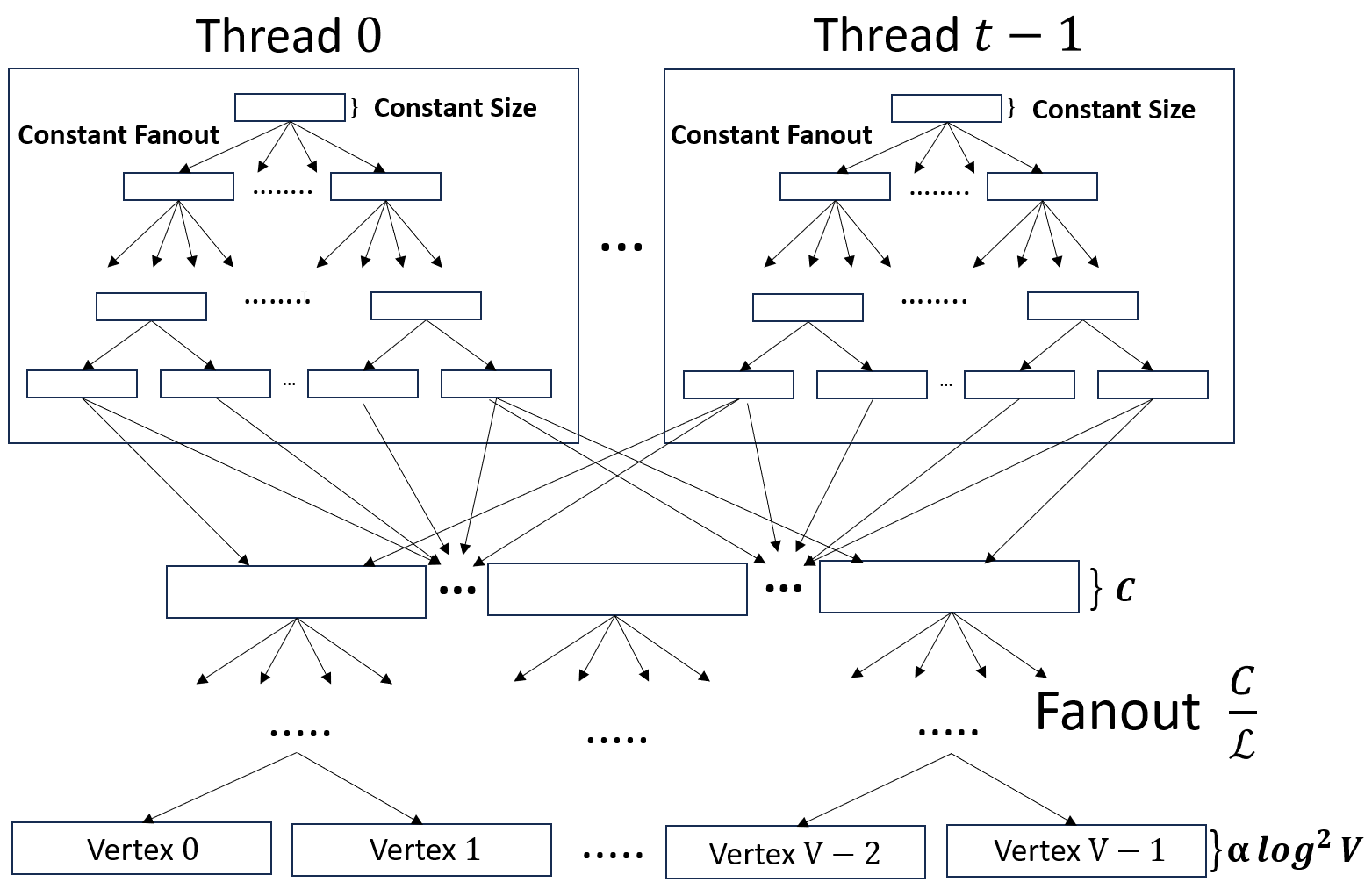}
    \caption{The Pipeline Hypertree data-structure.}
    \label{fig:pht}
\end{figure}

To make vertex-based batching fast, we design the \defn{\treename}, which is a simplified and parallel variant of the buffer tree~\cite{Arge03} designed to minimize cache line misses and thread contention. The \treename receives arbitrarily ordered stream updates and consolidates them into vertex-based batches.

The \treename is logically structured as a directed acyclic graph (DAG) $D$ with $O(\nodesize + t)$ nodes and $O(\nodesize +t)$ edges, managed by $t$ threads. The nodes are partitioned into $O(\log_{\cachesize/\cachelinesize}\nodesize)$ \defn{levels}, where $\cachesize$ denotes the size of L3 cache and $\cachelinesize$ denotes the size of an L3 cache line. Nodes of $D$ in levels 0 through $\rho$ (for some constant $\rho$) are \defn{thread local}: for each node in level $l \leq \rho$ there is exactly one thread that can read from or write to the node.

Specifically, there are $t$ nodes in level 0 and each thread owns a unique level 0 node. Nodes in levels 0 to $\rho-1$ each have constant y children, and each thread has exclusive access to all descendants of a level 0 node (up to level $\rho$). For levels $\geq \rho$, each node has $O(\cachesize/\cachelinesize)$ children. These \defn{global nodes} in levels $>\rho$ can be read from or written to by any thread. At the lowest level of $D$, there are $\nodesize$ leaf nodes, one for each vertex in the input graph $\graph$. Note that there is exactly one path from any internal node in the DAG to any leaf.
Each node in $D$ has a buffer. Thread-local nodes have a buffer of constant capacity, leaf nodes have buffers of capacity $O(\alpha\log^2\nodesize)$ updates and other global nodes have buffers of capacity $\cachesize$ updates. It is important that the leaf nodes have a limited capacity to ensure that the size of the \treename does not grow beyond that of the \sketchnames. By this construction, each leaf has size $O(\alpha \log^3 \nodesize)$ bits (a constant factor $\alpha$ greater than that of a \sketchname). Figure~\ref{fig:pht} illustrates the design.

The API for the \treename supports two operations: insert(update) and force\_flush(). During an insert, update $e = (u,v)$ a thread inserts $(u,v)$ and $(v,u)$ into the buffer of its root node. When any non-leaf buffer fills, each update $(j,k)$ in the buffer is moved to the child on the path to leaf $j$. We call this operation a \defn{flush}. Moving elements to children may fill the buffer of a child and cause the child to flush as well. Finally, when a leaf buffer fills, its contents are packaged as a batch and sent over the network for update processing. The force\_flush operation allows a user to force all buffered updates to be moved through the \treename and into the \sketchnames. This is accomplished by performing a flush operation on each node of $D$ in a breadth first search pattern.

The \treename design avoids unnecessary cache misses and thread contention. We only accesses a node of the DAG $D$ when we have many updates to move into the node. As a consequence, the amortized cost of placing a single update into the vertex-based batch is less than a single cache miss. This is why we do not simply use a hash table (which costs at least one cache miss per update) to perform the batching. Additionally, the top levels of $D$ are kept thread-local to avoid synchronization. Keeping shared data lower in $D$ reduces the likelihood that two threads touch the same node simultaneously.

It is immediate from the construction of the \treename that each update is moved $O(\log_{\cachesize/\cachelinesize}\nodesize)$ times and that the total size of the data structure is $O(\nodesize\log^3\nodesize)$ bits. Thus, \treename achieves Claim~\ref{claim:all}.\ref{itm:ram} and does not violate Claim~\ref{claim:all}.\ref{itm:space}.
In Appendix ~\ref{subsec:ingestor_design} we describe our implementation of the \treename, and in Section ~\ref{subsec:scaling} we show it is crucial for fast stream ingestion.




\section{Proofs of Theorems ~\ref{thm:queries} and ~\ref{thm:comp}}
\label{app:proofs}
\commthm*
\begin{proof}
In the worst case, the input stream sends $O(\alpha\gamma\nodesize\sketchsize)$ edge updates such that each vertex $u \in \nodes$ is an endpoint of exactly $\alpha\gamma\sketchsize$ of these edges. Then it issues a connectivity query. At this point, each leaf (which has size $\alpha\sketchsize$ is exactly a $\gamma$-fraction full, and so sends the contents of each leaf as a vertex-based batch to a distributed worker. The worker responds with a sketch delta of size $\sketchsize$, so the average communication induced per edge update is $O(\alpha\gamma)$.

Note that no query can induce more communication that this - if any leaf buffer is less than a $\gamma$-fraction full, the updates it contains are processed locally and incur no communication. If any leaf buffer is more than a $\gamma$-fraction full, the average communication per update in that buffer is lower (since the sketch delta is always a fixed size).

If the input stream alternates sending $O(\alpha\gamma\sketchsize\nodesize)$ edge updates as above and issuing connectivity queries, the average communication cost per stream update is $O(\alpha\gamma)$, regardless of stream length.
\end{proof}

\compthm*
\begin{proof}
The space bound is immediate from the size of $\sketch(\graph)$ and the \treename. 

Assuming $N_i = \omega(\alpha\nodesize \log^3\nodesize)$, the average number of updates per graph vertex is $\omega(\alpha\log^3\nodesize)$. If a constant fraction of all graph vertices receive fewer than $O(\alpha\gamma\log^2\nodesize)$ updates (and therefore process these updates locally) the total CPU work done on the main node to process all updates is $O(\log\nodesize \cdot \nodesize \alpha\gamma\log^2\nodesize + \log_{\cachesize/\cachelinesize}\nodesize \cdot \nodesize \cdot \alpha\log^3\nodesize)$ and therefore the average work per update is $\log_{\cachesize/\cachelinesize}$. On the other hand, if $\streamlength_i = O(\alpha\nodesize\log^3\nodesize)$ in the extreme case all batches may be processed locally and then the CPU work per update is $O(\log\nodesize)$.
\end{proof}

\section{\sysname Implementation}
\label{app:system}

\sysname must handle two tasks: stream ingestion, where edge updates from the input stream are compressed into the graph sketch; and query processing, where connectivity queries are computed from the graph sketch (or sometimes from auxiliary query-accelerating data structures, described below). In this section we describe the implementation of \textsc{StreamIngestor}, which handles stream ingestion, and \textsc{QueryProcessor}, which handles answering queries.

Processing stream updates into the graph sketch is a computationally intensive process: for a moderately sized data-set with $2^{18}$ vertices, applying a single edge update requires evaluating 184 hash functions. \sysname farms out this computationally intensive portion of the workload to worker nodes while the other portions of stream ingestion, including update buffering and sketch storage, remain the responsibility of the main node. \sysname uses $164\nodesize*(\log^2\nodesize-\log\nodesize)$B of space on the main node to store the sketches and the \treename. On the worker nodes, \sysname requires storage for a single sketch and batch per CPU. Thus, a worker node with $t$ threads requires $t \cdot 164(\log^2\nodesize-\log\nodesize)$ bytes. On a billion-vertex graph, each worker thread requires only 64 KiB of RAM. When computing $k$-connectivity, all of the above costs are multiplied by a factor $k$.


\subsection{Distributed Communication}
We use OpenMPI for message passing over the network.
The main node stores the graph sketch and the \treename. As edge updates arrive from the input stream, the main node inserts them into the \treename and performs flushes when necessary. When a leaf fills, the main node sends the contents of the leaf (a vertex-based batch) as an OpenMPI message to a worker node. 


\subsection{\textsc{StreamIngestor} Design}
\label{subsec:ingestor_design}


\sysname's \textsc{StreamIngestor} processes information from the input stream into the graph sketch.  Stream updates go through three key steps on their way to being applied to the graph sketch: first the updates are collected into vertex-based batches in the \treename on the main node, then these batches are processed into sketch deltas by the worker nodes, and finally each sketch delta is added to the graph sketch on the main node. 
A high level description of \textsc{StreamIngestor} is outlined in Figure~\ref{fig:sys_diagram}. We summarize the design of each component below.

\textbf{\Treename.}

We chose the parameters of the \treename to balance insertion performance and memory footprint. Levels 0, 1, and 2 in the \treename are thread-local. Each thread has a private copy of all thread-local nodes it owns. Each thread owns eight level-0 nodes. 
At level i for $i \in [0,2]$ each node has a buffer of size $8\cdot2^i$KiB and a fan-out of $16\cdot 2^i$. Levels 3 and 4 are global levels: any thread may read to or write from any node in these levels. Each level 3 node has a buffer of size $\frac{512 \cdot \nodesize}{8\cdot2^3}$KiB and a fan-out of $\frac{\nodesize}{8\cdot2^3}$. There are exactly $\nodesize$ nodes in level 4 node and each has a buffer of size $\alpha\log^2\nodesize$ (or $k\alpha\log^2\nodesize$ for k-connectivity).

Updates are assigned to threads arbitrarily. Our implementation reads streams from a file in parallel; whichever thread reads the update is responsible for buffering it. 


When a leaf buffer becomes full, its contents are placed into the Work Queue for processing by worker nodes.

\textbf{Work Queue.}
\sysname's main node uses two types of threads to send vertex-based batches to worker nodes: Graph Insertion threads produce batches of updates from the \treename, and Work Distributor threads send these batches over the network for distributed processing. Data flow between Graph Insertion and Work Distributor threads is synchronized by the \defn{Work Queue}, a many-producer, many-consumer queue for seamless communication between Graph Insertion threads (producers) and the Work Distributor threads (consumers).

The Work Queue uses two linked lists and two mutexes/condition variables to coordinate simultaneous operations. Threads interacting with the Work Queue only need to hold locks for a constant amount of time, allowing us to achieve a high degree of parallelism. This is because both Work Queue operations perform only pointer swapping within their critical regions to move vertex-based batches into and out of the queue.

\textbf{Generating sketch deltas.}
\label{subsec:cubesketch}
To generate a sketch delta from a batch of updates for vertex $u$, the worker node applies each update to an initially empty vertex sketch $S_u(\varnothing)$. 
We summarize the process of updating \sketchnames (see Fig.~\ref{fig:cameo_code}).
A vertex sketch consists of $\log_{3/2}(\nodesize)$ \sketchnames that are modified by each edge update. Each \sketchname stores a $O(\log(\nodesize^2)) \times 2$ matrix of ``buckets''. Finally, each bucket is defined by two variables $\alpha$ and $\gamma$.



Performing a \sketchname update requires performing $1 + (\log(1 / \delta) = 2) = 3$ hash calls (we use xxHash~\cite{xxhash}), one for each column and one to determine the checksum. Once we have computed the hash values, we complete the update with four bitwise XORs. The cost of updating a sketch is dominated by the hash calls. To update a sketch with one edge update requires $3 \times \log_{3/2} \nodesize$ hash calls, for a graph on $2^{18}$ vertices this is 92, and we must update two sketches per edge insertion for a total of 184 hash calls.


\subsubsection{Extending to k-connectivity}
\sysname can apply updates to k-connectivity sketches using essentially the same method as described for connectivity above. The size of a vertex-based batch (and consequently the size of leaf node buffers in the \treename) is increased by a factor $k$ to $164 \cdot \alpha k \nodesize(\log^2(\nodesize)-\log(\nodesize))$ bytes. To process a batch, the distributed worker produces sketch deltas for all $k$ connectivity sketches, concatenates them, and sends them back to the main node.

\subsection{\textsc{QueryProcessor} Design}

\sysname users may issue global connectivity or batched reachability queries while the stream is being processed. Issuing a query provides an answer accurate to the graph constructed by the stream up to the point the query was issued. To provide this accuracy requires pausing stream ingestion while processing the query so that all the sketches are in the correct state.

When a query is issued, if \sysname has a valid \dsuname instance then it uses it to answer the query as described in Section~\ref{subsec:dsu}. We show in Section~\ref{sec:experiments} that answering queries this way is extremely fast. If there is no valid \dsuname when the query is issued, \sysname instead flushes all pending updates out of the \treename, applies them to its graph sketch, and then runs \Boruvka's algorithm to compute connectivity.

\textbf{Flushing and applying updates.}
All pending updates must be applied to the graph sketch before the query can be computed. When a query is issued, if there is no valid \dsuname, Graph Insertion threads immediately flush the \treename, forcing all updates into its leaves. As described in Section~\ref{subsec:queries}, \sysname distributes the work of processing the updates in each leaf provided the leaf is full enough; nearly-empty leaves are instead processed locally on the main node.

We chose a $4\%$ fullness threshold for this policy because it gave the best performance on our cluster; \sysname users can change this default threshold value if desired. By Theorem~\ref{thm:queries} this can in the worst case lead to a $25\times$ increase in network communication. We demonstrate in Section~\ref{sec:experiments} that the network communication blowup on real input streams is within this bound (and is typically much lower).

\textbf{\Boruvka's algorithm with sketches.}
Once all sketch deltas have been merged into the graph sketch, we run \Boruvka's algorithm 
to find the spanning forest of the graph defined by the input stream.
\Boruvka's algorithm proceeds over the \sketchnames as described in Section~\ref{sec:sketch}.

\subsection{\dsuname: Reusing Prior Queries.}
\label{subsec:dsu}
\sysname is capable of reusing results from prior queries to drastically reduce latency for future queries. In response to a connectivity query $q$, \sysname uses \Boruvka's algorithm to produce a spanning forest of the graph. \sysname retains the information from this spanning forest in a data structure we call \dsuname and uses it for future queries. 


\paragraph{Data structure.} \dsuname consists of a union-find data structure encoding the connected components of the spanning forest, and a hash table containing the edges of the spanning forest. Both of these data structures are compact: they require $O(\nodesize)$ space. After the query $q$, when edge update $e = (u,v)$ arrives from the input stream, in addition to inserting $e$ into the \treename, \sysname also uses $e$ to update \dsuname: if $e$ is an edge insertion and $u$ and $v$ are not in the same connected component, a Graph Insertion thread merges their components in \dsuname's union find in $O(\mathcal{A}(\nodesize))$ time, where $\mathcal{A}$ is the inverse Ackerman function~\cite{ackermann}. It also adds $e$ to the hash table in $O(1)$ expected time. 

\paragraph{Query acceleration.} If a second query $q'$ is issued later, \sysname can use \dsuname rather than recompute graph connectivity from scratch, which would take $O(\nodesize \log^3(\nodesize))$ time. For global connectivity queries, \sysname returns \dsuname's spanning forest in $O(\nodesize)$ time and for batched reachability queries it uses \dsuname's union-find data structure to compute reachability for all of the $m$ vertex pairs in $O(m\mathcal{A}(\nodesize))$ time. We demonstrate experimentally in Section~\ref{subsec:queryex} that answering queries using \dsuname is several orders of magnitude faster than the sketch \Boruvka algorithm.

\paragraph{\dsuname validity.} However, if one of the edges $e = (u,v)$ in \dsuname's spanning forest is deleted after the query, \dsuname does not retain enough information to determine whether or not vertices $u$ and $v$ are still connected and, if they are, to find a replacement edge. Recovering this information is only possible by running the sketch \Boruvka algorithm. When a spanning forest edge is deleted in this way, we say that \dsuname has become \defn{invalid} meaning it is no longer useful for answering connectivity queries. While \sysname maintains a valid \dsuname, as each edge update $e = (u,v)$ arrives if $e$ is a deletion \sysname checks whether or not $e$ is in \dsuname's hash table. If it is, \sysname discards \dsuname as invalid. In the worst case, an adversarial input stream would render \dsuname invalid immediately after each query by deleting a key edge but this behavior is pathological and, as we show in Section~\ref{subsec:scaling}, unlikely to occur in real data streams. For practical data streams it is more likely that \dsuname remains valid for a while after a query, during which time subsequent queries can be answered very quickly.

\paragraph{\dsuname implementation.}
\label{app:greedycc_impl}
\dsuname consists of a union-find data structure encoding the connected components of the spanning forest, and a hash table containing the edges of the spanning forest. Both of these data structures are compact: they require $O(\nodesize)$ space. After the query $q$, when edge update $e = (u,v)$ arrives from the input stream, in addition to inserting $e$ into the \treename, \sysname also uses $e$ to update \dsuname: if $e$ is an edge insertion and $u$ and $v$ are not in the same connected component, a Graph Insertion thread merges their components in \dsuname's union find in $O(\mathcal{A}(\nodesize))$ time, where $\mathcal{A}$ is the inverse Ackerman function~\cite{ackermann}. It also adds $e$ to the hash table in $O(1)$ expected time. 

If a second query $q'$ is issued later, \sysname can use \dsuname rather than recompute graph connectivity from scratch, which would take $O(\nodesize \log^3(\nodesize))$ time. For global connectivity queries, \sysname returns \dsuname's hash table in $O(\nodesize)$ time and for batched reachability queries it uses \dsuname's union-find data structure to compute reachability for all of the $k$ vertex pairs in $O(k\mathcal{A}(\nodesize))$ time. We demonstrate experimentally in Section~\ref{subsec:queryex} that answering queries using \dsuname is several orders of magnitude faster than the sketch \Boruvka's algorithm.

However, if one of the edges $e = (u,v)$ in \dsuname's spanning forest is deleted after the query, \dsuname does not retain enough information to determine whether or not vertices $u$ and $v$ are still connected and, if they are, to find a replacement edge. Recovering this information is only possible by running the sketch \Boruvka's algorithm. When a spanning forest edge is deleted in this way, we say that \dsuname has become \defn{invalid} meaning it is no longer useful for answering connectivity queries. While \sysname maintains a valid \dsuname, as each edge update $e = (u,v)$ arrives if $e$ is a deletion \sysname checks whether or not $e$ is in \dsuname's hash table. If it is, \sysname discards \dsuname as invalid. In the worst case, an adversarial input stream would render \dsuname invalid immediately after each query by deleting a key edge but this behavior is pathological and, as we show in Section~\ref{subsec:scaling}, unlikely to occur in real data streams. For practical data streams it is more likely that \dsuname remains valid for a while after a query, during which time subsequent queries can be answered very quickly.
In Section~\ref{subsec:queryex} we show that \dsuname reduces query latency by up to four orders of magnitude.

\section{More Experiments}
In this section we present several additional experiments, as well as additional detail about several experiments described in the main body of the paper.
\subsection{Dense Graph Survey: Further Analysis}
\label{app:dense}
In the introduction, we argue via Figure~\ref{fig:dense}  that the lack of large, dense datasets in academic papers is likely a selection effect. The figure illustrates that graph datasets from a variety of applications and sourced from several collections rarely are larger than roughly 16GB. 

\begin{figure}
\centering
\begin{subfigure}{0.5\textwidth}
    \includegraphics[width=\textwidth]{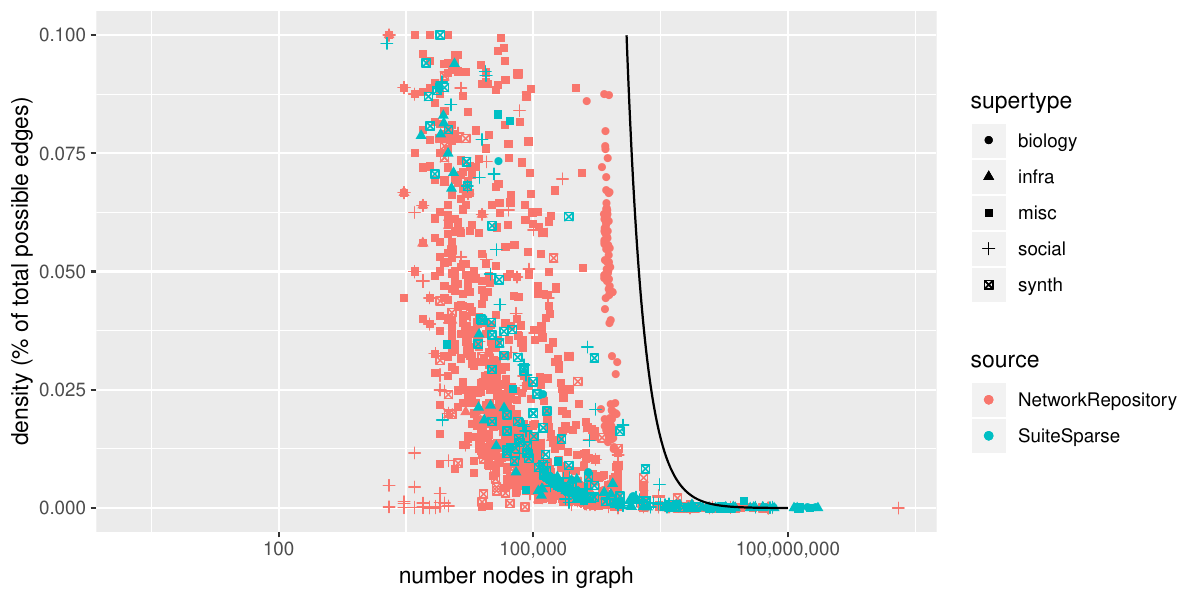}
    \caption{Graphs of up to $0.1\%$ maximum density.}
    \label{fig:densezoom}
\end{subfigure}

\begin{subfigure}{0.5\textwidth}
    \includegraphics[width=\textwidth]{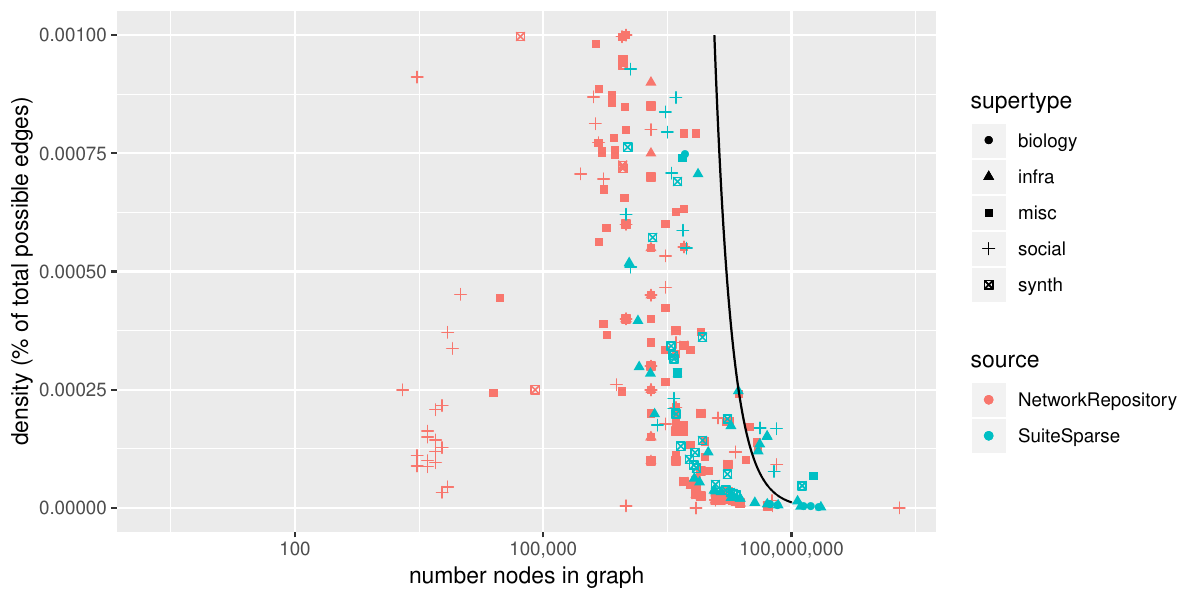}
    \caption{Graphs of up to $0.001\%$ maximum density.}
    \label{fig:densezoomer}
\end{subfigure}
\caption{Alternative visualizations of the graph datasets from NetworkRepository and SuiteSparse, emphasizing the relationship between vertex count, maximum observed density and relatively small representation size (16GB).}
\end{figure}

Figures ~\ref{fig:densezoom} and ~\ref{fig:densezoomer} are alternative visualizations of the same collections of graph datasets. In Figure ~\ref{fig:densezoom} we restrict the y axis to show graphs whose density is at most $0.1\%$ of all possible edges, and in Figure ~\ref{fig:densezoomer} graphs of at most $0.001\%$ density. The selection effect is even easier to see in the plots: as vertex count increases, density decreases such that the adjacency list size is not more than low double-digit gigabytes. 

\subsection{Correctness}
The connectivity sketch underlying \sysname is a randomized algorithm which succeeds with high probability - that is, the probability of it returning an incorrect spanning forest is bounded above by $1/\nodesize^c$ for some constant $c > 1$, which is a function of the sketch failure parameter $\delta$. We set $\delta = 0.01$ and it is straightforward to show that consequently $c>2$. As a result, the analytical failure probability of the connectivity algorithm for $\nodesize > 10,000$ is less than $10^{-8}$ regardless of the input stream.

To test correctness empirically, we compared the spanning forests output by \sysname with those computed from an adjacency matrix. We repeated this experiment 1000 times each for the kron17, p2p-gnutella, rec-amazon, google-plus, and web-uk streams. No failures were ever observed.

\subsection{\sysname k-connectivity performance}
\label{app:kconnect_full}
Table ~\ref{tab:kconnect_full} summarizes \sysname's performance for computing $k$-connectivity for a variety of datasets and values of $k$. All experiments were run with 32 distributed workers for a total of 512 threads.
\begin{table*}[t]
\begin{center}
\def\arraystretch{1.1}
\footnotesize
\caption{\sysname performance for computing k-connectivity on a variety of real-world and synthetic datasets. N/A entries indicate the experiment was not run because the sketch was larger than available RAM.}
\label{tab:kconnect_full}
\begin{tabular}{ |c|cccc|cccc| } 
 \hline
	&  \multicolumn{4}{|c|}{Insertions per second (millions)}	&	\multicolumn{4}{|c|}{Memory Consumption (GiB)} \\
\hline

Dataset & 	k = 1	& 	k = 2	& 	k = 4	& 	k = 8	& 	k = 1	& 	k = 2	& 	k = 4	& 	k = 8 \\
 \hline
ca-citeseer		& 17.29	& 	8.99		& 4.71	& 	2.37		& 14.51	& 	24.99		& 42.87	& 	79.57 \\
 \hline
{google-plus}	& 	108.6		& 65.4	& 	37.8		& 2.06	& 	9.68		& 13.62	& 	21.99		& 34.80 \\
 \hline
{p2p-gnutella}	& 	14.79	& 	8.13	& 	4.79	& 	2.48	& 	7.22	& 	9.16	& 	13.26	& 	19.44 \\
 \hline
{rec-amazon}	& 	13.55	& 	7.64	& 	4.48	& 	2.28	& 	8.68	& 	12.59	& 	19.51	& 	30.11 \\
 \hline
{web-uk}	& 	91.5	& 	54.7	& 	2.87	& 	14.95	& 	11.17	& 	16.38	& 	28.62	& 	48.58 \\
 \hline
{kron13}	& 	177.7	& 	113.3	& 	46.9	& 	23.67	& 	5.75	& 	5.94	& 	6.68	& 	7.49 \\
 \hline
{kron15}	& 	314.3	& 	197.7	& 	102.6	& 	50.86	& 	7.72	& 	10.02	& 	15.46	& 	20.73 \\
 \hline
{kron16}	& 	329.7	& 	207.3	& 	105.2	& 	52.67	& 	10.08	& 	14.43	& 	25.16	& 	41.32 \\
 \hline
{kron17}	& 	338.5	& 	200		& 101.5	& 	50.76	& 	15.25	& 	24.40	& 	46.49	& 	83.39 \\
 \hline
{erdos18}	& 	309.7	& 	189	94.7	& 	94.7 & N/A &	26.04	&  	45.90	& 	88.90	& N/A \\	
 \hline
{erdos19}	& 	234.1	& 	170.9	& N/A & N/A & 			50.98	& 	93.20	& N/A & N/A \\		
 \hline
{erdos20}	& 	245.8	& N/A & N/A & N/A &				105.93		& N/A & N/A & N/A \\
 \hline
 \hline
 &	\multicolumn{4}{|c|}{Query Latency (seconds)}		&	\multicolumn{4}{|c|}{Network Communication (GiB)} \\
\hline
Dataset  & 	k = 1	& 	k = 2	& 	k = 4	& 	k = 8	& 	k = 1	& 	k = 2	& 	k = 4	& 	k = 8 \\
  \hline
ca-citeseer	& 	2.447		& 7.55		& 20.25	& 38.83	& 0.11	& 	0.207	& 	0.408		& 0.808 \\
 \hline
{google-plus} & 1.03		& 3.664		& 11.54	& 38.51	& 	4.91	& 	7.798	& 	13.34	& 	24.74 \\
 \hline
{p2p-gnutella} & 	0.42	& 	1.301	& 	3.29	& 	8.548 & 	0	& 	0	& 	0	& 	0 \\
 \hline
{rec-amazon} & 	0.87	& 	2.323	& 	4.296	& 	8.22 & 	0	& 	0	& 	0	& 	0 \\
 \hline
{web-uk} & 	1.13	& 	3.715	& 	5.682	& 	10.91	& 	4.8	& 	8.322	& 	15.41 & 	29.60 \\
 \hline
{kron13}	& 	0.06	& 	0.219	& 	0.7647	& 	2.866	& 	1.72	& 	1.93	& 	2.061	& 	2.287 \\
 \hline
{kron15} & 	0.284	& 	1.076	& 	3.372	& 	13.95	& 	26.8	& 	29.5	& 	30.22	& 	31.68 \\
 \hline
{kron16} & 	0.581	& 	1.95208	& 	7.02079	& 	25.6338	& 	106.6	& 	116.864	& 	118.562	& 	121.887 \\
 \hline
{kron17} & 	1.27	& 	5.02412	& 	16.1608	& 	65.5716	& 	425.2	& 	464.981	& 	468.886	& 	476.598 \\
 \hline
{erdos18}  &		3.15	 & 	8.71687	& 	41.5415	& N/A & 		545	& 	597.472	& 	605.368 & N/A \\	
 \hline
{erdos19} &			6.97	& 	18.9148	& N/A & N/A &			551.4	& 	607.217 & N/A & N/A  \\		
 \hline
{erdos20} &			14.1	& N/A & N/A & N/A &				1377	& N/A & N/A & N/A 			\\
 \hline
 
\end{tabular}
\end{center}
\end{table*}

\subsection{Comparison to \graphzep}
\label{subsec:gzcomp}

\begin{figure}
    \centering
    \includegraphics[width=.45\textwidth]{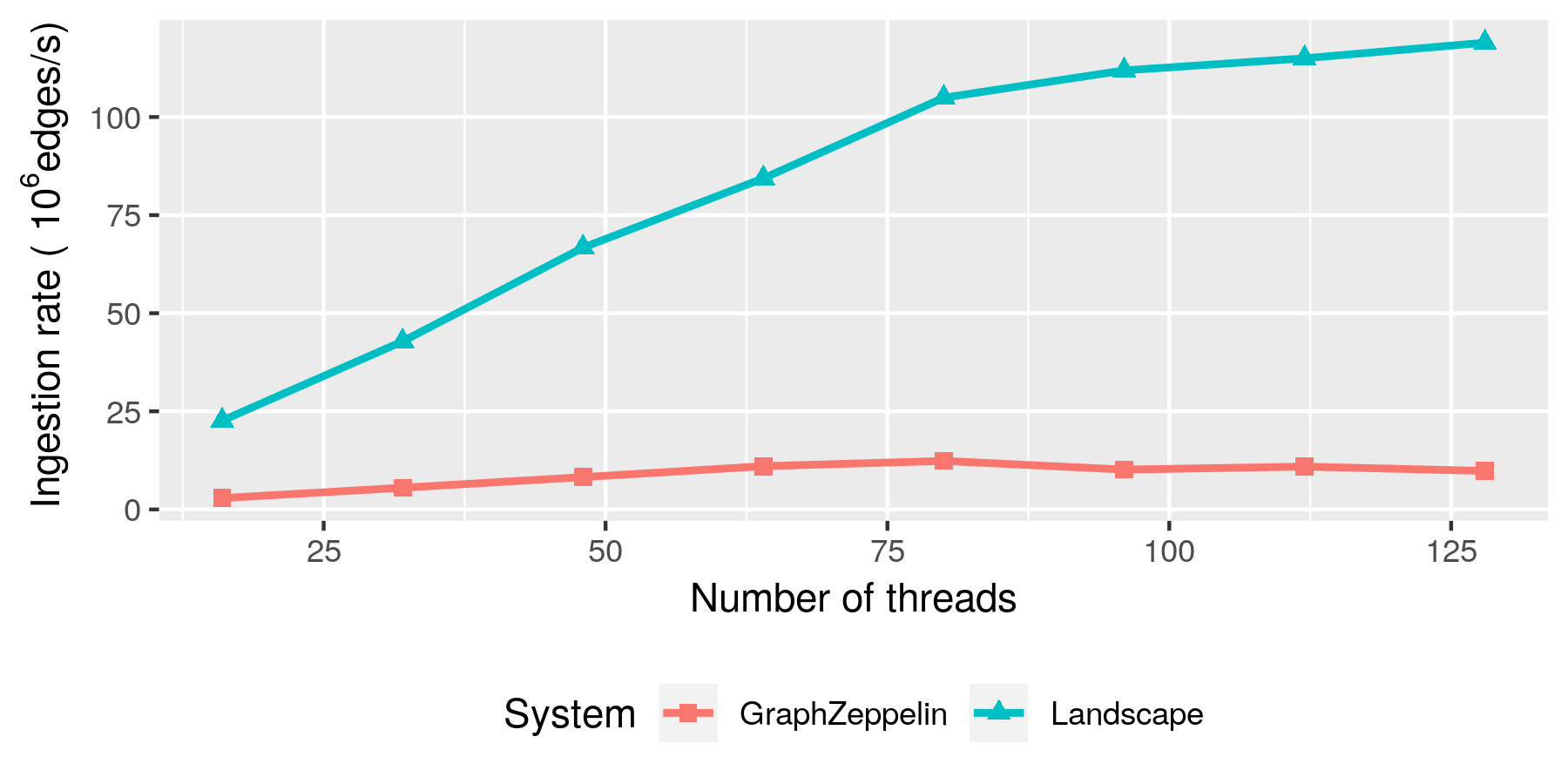}
    \caption{Single-machine scaling performance of \graphzep and \sysname. \graphzep's maximum ingestion rate is less than 15 million. \sysname avoids this bottleneck due to \sketchname and the \treename.}
    \label{fig:scaling_gz}
\end{figure}

For completeness, we compare \sysname's ingestion rate and query latency against GraphZeppelin because it uses sketching and is opimized for dense graphs. The following experiments were run on a single AWS c6a.48xlarge instance (with {192} cores) for direct comparison since GraphZeppelin is a single-machine system.

We evaluated both systems' ingestion rates on the kron17 stream, varying the number of threads. Figure \ref{fig:scaling_gz} summarizes the results. We see that even at a small number of threads \sysname is almost an order of magnitude faster than \graphzep; this is a consequence of the reduced update time of \sketchname. Further, \graphzep fails to scale beyond 80 threads, with a maximum ingestion rate of less that 15 million updates/sec. To explain this, we also compared the throughput of each systems' buffering systems in isolation {on our cluster's main node}. Since \graphzep's in-RAM buffering data structure is not designed to minimize L2 cache misses, its performance is nearly two orders of magnitude slower than sequential RAM bandwidth (and therefore much slower than \sysname).
In contrast, the throughput of \sysname's \treename increases to over 500 million updates/sec. 

We also repeated our query latency experiment on \graphzep, which took roughly 1 second to compute each global connectivity or batched reachability query. While this is slightly lower latency on the first query in a burst than \sysname, \sysname's \dsuname heuristic results in four orders of magnitude lower latency on subsequent queries of both types.

We conclude that \sysname achieves greater performance than \graphzep, because of \sketchname, the \treename, and \dsuname.

\section{Related Work.}
\label{sec:related}
\paragraph{Single-Machine Streaming Graph Systems.}
Many existing single-machine graph stream processing systems are optimized for the \defn{batch-parallel} model where stream updates arrive in large batches consisting entirely of insertions or entirely of deletions. 
Some batch-parallel systems answer queries synchronously, meaning they stop ingesting stream updates while computing queries~\cite{terrace,busato2018hornet,ediger2012stinger,murray2016incremental,sengupta2016graphin,sengupta2017evograph}. Others periodically take ``snapshots'' of the graph during ingestion that enable asynchronous query evaluation~\cite{aspen,cheng2012kineograph,iyer2015celliq,iyer2016time,macko2015llama}.
Some systems are designed to process only streams of edge insertions~\cite{jetstream,graphbolt}.

\paragraph{Distributed Streaming Graph Systems.}
Kineograph~\cite{kineograph} is a general-purpose framework for incremental graph algorithms which uses snapshots for efficient incremental computation. GraphTau~\cite{graphtau} uses a graph snapshot method implemented on top of Apache Spark's RDDs for graph analytics in the sliding window model. Kickstarter~\cite{kickstarter} is a runtime technique for incremental monotonic graph algorithms which trims approximate vertex values when edges are deleted for efficient result updating. It is implemented in the distributed graph processing system ASPIRE\cite{aspire}. While Kickstarter is a potentially interesting point of comparison, its source code is not publically available.
CellIQ~\cite{iyer2015celliq} is a distributed graph system optimized for solving the connected components problem on cellular network graphs in the sliding window model. 
Tegra~\cite{iyer2021tegra} is designed to compute a variety of time-window graph analytics on dynamic graphs. 
DiStinger~\cite{distinger} is a distributed extension of Stinger~\cite{ediger2012stinger} which is optimized for solving PageRank on sparse graph streams.

\section{Analysis of \sketchname's Space Usage}
\label{app:cameo-proofs}
This section proves Theorem~\ref{thm:cameo_cols}. Our analysis discusses a single sketch column and then generalizes to a sketch with multiple columns. Since all buckets have the same column we drop the column from our notation and refer to the column $0$, row $r$ bucket as $b_r$.

\begin{figure}[t]
    \centering
    \begin{subfigure}{0.4\linewidth}
        \centering
        \includegraphics[scale=0.5]{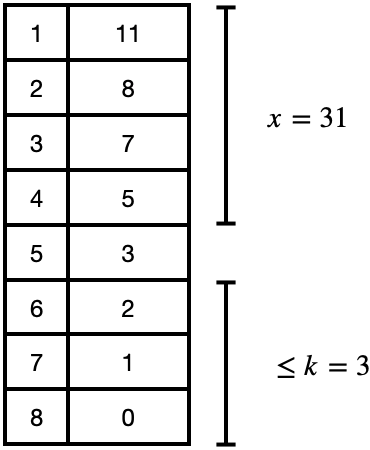}
        \caption{Maximum number of non-zeros below a 3-boundary bucket.}
    \end{subfigure}\hspace{0.5cm}
    \begin{subfigure}{0.4\linewidth}
        \centering
        \includegraphics[scale=0.5]{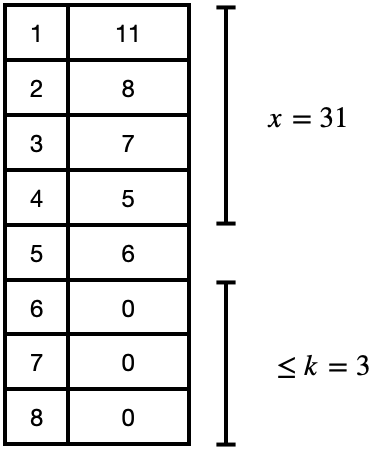}
        \caption{Minimum number of non-zeros below a 3-boundary bucket.}
    \end{subfigure}
    
    \caption{Example of a $k$-boundary bucket (where $k=3$) for a sketch column with bucket depth and the size of each bucket. In this example, bucket 6 is the boundary bucket and the total number of indices in the column is 36. The total number of vector indices with depth 1 through 4 is $x=31$. Given that $b_6$ is the boundary bucket and $x=31$ (the number of indices in $b_4$ or above), the number of vector indices with depth 5 is in range $[3, 6]$.}
    \label{fig:boundary}
\end{figure}

We construct our sketches using a \defn{$k$-wise independent} hash function. The size of a bucket $|b_i|$ is equal to the number of nonzero indices that hash to depth $i$. For \sketchname this is also equivalent to the number of indices held by the bucket. $|b_i|$ is a random variable that depends upon the hash function and the set of nonzeros. Let \defn{$Z$} be the set of nonzero indices in the input vector and let \defn{$z$} $ = |Z|$. The maximum depth (number of rows) of the column is $d = O(\log n)$.

The \defn{k-boundary bucket} (for a constant $k$) is the bucket of smallest depth for which $\leq k$ non-zero vector indices have depth greater than or equal to the boundary. That is the boundary bucket is $\min_i \sum_{j=i}^d |b_j| \leq k$.

Our analysis will assume that $z > k$ as otherwise we can directly apply the fully independent results that we will get to below.

\begin{lemma}
    \label{lem:boundary}
    If bucket $\buck_{\boundaryidx+1}$ is a $k$-boundary bucket, the sizes of the buckets $[\boundaryidx, d]$ are determined by the outcome of $k$ random variables. 
    Specifically, let $t = \sum_{j=1}^d |b_j|$ be the total number of vector indices that appear in the entire column (but not the deterministic bucket). If bucket $\buck_{\boundaryidx+1}$ is a $k$-boundary bucket and the number of indices that map to buckets above $b$ is $x = \sum_{j=1}^{\boundaryidx-1} |b_j|$, then there exists a set $F \subseteq Z$ such that $|F| = k$, $|b_\boundaryidx \cap F| + t - x = S_\boundaryidx(Z)$, and $\forall i \in [\boundaryidx+1, d]$, $|b_i \cap F| = |b_i|$. 
\end{lemma}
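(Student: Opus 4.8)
The plan is to construct the promised set $F$ explicitly --- as the set of indices that fall strictly below the boundary together with a canonical set of boundary-bucket indices used for padding --- and then exhibit each bucket size $|b_i|$, $i\in[\boundaryidx,d]$, as an arithmetic function of the depths of the elements of $F$, so that $(|b_\boundaryidx|,\dots,|b_d|)$ depends on the hash only through the $k$ evaluations $\{\mathrm{depth}(f):f\in F\}$. (Throughout, $\mathrm{depth}(i)=\log_2(\mathrm{hash}_1(i))$ is the row to which index $i$ is mapped in this column, so $|b_i|=|\{\,j\in Z:\mathrm{depth}(j)=i\,\}|$ for $i\ge 1$.)

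First I would unpack the definition of the $k$-boundary bucket. If $\buck_{\boundaryidx+1}$ is a $k$-boundary bucket then $\sum_{j=\boundaryidx+1}^{d}|b_j|\le k$ --- at most $k$ nonzero indices land at a row strictly below $b_\boundaryidx$ --- while minimality of the boundary forces $\sum_{j=\boundaryidx}^{d}|b_j|>k$, that is $t-x>k$ where $t-x=\sum_{j=\boundaryidx}^{d}|b_j|$, leaving slack to pad. Let $D=\{\,i\in Z:\mathrm{depth}(i)>\boundaryidx\,\}$, so $|D|=\sum_{j=\boundaryidx+1}^{d}|b_j|\le k$ and $|b_\boundaryidx|=(t-x)-|D|\ge k-|D|+1$. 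Define $F$ to be $D$ together with the $k-|D|$ smallest-indexed members of $b_\boundaryidx$; this is well defined by the last inequality, and $|F|=k$.

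Next I would verify the structural identities. For each $i\in[\boundaryidx+1,d]$ every index counted by $b_i$ has depth $i>\boundaryidx$, hence lies in $D\subseteq F$, so $b_i\subseteq F$ and $|b_i\cap F|=|b_i|$. For the boundary bucket itself, $b_\boundaryidx\cap F$ is exactly the padding set, so $|b_\boundaryidx\cap F|=|F|-|D|=k-|D|$, and therefore $|b_\boundaryidx|=(t-x)-|D|=(t-x)-(k-|b_\boundaryidx\cap F|)$; rearranging recovers the stated relation among $|b_\boundaryidx\cap F|$, $t-x$ and $S_\boundaryidx(Z)$ once $S_\boundaryidx(Z)$ is expanded into its bookkeeping definition. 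Since $|D|$ and each $|b_i|$ for $i>\boundaryidx$ are functions of $\{\mathrm{depth}(f):f\in F\}$, and $|b_\boundaryidx|$ is then a function of $t$, $x$ and $|D|$, the whole vector $(|b_\boundaryidx|,\dots,|b_d|)$ is a deterministic function of those $k$ random variables, which is the claim.

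The step I expect to be the main obstacle is making ``determined by $k$ random variables'' precise, because $F$ is not a fixed $k$-subset of $Z$ --- it is read off the realization through $D$. The honest reading is: \emph{on the event} that $\buck_{\boundaryidx+1}$ is the $k$-boundary bucket and that the set of indices mapped above $b_\boundaryidx$ is a prescribed set of size $x$, there is a measurable choice of $F$ with $|F|=k$ for which $(|b_\boundaryidx|,\dots,|b_d|)$ is a fixed function of $\{\mathrm{depth}(f):f\in F\}$. I would make this rigorous by first conditioning on the depths of the $x$ ``high'' indices (which pins down $x$ and $t$) and then observing that all remaining randomness relevant to buckets $b_\boundaryidx,\dots,b_d$ is carried by at most $k$ hash evaluations --- which is exactly the leverage the downstream space analysis needs, since $k$-wise independence of the hash family then lets one treat those $k$ evaluations as fully independent when lower-bounding the probability that some bucket at depth $\ge\boundaryidx$ is good.
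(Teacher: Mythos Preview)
Your proposal is correct and follows essentially the same approach as the paper: both arguments hinge on the observation that $\sum_{j=\beta+1}^d|b_j|\le k$ (by the boundary condition) while $\sum_{j=\beta}^d|b_j|>k$ (by minimality, using the ambient assumption $z>k$), so $|b_\beta|$ can only range over an interval of width $k$ and the contents of $b_{\beta+1},\dots,b_d$ together with a few designated elements of $b_\beta$ suffice to pin down everything. Your explicit construction of $F$ as $D$ plus $k-|D|$ padding elements from $b_\beta$ is in fact more concrete than the paper's argument, which simply asserts the existence of such a set after bounding $t-x-k\le|b_\beta|\le t-x$; and your closing paragraph about $F$ being realization-dependent and the need to condition is a level of care the paper does not spell out but which is exactly right for the downstream use with $k$-wise independence.
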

\begin{proof}
    By construction, bucket $\buck_\boundaryidx$ has size of at least $t-x-k$ and at most $t-x$. If this were not the case, then bucket $\buck_{\boundaryidx+1}$ could not be the boundary bucket. Assume for the sake of contradiction that this condition does not hold. Thus, $|b_\boundaryidx|$ must either by less than $t-x-k$ or greater than $t-x$. The case of $|b_\boundaryidx| > t-x$ is straightforwardly impossible as $x + |b_\boundaryidx|$ would be greater than $t$. In the case where $|b_\boundaryidx| < t - x - k$, then either $|b_\boundaryidx|$ is negative (an impossibility) if $x + k > t$ or $\sum_{j=\boundaryidx+1}^d |b_j| > k$ thus violating the assumption that $\buck_{\boundaryidx+1}$ is a $k$-boundary bucket. 

    Given that we know that $|b_\boundaryidx|$ is the sum of $z$ indicator random variables and the the value of $|b_\boundaryidx|$ can only vary by at most $k$, there must exist a set of $k$ indicator random variables whose outcome defines $|b_\boundaryidx|$. Additionally, since at most $k \leq t - x - |b_\boundaryidx| \geq 0$ non-zeros may appear in buckets $[\boundaryidx+1, d]$ by construction, there must exist a set of $k$ non-zeros whose depth determines the size of all buckets $[\boundaryidx, d]$.
\end{proof}

\begin{figure}[t]
    \centering
    \begin{subfigure}[!t]{0.4\linewidth}
        \centering
        \includegraphics[scale=0.5]{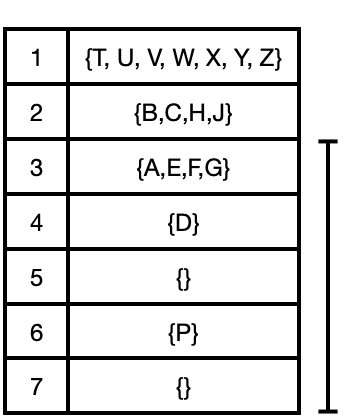}
        \caption{Original column with 3-boundary bucket $B_4$. The indicated buckets will be isolated.}
    \end{subfigure}\hspace{0.5cm}
    \begin{subfigure}[t]{0.4\linewidth}
        \centering
        \includegraphics[scale=0.5]{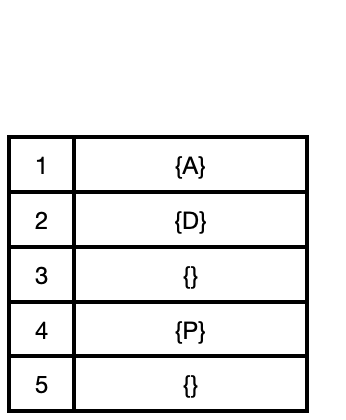}
        \caption{A $3$-isolated column. Notice that the first bucket contains only a single non-zero. However, in the original column this is a bad bucket. This is an example of why we always assume that $B_1$ in the isolated column is bad.}
    \end{subfigure}
    
    \caption{An example of creating an isolated column. Notice that the isolated column contains only three non-zeros bucket we preserve a mapping between good buckets in the isolated column and those in the original column.}
    \label{fig:isolated}
\end{figure}

We construct a \defn{$k$-isolated column}: a sketch column that contains only the set $F$ (that we defined in Lemma~\ref{lem:boundary}) of size $k$ non-zeros and the buckets in which they could appear. That is, we isolate a contiguous subset of the column's buckets and a subset of the non-zero indices. 
If the $k$-boundary bucket of a column $C$ is $\buck_{\boundaryidx+1}$, then the $k$-isolated column $\hat{C}$
contains buckets $\buck_b, \dots, \buck_d$.

Additionally, a $k$-isolated column is good if and only if a bucket of depth greater than one is good. That is, we assume bucket $\buck_1$ is always bad. An example of a column and the associated isolated column can be found in Figure~\ref{fig:isolated}.

\begin{lemma}
\label{lem:iso_correctness}
Given bucket $\boundaryidx + 1$ is the $k$-boundary bucket for the original column $C$,
then the $k$-isolated column $\hat{C}$'s bucket $\hat{\buck}_i$ 
for $i \in [2, d - \boundaryidx + 1]$ is good if and only if bucket $B_{i+\boundaryidx-1}$ is good in $C$.
\end{lemma}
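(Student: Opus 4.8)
The plan is to reduce the statement to a single combinatorial fact about bucket contents that is essentially already packaged in Lemma~\ref{lem:boundary}. For a CameoSketch column, a bucket is \emph{good} exactly when a single non-zero index hashes into it (the checksum $\gamma$ just certifies this at query time, and on a singleton bucket it matches automatically since both columns use the same $\mathrm{hash}_2$). So it suffices to prove: for every $j \in [\boundaryidx+1, d]$, the bucket $\buck_j$ of the original column $C$ and the bucket $\hat\buck_{j-\boundaryidx+1}$ of the isolated column $\hat C$ hold precisely the same set of non-zero indices. Once that is established, ``good iff good'' is immediate by counting those indices, and re-indexing via $j = i + \boundaryidx - 1$ converts the range $j \in [\boundaryidx+1,d]$ into the range $i \in [2, d-\boundaryidx+1]$ appearing in the statement.

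First I would extract from Lemma~\ref{lem:boundary} the two structural facts about the set $F$ used to build $\hat C$: (i) $|\buck_i \cap F| = |\buck_i|$ for all $i \in [\boundaryidx+1, d]$, i.e.\ $\buck_i \subseteq F$, so $F$ captures \emph{every} non-zero of depth at least $\boundaryidx+1$; and (ii) by the construction of $F$ (all non-zeros of depth $\ge \boundaryidx+1$, padded with enough non-zeros of depth exactly $\boundaryidx$ to reach size $k$, which is possible since $\boundaryidx+1$ being the $k$-boundary bucket gives $\sum_{j\ge\boundaryidx}|\buck_j| > k$) every element of $F$ has hash-depth at least $\boundaryidx$. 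Next I would invoke the definition of $\hat C$: it is obtained by inserting exactly the indices of $F$ with the \emph{same} depth hash $\mathrm{hash}_1$, relabelling buckets $\buck_\boundaryidx, \dots, \buck_d$ as $\hat\buck_1, \dots, \hat\buck_{d-\boundaryidx+1}$. By (ii) an index $f \in F$ with $\mathrm{depth}(f) \ge \boundaryidx$ is placed into $\hat\buck_{\mathrm{depth}(f)-\boundaryidx+1}$, so the non-zeros of $\hat\buck_{j-\boundaryidx+1}$ are exactly $\{f \in F : \mathrm{depth}(f) = j\}$; by (i), for $j \ge \boundaryidx+1$ this equals $\{z \in Z : \mathrm{depth}(z) = j\}$, which is precisely the non-zero content of $\buck_j$ in $C$.

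To finish, I would fix $i \in [2, d-\boundaryidx+1]$, set $j = i+\boundaryidx-1 \ge \boundaryidx+1$, and conclude: since $\hat\buck_i$ and $\buck_j$ hold the same non-zero indices, $\hat\buck_i$ contains exactly one non-zero iff $\buck_j$ does, hence $\hat\buck_i$ is good iff $\buck_j$ is good. I would also remark why $i = 1$ is (correctly) excluded: the boundary bucket $\buck_\boundaryidx$ of $C$ may hold non-zeros outside $F$ (indeed $|\buck_\boundaryidx|$ can exceed $|\buck_\boundaryidx \cap F|$), so $\hat\buck_1$ can look good while $\buck_\boundaryidx$ is bad — which is exactly why $\hat\buck_1$ is declared bad by fiat in the construction of $\hat C$.

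The only genuine obstacle is making the bucket-content correspondence airtight, and it hinges on using \emph{both} halves of Lemma~\ref{lem:boundary}: that $F$ drops no non-zero of depth $\ge \boundaryidx+1$ (else some $\buck_j$ in $\hat C$ would be missing an element it has in $C$), and that $F$ contains no index of depth $< \boundaryidx$ that would ``fall off the top'' of the isolated buckets (else the relabelling $\mathrm{depth}(f) \mapsto \mathrm{depth}(f)-\boundaryidx+1$ would be ill-defined or would misattribute content). Everything after that is the routine index shift $i \mapsto i+\boundaryidx-1$ and a one-line counting argument.
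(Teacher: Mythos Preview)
Your proposal is correct and follows the same route as the paper: both arguments establish that for $i\in[2,d-\boundaryidx+1]$ the bucket $\hat\buck_i$ has \emph{identical non-zero content} to $\buck_{i+\boundaryidx-1}$, from which the ``good iff good'' equivalence is immediate. You simply supply more justification for the identical-content claim --- explicitly unpacking Lemma~\ref{lem:boundary}(i) to get $\buck_j\subseteq F$ for $j\ge\boundaryidx+1$ and noting why the $i=1$ case is rightly excluded --- whereas the paper's proof asserts this in one sentence.
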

\begin{proof}
    With the exception of the first bucket $\buck_1$ and the deterministic bucket $\buck_0$, the isolated column has buckets of identical content to those of the original column $C$. Thus, bucket $\hat{\buck}_i$ in the isolated column has identical content to bucket $\buck_{\boundaryidx+i-1}$ for $i\in[2, d - \boundaryidx + 1]$, therefore, Bucket $\hat{\buck}_i$ is good if and only if bucket $\buck_{\boundaryidx+i-1}$ is good.
\end{proof}

We now prove that we can analyze the $k$-isolated column without needing to condition on the state of the original column. That is, we can view it as an entirely new column that only contains $k$ nonzero elements.

\begin{lemma}
\label{lem:iso_prob_dist}
Let column $C_h$ have depth $d$ and be constructed with a $k$-wise independent hash function $h$ with image $[2^d]$, and let $C_h$'s $k$-isolated column ``begin'' at an index $\boundaryidx$.
Then the probability distribution of the size of each bucket in the $k$-isolated column function is identical to that of a column $C_{h'}$ with depth $d-\boundaryidx+1$.
\end{lemma}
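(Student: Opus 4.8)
The plan is to reduce the statement to two elementary facts about how indices are assigned to rows (``depths'') in a sketch column. First, a $k$-wise independent hash function restricted to any set of at most $k$ indices is a uniformly random function on that set, so the depths of any at most $k$ indices are mutually independent. Second, the depth distribution is \emph{self-similar}: since an index lands in row $r$ with probability $2^{-r}$ (the deepest row absorbing the leftover mass $2^{-(d-1)}$), we have $\Pr[\text{depth} = r \mid \text{depth} \geq \boundaryidx] = 2^{-(r-\boundaryidx+1)}$, so the conditional law of $\text{depth} - \boundaryidx + 1$ given $\text{depth} \geq \boundaryidx$ is exactly the depth law of a column of depth $d - \boundaryidx + 1$. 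Given these two facts, it suffices to show that, conditioned on $\buck_{\boundaryidx+1}$ being the $k$-boundary bucket, the $k$ indices of the set $F$ furnished by Lemma~\ref{lem:boundary} have independent depths, each at least $\boundaryidx$; the bucket sizes of $\hat C$ are then the same deterministic function of $k$ i.i.d.\ samples from the depth-$(d-\boundaryidx+1)$ law as the bucket sizes of a fresh column $C_{h'}$ of depth $d-\boundaryidx+1$ built from the same $k$ nonzeros, which gives the claimed equality of distributions.

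To carry this out I would first quote Lemma~\ref{lem:boundary}: conditioned on $\buck_{\boundaryidx+1}$ being the boundary, the sizes of $\buck_{\boundaryidx}, \dots, \buck_d$ are determined by the depths of the $k$ indices in $F$, every index of depth $> \boundaryidx$ lies in $F$, and the remaining indices of $F$ sit at depth exactly $\boundaryidx$ --- which is precisely why the first isolated bucket $\hat{\buck}_1$ (declared bad by fiat) absorbs the ``padding'' and why the distributional identity need only be checked bucket by bucket for $\hat{\buck}_2, \dots, \hat{\buck}_{d-\boundaryidx+1}$, which is all the later analysis uses. Next I would reveal the hash values by the principle of deferred decisions, top-down, one row at a time, stopping at the row $\boundaryidx$ that defines the boundary; by the definition of the boundary only the at most $k$ indices of depth $\geq \boundaryidx+1$ are still ``in play'' at that point, and $F$ is exactly this set padded out with depth-$\boundaryidx$ indices to size $k$. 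Since from here on only these $\leq k$ indices' hashes matter, $k$-wise independence lets me treat their remaining depths as a fresh uniform sample, and the self-similarity computation identifies their conditional depth law with that of a depth-$(d-\boundaryidx+1)$ column. Combining, the bucket-size vector of $\hat C$ equals in distribution that of $C_{h'}$.

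The delicate step --- and the one I expect to be the main obstacle --- is the deferred-decisions argument, because the event ``$\buck_{\boundaryidx+1}$ is the $k$-boundary bucket'' depends on the hashes of all $z > k$ nonzeros, so $k$-wise independence cannot be invoked naively to decouple it. The point to nail down is that when the revelation is ordered by row, the stopping time $\boundaryidx$ is reached exactly when the count of in-play indices first drops to $\leq k$, so the constraint the boundary event places on the \emph{still-unrevealed} randomness is nothing more than ``each in-play index has depth $\geq \boundaryidx$'' (the inequalities ``$> k$ at row $\boundaryidx$'' and ``$\leq k$ at row $\boundaryidx+1$'' having already been settled by the stopping rule). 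With no residual cross-index conditioning on an in-play set of size $\leq k$, its joint conditional hash law is uniform on the corresponding sub-cube, hence a product, and the two elementary facts then finish the proof. A secondary bookkeeping point is to keep the ``$\hat{\buck}_1$ is always bad'' convention consistent on both sides of the identity.
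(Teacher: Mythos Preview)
Your route differs from the paper's. The paper gives a short algebraic argument: for a fixed $\boundaryidx$ it defines the induced hash $\hat h(i)=\lfloor h(i)/2^{\boundaryidx-1}\rfloor$, checks that $\hat h$ is uniform on $[2^{d-\boundaryidx+1}]$, and observes that $\hat h$ inherits $k$-wise independence from $h$ because a deterministic function of $k$-wise independent random variables is $k$-wise independent. Hence the bucket sizes produced by $\hat h$ have the same distribution as those produced by a fresh $k$-wise independent $h'$ with range $[2^{d-\boundaryidx+1}]$. Your ``self-similarity'' calculation is exactly the marginal part of this, but the paper packages it so that the joint $k$-wise statement comes for free from the functional definition of $\hat h$, with no stopping-time machinery. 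What your approach buys is an honest confrontation with the conditioning; what the paper's buys is brevity.

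The concern you flag is real, and your proposed resolution does not close it. In your deferred-decisions argument you claim that once the row-by-row revelation reaches the stopping row $\boundaryidx$, the only residual constraint on the at-most-$k$ in-play indices is ``depth $\geq\boundaryidx$,'' so their joint conditional law is product-uniform on the remaining sub-cube. That inference is sound under full independence but not under $k$-wise independence alone: by the time you stop, you have already revealed, for every one of the $z>k$ indices, whether its depth was $1,2,\dots,\boundaryidx$. Conditioning on an event that involves all $z$ coordinates can, with only $k$-wise independence, distort the joint law of any $k$ of them beyond a simple sub-cube restriction; the fact that fewer than $k$ coordinates are still ``unrevealed'' does not help, because the revealed coordinates were correlated with them. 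The paper sidesteps your stopping-time framing by asserting $k$-wise independence of $\hat h$ directly (for each fixed $\boundaryidx$), which at least gives the right joint law on any \emph{fixed} size-$k$ subset; you would do well to adopt that device rather than the row-by-row revelation.
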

\begin{proof}
    Let bucket $\buck_{\boundaryidx+1}$ be the $k$-boundary bucket for the original column $C_h$ (with $h: [n] \to [2^d]$).
    The $k$-isolated column's 
    hash function can thus defined as 
    $$\hat{h}(i) = \lfloor h(i) / 2^{\boundaryidx-1} \rfloor$$
    Note that, by the definition of $k$-wise independent hashing, for all $y \in [2^d]$,
    and for all $x \in [n]$
    $$\Pr[h(x) = y ] = \frac{1}{2^d}.$$
    It follows directly from this that 
    \begin{align*}
        \Pr [\hat{h}(x) = y ] &= \Pr[\lfloor h(x) / 2^{\boundaryidx-1} \rfloor = y] \\
        &=  \sum_{i = 0}^{2^{b-1} - 1} \Pr[h(x) = y \cdot 2^{\boundaryidx-1} + i] \\
        &=  \sum_{i = 0}^{2^{b-1} - 1} \frac{1}{2^d} = \frac{1}{2^{d-\boundaryidx+1}}\text{,}
    \end{align*}
    for all $x$ and $y$. 
    Therefore, the distribution of a single hash call is uniform. Likewise, note that $\hat{h}$ is still $k$-wise
    independent since functions of $k$-wise independent variables are also $k$-wise 
    independent. Thus, $\hat{h}$ satisfies the desired properties of a
    $k$-wise independent hash function, defined over $[n] \to [2^{d-\boundaryidx+1}]$. 

    Therefore, the sizes of the buckets constructed using $\hat{h}$ have an equivalent distribution to buckets constructed from hash function $h' : [n] \to [2^{d-\boundaryidx+1}]$ that is drawn independently (from the choice of $h$)
    from a family of $k$-wise independent hash functions.
    The probability distributions of these hash functions on the input to the $k$-isolated column are identical, that is, $\forall i \in [1, d-\boundaryidx+1]$, $P[\log(\hat h(x)) = i] = P[\log(h'(x)) = i] = P[\log(h(x)) = i + \boundaryidx - 1 \mid \log(h(x)) \geq \boundaryidx]$. This result follows from the above
    statements about the distribution of $\hat{h}$.
    This statement holds for all inputs to the $k$-isolated column as by definition their depth must be at least $\boundaryidx$ in the original column.
\end{proof}

\paragraph{Fully Independent Hash Functions}
The success probability of \sketchname constructed with a fully independent hash function with $z > 1$ nonzeros (if $1$ nonzero then the column succeeds deterministically) and column depth $d > 0$ is given by the function $F(z, d)$. Where $F$ is defined by the following recurrence:
\[F(a, b) =
\displaystyle\sum_{i \in [0, a] \setminus \{1\}} \left(2^{-a}\binom{a}{i} F(a - i, b - 1)\right) + a 2^{-a}
\]
and if $a \leq 0$ or $b \leq 0$ then the probability of success is $0$.
For this recurrence, we look at the bucket at depth $1$, and consider every possible number of items that can
go in that first bucket and the probability of that scenario occurring. For each of these we calculate the likelihood that a bucket
below is correct recursively in the case that the first bucket does not have exactly one element.

If we exclude the first bucket being good from our definition of
success (which is required for our analysis of a $k$-isolated column), then the probability that the column succeeds is $\hat{F}(z, d) = F(z, d) - z 2^{-z} \cdot (1 - F(z - 1, d - 1))$. Specifically, we are accounting for the possibility that the first bucket is good: $z 2^{-z}$, and the rest of the buckets are bad: $1 - F(z-1, d-1)$.

For small values of $z$, we can solve this recurrence manually. Furthermore,
note that though our hash functions are taken to have limited $k$-wise 
independence, as long as $z$ is less than our choice of $k$, then analysis 
with these recurrences still holds.

Additionally, $F$ is non-decreasing with respect to $d$, that
is for any $z$, $F(z, d_1) \geq F(z, d_2)$ as long as $d_1 \geq d_2$.
Since $\forall z, d$,  $F(z, d) \geq 0$, it must be the case that adding additional subproblems (by increasing $d$) without changing the existing subproblems can only maintain or increase the likelihood of success.

\paragraph{Extending to k-wise independence}
Now we prove our theorems that analyze sketches constructed with $k$-wise independent hash functions. Our key strategy is to use $k$-isolated columns and the associated lemmas to reduce to a column on only $k$ non-zero elements. Thus, we can perform fully independent analysis on this subproblem to solve for the probability the entire column succeeds.

\begin{table}[t]
    \centering
    \begin{tabular}{r|r}
       Non-zero Indices  &  \sketchname \\
       \hline
       $1$  &  $1$\\
       $2$  &  $0.666$ \\
       $3$  &  $0.856$\\
       $4$  &  $0.799$\\
       $5$  &  $0.813$\\
       $6$  &  $0.810$\\
       $7$  &  $0.810$
    \end{tabular}
    \caption{Lower bound on the probability that a \sketchname column (with 10 buckets) succeeds given a small number of non-zero indices, and with full independence.}
    \label{tab:cameo_exact_pr}
\end{table}

\begin{lemma}
\label{lem:3-wise-sketch}
With $k \geq 3$-wise independent hash functions \sketchname sketch columns have a success probability of at least $0.66$.
\end{lemma}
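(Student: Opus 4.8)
The plan is to condition on $z$, the number of nonzero entries in the column's input vector, and to reduce every regime to a fully‑independent column, whose success probability is controlled by the recurrence $F$ (and its first‑bucket‑discounted variant $\hat F(z,d) = F(z,d) - z 2^{-z}$).

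First the easy cases. If $z = 1$, the deterministic bucket $b_0$ already isolates the unique nonzero, so success is certain. If $2 \le z \le 3$, then at most $z \le k$ distinct indices are ever hashed, so $k$‑wise independence makes their depths jointly distributed exactly as under full independence, and the column succeeds with probability $F(z, d)$. The recurrence exhibits $F(\cdot, d)$ as non‑decreasing in $d$, with $F(2, d) \uparrow 2/3$ and $F(3, d) \uparrow \approx 0.856$ (Table~\ref{tab:cameo_exact_pr}), so the minimum over $z \in \{1, 2, 3\}$ is $F(2, d)$, which already exceeds $0.66$ once $d$ is at least a small constant --- the regime of every nontrivial sketch. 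This is exactly where the constant $0.66$ comes from.

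The substantive case is $z > 3$. Here I would invoke the boundary/isolation machinery: let $b_{\beta+1}$ be the $3$‑boundary bucket, so at most $3$ nonzeros reach depth $\ge \beta+1$, and let $\hat C$ be the associated $3$‑isolated column, carrying the $3$ nonzeros of the set $F$ of Lemma~\ref{lem:boundary} over buckets $b_\beta, \dots, b_d$. By Lemma~\ref{lem:iso_correctness}, a good bucket of depth $\ge 2$ in $\hat C$ is a good bucket of $C$, hence $\Pr[C \text{ succeeds}] \ge \Pr[\hat C \text{ succeeds}]$; and by Lemma~\ref{lem:iso_prob_dist}, the bucket sizes of $\hat C$ have exactly the distribution of a fresh depth‑$d'$ column on $3$ nonzeros, where $d' = d - \beta + 1$, and since only $3 \le k$ indices are involved, $k$‑wise independence again coincides with full independence. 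Unwinding the recurrence for this $3$‑nonzero column, $\Pr[\hat C \text{ succeeds}] = \tfrac18 F(3, d'-1) + \tfrac38 F(2, d'-1) + \tfrac38 F(1, d'-1)$, which tends to $41/56 > 0.66$ as $d' \to \infty$ and, by monotonicity of $F$ in its depth argument, stays above $0.66$ for all $d'$ bounded below by a small constant.

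The main obstacle is the \emph{deep‑boundary} case, where $\beta$ is close to $d$ so that $d'$ is tiny (possibly as small as $2$) and the isolated column alone no longer reaches $0.66$. The resolution must additionally exploit the buckets of $C$ lying \emph{above} the boundary: a deep boundary forces $z$ to be large, so there are many buckets $b_1, \dots, b_{\beta-1}$, and conditioning on $\beta$ and on the identity of the nonzeros reaching depth $\ge \beta$ makes the above‑boundary hashing and the isolated column depend on disjoint index sets, so the failure event factors. One then argues that across the two regimes --- $d'$ large, where the isolated column already suffices, versus $d'$ small, where the many crowded‑but‑not‑quite‑saturated buckets above the boundary (in particular whichever bucket carries the single‑element ``sweet spot'' near depth $\log_2 z$) supply the missing mass --- the product of failure probabilities never exceeds $0.34$, closing the argument by direct evaluation of $F$ on the finitely many residual small $(z, d')$ cases. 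Making this conditioning rigorous under mere $3$‑wise independence and identifying the exact worst case is the delicate point; the remainder is a mechanical appeal to Lemmas~\ref{lem:boundary}--\ref{lem:iso_prob_dist} and monotonicity of $F$ in $d$.
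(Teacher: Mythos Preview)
Your treatment of $z\le 3$ and the reduction via the $3$-isolated column match the paper, and your limit $\hat F(3,d')\to 41/56$ is correct (the paper uses the cruder bound $>0.69$ for $d'\ge 5$). The divergence is entirely in the deep-boundary case, and there your argument has a real gap.

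You propose to factor the failure event across the isolated column and the above-boundary buckets on the grounds that they involve disjoint index sets. Under mere $3$-wise independence this does not follow: disjointness of index sets yields independence only under full independence, and any statement about the above-boundary buckets (e.g.\ that the bucket near depth $\log_2 z$ is good with decent probability) requires controlling the joint distribution of the $z-3$ remaining hash values, which $3$-wise independence simply does not give you. That ``sweet-spot'' heuristic is exactly what the boundary/isolation machinery was introduced to avoid proving directly; you cannot re-invoke it here. You flag this as ``the delicate point,'' but it is in fact an obstruction, not a detail.

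The paper's resolution is both simpler and different in kind: it never touches the above-boundary buckets. It fixes the column depth to $d=\log n+5$ (padding $n$ up to $2^{10}$ if necessary) and shows that the deep-boundary event is itself rare. If $\beta>\log n+1$ then at least three indices must reach depth $\ge\log n+2$; a union bound over triples, each analysed with $3$-wise independence, gives
\[
\Pr[\beta>\log n+1]\ \le\ \binom{z}{3}\,2^{-3(\log n+2)}\ \le\ 0.02
\]
since $z\le n$. On the complementary event the isolated column has depth $d-\beta+1\ge 5$, so by your own calculation it succeeds with probability at least $0.69$, and $0.69\cdot 0.98>2/3$. The missing idea is thus a one-line tail bound on $\beta$, not a two-factor decomposition across the boundary.
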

\begin{proof}
    If the number of non-zeros is at most 3, we can solve for the probability of success assuming full independence. Table~\ref{tab:cameo_exact_pr} gives the likelihood of success for 1, 2 and 3 non-zeros which are all greater than $0.66$.

    Otherwise, we analyze a $3$-isolated column to lower bound the success probability of a sketch column $C_h$ constructed from a $3$-wise independent hash function $h$ when $z > 3$. 
    We will assume $n \geq 2^{10}$ (and artificially increase the size of the vector and thus sketch if it is not). Additionally, let the depth of the column be $d = \log n + 5$.
    
    By Lemma~\ref{lem:boundary} and Lemma~\ref{lem:iso_correctness}, we can analyze $3$ non-zeros to lower bound the success probability of a sketch on $z > 3$ non-zeros. Then, by Lemma~\ref{lem:iso_prob_dist} we can perform this analysis assuming these $3$ non-zeros exist within an isolated column of depth $d-b+1$. This $3$-isolated column has a depth function with probability distribution equivalent to that of a regular column with depth $d-\boundaryidx+1$.

    Solving $\hat{F}$ for $z = 3$ (i.e., with the additional assumption that the first bucket is always bad) 
    gives a success probability of greater than $0.69$ for \sketchname if $d \geq 5$.
    
    We also need to account for the probability that the $k$-boundary bucket is too far down
    in the sketch column,
    that is $d-\boundaryidx+1$ is too small to give a high likelihood of success. 
    To do so, we assume that if $\boundaryidx$ is of too high depth, then the probability of success is zero. If $\boundaryidx$ is of depth less than or equal to $\log n + 1$, then the analysis holds.
    
    The probability that the boundary is greater than this cutoff is $1-\binom{z}{3} 2^{-3(\log n + 2)}$. Recall that $n \geq 2^{10}$, at this vector size the likelihood that the boundary occurs at the depth $\log n + 2$ bucket or less is $0.98$ and so the overall probability of success is $0.69 \cdot 0.98 > 2/3$.

    Therefore, \sketchname column on a vector of length $n$ with $\log n + 5$ buckets (to assure that $d-\boundaryidx+1$ is at least $5$) has a success probability of at least $2/3$.
\end{proof}

This improved probability of success per column means that our sketches can be much more compact while retaining the probability guarantees we require. Thus, we can finally prove Theorem~\ref{thm:cameo_cols}.



\restatecameospacethm
\begin{proof}
    With $3$-wise independence each \sketchname column succeeds with probability at least $0.66$ by Lemma~\ref{lem:3-wise-sketch}. Thus, the probability that a single column fails is $0.33$. So to achieve a failure probability of $\delta$ requires $\log_{1/0.33} (1 / \delta) \leq \log_{3} (1 / \delta)$ columns.
\end{proof}

We do not know if \sketchname samples nonzero elements uniformly. For the purposes of this paper it makes no difference as these connectivity algorithms require only some element from the set. They do not require a uniformly random element. 

We can make the following simple modification to \sketchname to ensure it samples uniformly. We only count a bucket as good if it contains one nonzero and is the deepest nonempty bucket. Thus, we simulate \cubesketch's query procedure but retain the improved update time. Our improved analysis of the space constants also holds but requires more detail that we exclude here.

\clearpage
\end{document}